\documentclass{article}
\usepackage{amssymb,mathtools,import,geometry,graphicx,amsthm,centernot}
\usepackage[colorlinks=false]{hyperref}
      \usepackage{setspace}

\newcommand{\N}{\ensuremath{\mathbb{N}} }

\newcommand{\fbins}{\ensuremath{\{0,1\}^*}}
\newcommand{\infbins}{\ensuremath{\{0,1\}^{\omega}}}
\newcommand{\ILUPDC}{\ensuremath{\mathrm{ILUPDC}}}
\newcommand{\bin}{\ensuremath{\{0,1\}}}
\newcommand{\finfbins}{\ensuremath{\bin^{\leq \omega}}}
\newcommand{\rev}[1]{\ensuremath{#1^{-1}}}
\newcommand{\thh}{\ensuremath{\textrm{th}}}
\newcommand{\ILFST}{\ensuremath{\mathrm{ILFST}}}
\newcommand{\binf}{\ensuremath{\mathrm{bin}}}
\newcommand{\stringf}{\ensuremath{\mathrm{string}}}
\renewcommand{\notin}{\ensuremath{\centernot\in}}
\newcommand{\FSTsize}[1]{\ensuremath{\mathrm{FST}^{\leq #1}}}
\newcommand{\dimFS}{\ensuremath{\mathrm{dim}_{\mathrm{FS}}}}
\newcommand{\DimFS}{\ensuremath{\mathrm{Dim}_{\mathrm{FS}}}}
\newcommand{\FScomp}[1]{\ensuremath{D_{\mathrm{FS}}^{#1}}}

\newcommand{\ILPDC}{\ensuremath{\mathrm{ILPDC}}}
\newcommand{\LZ}{\ensuremath{\mathrm{LZ}}}
\theoremstyle{plain}
\newtheorem{theorem}{Theorem}[section]
\newtheorem{corollary}[theorem]{Corollary}
\newtheorem{lemma}[theorem]{Lemma}
\theoremstyle{definition}
\newtheorem{definition}[theorem]{Definition}

\newtheorem{remark}[theorem]{Remark}

\onehalfspacing   
\begin{document}

\title{Pushdown and Lempel-Ziv Depth}

\author{Liam Jordon\thanks{Supported by a postgraduate scholarship from the Irish Research Council.}\\
\textrm{liam.jordon@mu.ie} \\
\and 
Philippe Moser \\
\textrm{pmoser@cs.nuim.ie}}

\date{%
    Dept. of Computer Science, Maynooth University, Maynooth, Co. Kildare, Ireland\\[2ex]%
}


\maketitle
    \begin{abstract}
        This paper expands upon existing and introduces new formulations of Bennett's logical depth. In previously published work by Jordon and Moser, notions of finite-state depth and pushdown depth were examined and compared. These were based on finite-state transducers and information lossless pushdown compressors respectively. Unfortunately a full separation between the two notions was not established. This paper introduces a new formulation of pushdown depth based unary-stack pushdown compressors. This improved formulation allows us to do a full comparison by demonstrating the existence of sequences with high finite-state depth and low pushdown depth, and vice-versa. A new notion based on the Lempel-Ziv 78 algorithm is also introduced. Its difference from finite-state depth is shown by demonstrating the existence of a Lempel-Ziv deep sequence that is not finite-state deep and vice versa. Lempel-Ziv depth's difference from pushdown depth is shown by building sequences that have a pushdown depth level of roughly $1/2$ but low Lempel-Ziv depth, and a sequence with high Lempel-Ziv depth but low pushdown depth. Properties of all three notions are also discussed and proved.
    \end{abstract}

  \section{Introduction}

In 1988 Charles Bennett introduced a new method to measure the \textit{useful} information contained in a piece of data \cite{b:bennett88}. This measurement tool is called \emph{logical depth}. Logical depth helps to formalise the difference between complex and non-complex structures. Intuitively, deep structures can be thought of as structures that contain patterns which are incredibly difficult to find. Given more and more time and resources, an algorithm could spot these patterns and exploit them (such as to compress a sequence). Non-deep structures are sometimes referred to as being shallow. Random structures are not considered deep as they contain no patterns. Simple structures are not considered deep as while they contain patterns, they are too easy to spot.

Bennett's original notion is based on Kolmogorov complexity \cite{b:bennett88,DBLP:journals/tcs/JuedesLL94}, and interacts nicely with fundamental notions of computability theory \cite{DBLP:journals/tcs/DowneyMN17,DBLP:journals/dmtcs/MoserS17}. Due to the uncomputabilty of Kolmogorov complexity, several researchers have attempted to adapt Bennett's notion to lower complexity levels. While variations have been based on computable notions \cite{DBLP:journals/iandc/LathropL99}, more feasible notions based on polynomial time computations \cite{b.antunes.depth.journal,DBLP:journals/tcs/Moser13,DBLP:journals/iandc/Moser20} have been studied, including both finite-state transducers and lossless pushdown compressors \cite{DotyM07,DBLP:conf/sofsem/JordonM20}. Similarly to randomness, there is no absolute notion of logical depth. Although logical depth was originally defined as depending on both computational and minimal descriptional complexity, the notions in this paper are focused purely on a minimal descriptional length based complexity, specifically the ratio between the length of the input and the length of the output to restricted classes of transducers and compression algorithms. We use the term depth to describe our notion as all variants mentioned above can be seen as variations of a same theme based on the compression framework \cite{DBLP:journals/tcs/Moser13}.        

While this switching from the original Kolmogorov complexity based definition of logical depth to more feasible notions does lead to a trade-off between some properties of the original notion, most notions satisfy some basic properties that could be seen as fundamental. These are:                               
\begin{itemize}                                                               
\item Random sequences are not deep (for the appropriate randomness notion).                                                   
\item Computable sequences are not deep (for the appropriate computability notion).                                            
\item A slow growth law: deep sequences cannot be \emph{quickly} computed from shallow ones.                                   
\item Deep sequences exist.                                                                                        
\end{itemize}

In this paper we continue the study of depth via classes of automaton and compression algorithms. For two families of compression algorithms $T$ and $T'$, we say a sequence is $(T,T')$-deep if for every compressor $C$ of type $T$, there exists a compressor $C'$ of type $T'$  such that on almost every prefix of $S$ (with length denoted $n$), $C'$ compresses it at least by $\alpha n$ more bits than $C$, for some constant $\alpha$. We refer to $\alpha$ as the $(T,T')$-depth level of $S$. We drop the $(T,T')$ notation and refer to just $T$ or $T'$ depending on the context when referring to the depth notions discussed in this paper.

Doty and Moser first presented an infinitely often notion of depth based on finite-state transducers in \cite{DotyM07} based on the minimal length of an input to a finite-state transducer that results in the desired output. Further study of finite-state minimal descriptional length can be found in \cite{calude2011finite,calude2016finite}. This led to Jordon and Moser introducing a notion based on lossless pushdown compressors in \cite{DBLP:conf/sofsem/JordonM20} where it was shown that there existed a finite-state deep sequence which was not pushdown deep. The contrary result was not established, i.e. the existence of a pushdown deep sequence that is not finite-state deep. In this paper we present a new notion of pushdown depth which provides a much clearer separation from finite-state depth by allowing us to prove the existence of sequences which are pushdown deep and if they are finite-state deep, have a very low level of depth, and vice versa. This notion of pushdown depth is based on the output of information lossless pushdown compressors (ILPDCs) for a given input. The model of pushdown compressors used is found in \cite{DBLP:journals/mst/MayordomoMP11}. Specifically we examine the difference in compression of ordinary ILPDCs, which have the ability to push two symbols onto their stacks, against ILPDCs which can only push one symbol onto their stacks. These ILPDCs are referred to as information lossless unary-stack pushdown compressors (ILUPDCs).

We also introduce a new notion called Lempel-Ziv depth (LZ-depth) based on the Lempel-Ziv 78 (LZ) compression algorithm introduced in \cite{DBLP:LZ78}. LZ-depth examines the output of a lossless finite-state transducer against the output of the LZ algorithm on a given input. 

For both the pushdown and Lempel-Ziv depth notions, we demonstrate that each notion a subset of some of the fundamental depth properties, i.e. both easy and random sequences based on the setting are not deep, and that a slow growth law holds. 

When comparing the three notions we examine the depth level of various sequences. To compare pushdown depth with finite-state depth, we first show the existence of an i.o. finite-state deep sequence which is not PD-deep. We then show the existence of a PD-deep sequence with depth level of roughly $1/2$ such that, if it is finite-state deep, it has a low finite-state depth level. To compare finite-state depth and LZ-depth we show that there exists a normal sequence that is LZ-deep. Since no normal sequence is finite-state deep \cite{DotyM07}, this demonstrates a difference. We also build a sequence which is finite-state deep and infinitely often LZ-deep, but not almost everywhere LZ-deep. When comparing pushdown depth with LZ-depth, we show that there exists a sequence that is not PD-deep but is LZ-deep. We then build a sequence which is PD-deep and has a PD-depth level of roughly $1/2$, but low LZ-depth.

\section{Preliminaries}

\subsection{Notation}
We work with a binary alphabet and all logarithms are taken in base $2$. A (finite) \textit{string} is an element of $\fbins$. $\bin^n$ denotes the set of strings of length $n$. Thus $\fbins = \bigcup_{n=1}^\infty \bin ^n$. Strings will generally be denoted by a lowercase letter. The set of (infinite) \textit{sequences} is denoted by $\infbins$. Sequences will generally be denoted by an uppercase letter. $\finfbins = \fbins \cup \infbins$ denotes the set of all strings and sequences. $|x|$ denotes the length of the string $x$. We say $|S| = \omega$ for a sequence $S \in \infbins$. We use $\lambda$ to denote the empty string (the string of length $0$). For $x \in \fbins$ and $y \in \bin^{\leq \omega}$, $xy$ (occasionally written as $x \cdot y)$ denotes the string (or sequence) of $x$ concatenated with $y.$ For $x \in \fbins$, $x^n$ denotes the string of $x$ concatenated with itself $n$ times, i.e. $x^n = \overbrace{x\cdot x \cdots x}^{n \textrm{ times}}.$ For $x \in \finfbins$ and $0\leq i < |x|$, $x[i]$ denotes the $i^{\thh}$ character of $x$ with the leftmost character being $x[0]$. For $x \in \finfbins$ and $0\leq i \leq j < |x|$, $x[i..j]$ denotes the \textit{substring} of $x$ consisting of the $i^{\thh}$ through $j^{\thh}$ bits of $x$. For $x \in \fbins$ and $y,z \in \finfbins$ such that  $z = xy,$ we call $x$ a \textit{prefix} of $z$ and $y$ a suffix of $z$. We write $x \sqsubseteq v$ to denote that $x$ is a prefix of $v$ and $x \ \sqsubset v$ is $x$ is a prefix of $v$ but $x \neq v$. For $x \in \finfbins$ we write $x \upharpoonright n$ to denote the prefix of length $n$ of $x$, i.e. $x \upharpoonright n = x[0..n-1].$ For $x \in \fbins,$ we write $\rev{x}$ to denote the reverse of $x$, that is, if $x = x_1\ldots x_m$, then $\rev{x} = x_m \ldots x_1$. For $x \in \fbins,$ we use $d(x)$ to denote the string constructed by doubling every bit of $x$. That is, if $x = x_1\ldots x_m$, $d(x) = x_1x_1 \ldots x_mx_m.$ 

 We write $K(x)$ to represent the plain Kolmogorov complexity of string $x$. That is, for a fixed universal Turing machine $U$, $$K_U(x) = \min \{ |y| : y \in \fbins, U(y) = x \}.$$ That is, $y$ is the shortest input to $U$ that results in the output of $x$.  The value $K_U(x)$ does not depend on the choice of universal machine up to an additive constant, therefore we drop the $U$ from the notation. Other authors commonly use $C$ to denote plain complexity (see \cite{downey:book,nies:book}), however we reserve $C$ to denote compressors. Note that for all $n \in \N$, there exists a string $x \in \bin^n$ such that $K(x) \geq |x|$ by a simple counting argument.
 
 We use Borel normality \cite{borelNormal} to examine the properties of some sequences. We say that a sequence $S$ is \emph{normal} if for all strings $x\in \fbins$, $x$ occurs with asymptotic frequency $2^{-|x|}$ as a substring in $S$.
 
 The following are the main two ways we examine the complexity of sequences. For a sequence $S$ and a function $C: \fbins \rightarrow \fbins$ the $C$-upper and lower compression ratio of $S$ are given by
 \begin{align*}
    \rho_{C}(S) &=  \liminf\limits_{n \to \infty} \frac{|C(S \upharpoonright n)|}{n}, \mathrm{\, and} \\
    R_{C}(S) &=  \limsup\limits_{n \to \infty} \frac{|C(S \upharpoonright n)|}{n}.
\end{align*}
For a sequence $S$ and a set $T$ of functions from $\fbins$ to  $\fbins$,  the $T$-\emph{best case} and $T$-\emph{worst case} compression ratios of $S$ are given by
 \begin{align*}
    \rho_{T}(S) &=  \inf \{ \rho_C : C \in T \}, \mathrm{ \,and}\\
    R_{T}(S) &=  \inf \{ R_C : C \in T \}.\\
    \end{align*}

    \subsection{Finite-State Transducers}
\label{FS: Sec: FST}

We use the standard finite-state transducer model.

\begin{definition}
 A \emph{finite-state transducer (FST)} is a $4$-tuple $T = (Q,q_0,\delta,\nu)$, where
\begin{itemize}
    \item $Q$ is a non-empty, finite set of \emph{states},
    \item $q_0 \in Q$ is the \emph{initial state},
    \item $\delta : Q\times \bin \rightarrow Q$ is the \emph{transition function}, 
    \item $\nu : Q\times \bin \rightarrow \fbins$ is the \emph{output function}.
\end{itemize}
\label{FS: Def: FST}
\end{definition}

For all $x \in \fbins$ and $b \in \bin$, the \emph{extended transition function} $\widehat{\delta}:q_0 \times \fbins \rightarrow Q$ is defined by the recursion $\widehat{\delta}(\lambda) = q_0$ and $\widehat{\delta}(xb) = \delta(\widehat{\delta}(x),b).$ For $x \in \fbins$, the output of $T$ on $x$ is the string $T(x)$ defined by the recursion $T(\lambda) = \lambda$, and $T(xb) = T(x)\nu(\widehat{\delta}(x),b)$. We require the class of \textit{information lossless} finite-state transducers to later demonstrate a slow growth law.

\begin{definition}
An FST $T$ is \emph{information lossless (IL)} if for all $x \in \fbins$, the function $x \mapsto (T(x), \widehat{\delta}(x))$ is injective. 
\label{FS: Def: ILFST}
\end{definition}

In other words, an FST $T$ is IL if the output and final state of $T$ on input $x$ uniquely identify $x$. We call an FST that is IL an ILFST. By the identity FST, we mean the ILFST $I_{\mathrm{FS}}$ that on every input $x \in \fbins$, $I_{\mathrm{FS}}(x) = x.$ We write (IL)FST to denote the set of all (IL)FSTs. We note that occasionally we call ILFSTs \textit{finite-state compressors} to emphasise when we view the ILFSTs as compressors as opposed to decompressors.

We require the concept of \textit{(information lossless) finite-state computable} functions to demonstrate our slow growth also.

\begin{definition}
A function $f: \infbins \rightarrow \infbins$ is said to be  \emph{(information lossless) finite-state computable ((IL)FS computable)} if there is an (IL)FST $T$ such that for all $S \in \infbins$, it holds that $\lim\limits_{n \to \infty}|T(S \upharpoonright n)| = \infty$ and for all $n \in \N$, $T(S \upharpoonright n) \sqsubseteq f(S)$.
\label{FS: Def: ILFS Computable}
\end{definition}
Based on the above definition, if $f$ is (IL)FS computable via the (IL)FST $T$, we say that $T(S) = f(S)$.
We often use the following two results \cite{Huff59a,Koha78} that demonstrate that any function computed by an ILFST can be inverted to be approximately computed by another ILFST. 

\begin{theorem}[\cite{Huff59a,Koha78}]
 For all $T \in \ILFST$, there exists $T^{-1} \in \ILFST$ and a constant $c \in \N$ such that for all $x \in \fbins$, $x \upharpoonright (|x| - c) \sqsubseteq T^{-1}(T(x)) \sqsubseteq x$.
 \label{FS: Thm: inverse fst}
\end{theorem}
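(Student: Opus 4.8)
The plan is to construct the inverse $T^{-1}$ as a finite-state machine that reads the output string $y = T(x)$ and re-emits $x$ with a bounded lag, the whole argument resting on the fact that information losslessness forces the input to be recoverable from the output with only \emph{finite delay}. The constant $c$ in the statement is exactly this delay: at every stage $T^{-1}$ will have confirmed all input symbols except those still trapped inside a decoding window of bounded width, so when $y$ is exhausted it is precisely the contents of that window, of length at most $c$, that we fail to output.

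First I would isolate the structural heart of the argument, namely a finite-delay lemma, and prove it by a pumping argument on a \emph{pair automaton} associated to $T$. Its states are triples $(q_1,q_2,z)$, where $q_1,q_2 \in Q$ are two states of $T$ being simulated in parallel from the common start $(q_0,q_0,\lambda)$, and $z \in \fbins$ is the ``pending difference'': the suffix by which the output committed along one branch currently runs ahead of the other, retained only while the two outputs remain mutually prefix-consistent. Each transition feeds one symbol to each branch, extends both outputs, and updates $z$; a branch pair is discarded the moment the two outputs become genuinely incompatible. The claim is that $|z|$ stays bounded by a constant depending only on $|Q|$ and the maximum length of $\nu$ on a single transition. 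If $|z|$ could grow without bound, then by pigeonhole on $Q \times Q$ some pair $(q_1,q_2)$ would recur across a segment in which the difference strictly lengthens; pumping that cycle produces two \emph{distinct} inputs that, once extended to resynchronise their outputs and final states, violate the injectivity of $x \mapsto (T(x), \widehat{\delta}(x))$. The same reasoning specialised to $q_1 = q_2$ forbids empty-output cycles, so output length must grow with input length and the window can never stall forever. Let $c$ be the uniform bound this yields on the delay.

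Granting the lemma, I would then define $T^{-1}$ explicitly. Its state records a hypothesised current state of $T$ together with a buffer holding the at most $c$ most recent input symbols fed into the simulation but not yet confirmed. On each incoming output bit, $T^{-1}$ advances the simulation to keep synchronised with $y$, and whenever the finite-delay property certifies that the oldest buffered symbol is now uniquely forced by the output read so far, $T^{-1}$ emits that symbol and drops it from the buffer. Because a symbol is emitted only once it is forced, the output of $T^{-1}$ is always a prefix of $x$, giving $T^{-1}(T(x)) \sqsubseteq x$; and because everything outside the final window is eventually forced, at most $c$ symbols remain unemitted when $y$ runs out, giving $x \upharpoonright (|x|-c) \sqsubseteq T^{-1}(T(x))$. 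That $T^{-1}$ is itself information lossless is then immediate, since it deterministically reconstructs a prefix of its own input and hence $x \mapsto (T^{-1}(x), \widehat{\delta}(x))$ is injective.

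The \textbf{main obstacle} is the finite-delay lemma: making the pair-automaton bookkeeping precise so that the pending difference $z$ is provably bounded, and extracting from that bound a single constant $c$ that controls both the stall length and the ambiguity window uniformly over all inputs. Once a correct delay bound is in hand, assembling $T^{-1}$ is routine state bookkeeping, and the two prefix inequalities fall out directly from the ``emit only when forced'' discipline.
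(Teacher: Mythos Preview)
The paper does not prove this theorem at all: it is quoted from the literature (Huffman 1959, Kohavi 1978) and used as a black box throughout, so there is no in-paper argument to compare against. Your outline is essentially the classical proof behind those citations --- information losslessness forces finite decoding delay via a pigeonhole/pumping argument on pairs of runs, and the inverse transducer is then built as a bounded-lag decoder --- so in that sense you are reproducing the intended proof rather than diverging from it.

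One point where your write-up is too quick: the final sentence justifying that $T^{-1}$ is itself IL says it ``deterministically reconstructs a prefix of its own input.'' But the input to $T^{-1}$ is $y=T(x)$, not $x$; what $T^{-1}$ emits is a prefix of $x$, which is not a prefix of its input. To show $y \mapsto (T^{-1}(y),\widehat{\delta}_{T^{-1}}(y))$ is injective you need the final state of $T^{-1}$ to retain enough information that, together with the emitted prefix $x'$ of $x$, the consumed string $y$ can be reconstructed. In the standard construction this is arranged by having the state of $T^{-1}$ record not just the buffer of unconfirmed input symbols and the simulated $T$-state, but also the residual output bits of $T$ already matched against $y$ beyond $T(x')$; the finite-delay bound keeps this residual bounded too. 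This is routine bookkeeping, but it is not the one-liner you give, and you should make the state of $T^{-1}$ explicit enough that injectivity on \emph{all} of $\{0,1\}^*$ (not just the image of $T$) is visible.
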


\begin{corollary}
 For all $T \in \ILFST$, there exists $T^{-1} \in \mathrm{ILFST}$ such that for all $S \in \infbins$, $T^{-1}(T(S)) = S.$
\end{corollary}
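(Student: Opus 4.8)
The plan is to deduce the corollary from Theorem~\ref{FS: Thm: inverse fst} by a limiting argument that lifts the finite-string inequality to the level of sequences. First I would fix $T \in \ILFST$ and apply Theorem~\ref{FS: Thm: inverse fst} to obtain $T^{-1} \in \ILFST$ together with the constant $c \in \N$ satisfying $x \upharpoonright (|x| - c) \sqsubseteq T^{-1}(T(x)) \sqsubseteq x$ for every $x \in \fbins$. I claim this same $T^{-1}$ witnesses the corollary.

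Before passing to the limit I would check that $T(S)$ and $T^{-1}(T(S))$ are genuine sequences, so that the composition on $\infbins$ is well-defined in the sense of Definition~\ref{FS: Def: ILFS Computable}. This is where the information-lossless hypothesis is used. If the output lengths $|T(S \upharpoonright n)|$ were bounded, then since they are non-decreasing and integer-valued they would be eventually constant; as $Q$ is finite, the pigeonhole principle would then yield two distinct prefixes $x \sqsubset x'$ of $S$ with $T(x) = T(x')$ and $\widehat{\delta}(x) = \widehat{\delta}(x')$, contradicting injectivity of $x \mapsto (T(x), \widehat{\delta}(x))$ in Definition~\ref{FS: Def: ILFST}. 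Hence $\lim_{n \to \infty} |T(S \upharpoonright n)| = \infty$, and because $T(S \upharpoonright n) \sqsubseteq T(S \upharpoonright (n+1))$ the outputs converge to a well-defined $T(S) \in \infbins$; the same reasoning applies to $T^{-1}$ on $T(S)$.

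Next I would set $y_n = T(S \upharpoonright n)$ and use monotonicity of the FST output twice. Since $y_n \sqsubseteq T(S)$, Definition~\ref{FS: Def: ILFS Computable} applied to $T^{-1}$ gives $T^{-1}(y_n) \sqsubseteq T^{-1}(T(S))$, so each $T^{-1}(y_n)$ is a prefix of $T^{-1}(T(S))$. On the other hand, the finite-string inequality applied to $x = S \upharpoonright n$ yields $S \upharpoonright (n-c) \sqsubseteq T^{-1}(y_n) \sqsubseteq S \upharpoonright n$, so $T^{-1}(y_n)$ is also a prefix of $S$ of length at least $n - c$. Thus for every $n$ the sequences $S$ and $T^{-1}(T(S))$ share a common prefix of length at least $n - c$; letting $n \to \infty$ forces $T^{-1}(T(S)) = S$.

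The argument is essentially routine once the semantics are pinned down, so the only real obstacle is the bookkeeping around the infinite-sequence meaning of $T(S)$ and $T^{-1}(T(S))$: one must be careful that these are defined as the suprema of the monotone prefix outputs and that the unbounded-length conclusion drawn from the IL property is precisely what licenses treating them as elements of $\infbins$ rather than as finite strings. Everything else follows by taking limits in the finite inequality supplied by Theorem~\ref{FS: Thm: inverse fst}.
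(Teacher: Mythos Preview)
Your argument is correct and is precisely the natural limiting argument that lifts Theorem~\ref{FS: Thm: inverse fst} from finite strings to sequences. The paper does not spell out a proof of the corollary at all, merely recording it as an immediate consequence of Theorem~\ref{FS: Thm: inverse fst}; your write-up makes explicit the two points the paper leaves to the reader, namely that the IL hypothesis forces $|T(S\upharpoonright n)|\to\infty$ (so $T(S)$ and $T^{-1}(T(S))$ are genuine elements of $\infbins$) and that monotonicity of FST outputs lets one pass to the limit in the prefix inequalities.
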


\subsection{Pushdown Compressors}
The model of pushdown compressors (PDC) we use to define pushdown depth can be found in \cite{DBLP:journals/mst/MayordomoMP11} where PDCs were referred to as \textit{bounded pushdown compressors}. We use this model  as it allows for feasible run times by bounding the number of times a PDC can pop a bit from its stack without reading an input bit. This prevents the compressor spending an arbitrarily long time altering its stack without reading its input. This model of pushdown compression also has the nice property that it equivalent to a notion of pushdown-dimension based on bounded pushdown gamblers \cite{DBLP:conf/birthday/AlbertMM17}. Similar models where there is not bound on the number of times a bit can be popped off from the stack can be found in \cite{DBLP:journals/tcs/DotyN07}. 

The following contains details of the model used. It is taken from \cite{DBLP:journals/mst/MayordomoMP11}.
\begin{definition}
A \emph{pushdown compressor (PDC)} is a 7-tuple $C = (Q, \Gamma, \delta, \nu, q_0, z_0,c)$ where
\begin{enumerate}
    \item $Q$ is a non-empty, finite set of \textit{states},
    \item $\Gamma = \{0,1,z_0\}$ is the finite \textit{stack alphabet},
    \item $\delta: Q \times (\bin \cup \{\lambda\}) \times \Gamma \rightarrow Q \times \Gamma^*$ is the \emph{transition function},
    \item $\nu: Q \times (\bin \cup \{\lambda\}) \times \Gamma \rightarrow \fbins$ is the \emph{output function},
    \item $q_0 \in Q$ is the \textit{initial state},
    \item $z_0 \in \Gamma$ is the special \textit{bottom of stack symbol},
    \item $c\in \N$ is an upper bound on the number of $\lambda$-transitions per input bit.
\end{enumerate}
\label{PD: Def: PD}
\end{definition}

We write $\delta_Q$ and $\delta_{\Gamma*}$ to represent the projections of the function $\delta.$ For $z \in \Gamma^+$ the stack of $C$, $z$ is ordered such that $z[0]$ is the topmost symbol of the stack and $z[|z|-1] = z_0.$ $\delta$ is restricted to prevent $z_0$ being popped from the bottom of the stack. That is, for every $q \in Q, \, b \in \bin \cup \{\lambda\}$, either $\delta(q,b,z_0) = \bot$, or $\delta(q,b,z_0) = (q',vz_0)$ where $q' \in Q$ and $v \in \Gamma^*$.

Note that $\delta$ accepts $\lambda$ as a valid input symbol. This means that $C$ has the option to pop the top symbol from its stack and move to another state without reading an input bit. This type of transition is call a $\lambda$\textit{-transition}. In this scenario $\delta(q,\lambda, a) = (q',\lambda)$. To enforce determinism, we ensure that one of the following hold for all $q \in Q$ and $a \in \Gamma$: \begin{itemize}
    \item $\delta(q,\lambda,a) = \bot$, or
    \item $\delta(q,b,a) = \bot$ for all $b \in \bin$.
\end{itemize}
This means that the compressor does not have a choice to read either $0$ or $1$ characters. To prevent an arbitrary number of $\lambda$-transitions occurring at any one time, we restrict $\delta$ such that at most $c$ $\lambda$-transitions can be performed in succession without reading an input bit.

The transition function is extended to $\delta': Q \times \{0,1,\lambda\} \times \Gamma^+ \to Q \times \Gamma^*$ and is defined recursively as follows. For $q \in Q,\, v \in \Gamma^*,\, a \in \Gamma$ and $b \in \{0,1,\lambda\}$
\begin{align*}
    \delta'(q, b,av) = \begin{cases}
    (\delta_Q(q,b,a), \delta_{\Gamma^*}(q, b, a)v), \, &\text{if } \delta(q,b, a) \neq \bot; \\
    \bot & \text{otherwise,} 
    \end{cases}
\end{align*}
\noindent For readability, we abuse notation and write $\delta$ instead of $\delta'$. The transition function is extended further to $\delta'': Q \times \fbins \times \Gamma^{+} \to Q \times \Gamma^*$ as follows. For $q \in Q,\, v \in \Gamma^+,\, w \in \fbins$ and $b \in \{0,1.\lambda\}$

\begin{align*}
    \delta''(q, \lambda,v) = \begin{cases}
    \delta''(\delta_Q(q,\lambda,v), \lambda, \delta_{\Gamma^*}(q, \lambda, v)), \, &\text{if } \delta(q,\lambda, v) \neq \bot; \\
    (q,v) & \text{otherwise,} 
    \end{cases}
\end{align*}

\begin{align*}
    \delta''(q,wb,v) = \begin{cases}
    \delta''(\delta_Q(\delta_Q''(q,w,v),b,\delta_{\Gamma^*}''(q,w,v)),\lambda, \delta_{\Gamma^*}(\delta_Q''(q,w,v),b,\delta_{\Gamma^*}''(q,w,v)),\\   \hphantom{\bot,,} \text{if } \delta''(q,w,v) \neq \bot \text{ and } \delta(\delta_Q''(q,w,v),b,\delta_{\Gamma^*}''(q,w,v)) \neq \bot; \\
    \bot,  \text{ otherwise.}
    \end{cases}
\end{align*}

\noindent In an abuse of notation we write $\delta$ for $\delta''$ and $\delta(w)$ for $\delta(q_0,w,z_0)$.

We define the \textit{extended output function} for $q \in Q$, $b \in \bin$, $w \in \fbins$ with stack contents $a \in \Gamma, \,v \in \Gamma^*$ by the recursion $\nu'(q,\lambda, av) = \nu(q,\lambda,a) = \lambda$, $\nu'(q,b, av) = \nu(q,b,a)$ and $$\nu'(q,wb,av) = \nu'(q,w,av)\cdot\nu'(\delta_Q(q,w,av),b,\delta_{\Gamma^*}(q,w,av)).$$

\noindent In an abuse of notation we write $\nu$ for $\nu'$. The \textit{output} of $C$ on input $w \in \fbins$ is denoted by the string $C(w) = \nu(q_0,w,z_0).$

To make our notion of depth meaningful, we examine the class of \textit{information lossless pushdown compressors}.

\begin{definition}
A PDC $C$ is \textit{information lossless (IL)} if for all $x \in \fbins$, the function $x \mapsto (C(x),\delta_Q(x))$ is injective.
\label{PD: Def: ILPDC}
\end{definition}

In other words, a PDC $C$ is IL if the output and final state of $C$ on input $x$ uniquely identify $x$. We call a PDC that is IL an ILPDC. We write (IL)PDC to denote the set of all (IL)PDCs. By the identity PDC, we mean the ILPDC $I_{\textrm{PD}}$ where on every input $x$, $I_{\textrm{PD}}(x) = x.$

As part of our definition of pushdown depth, we examine ILPDCs whose stack is limited to only containing the symbol $0$ also.

\subsubsection{Unary-stack Pushdown Compressors}

UPDCs are similar to counter compressors as seen in \cite{DBLP:journals/jcss/BecherCH15}. The difference here is that for a UPDC, only a single $0$ can be popped from the stack during a single transtion, while for a counter transducer, an arbitrary number of $0$s can be popped from its stack on a single transition, i.e. its counter can be deducted by an arbitrary amount. However, the UPDC has the ability to pop off $0$s from its stack without reading a symbol via $\lambda$-transitions while the counter compressor cannot. Thus, if a counter compressor decrements its counter by the value of $k$ on a single transition, a UPDC can do the same by performing $k-1$ $\lambda$-transitions in a row before reading performing the transition of the counter compressor and popping off the final $0$.

\begin{definition}
A \emph{unary-stack pushdown compressor (UPDC)} is a $7$-tuple $$C = (Q,\Gamma, \delta, \nu, q_0, z_0, c)$$ where $Q,\delta, \nu, q_0, z_0$ and $c$ are all defined the same as for a PDC in Defintion \ref{PD: Def: PD}, while the \emph{stack alphabet} $\Gamma$ is the set $\{0,z_0\}$.
\label{UPDC def}
\end{definition}

\begin{definition}
A UPDC $C$ is \textit{information lossless (IL)} if for all $x \in \fbins$, the function $x \mapsto (C(x),\delta_Q(x))$ is injective. A UPDC which is IL is referred to as an ILUPDC.
\label{PD: Def: ILUPDC}
\end{definition}

We make the following observation regarding ILUPDCs. Let $C \in \ILUPDC$ and suppose it has been given the input $yx$. After reading the prefix $y$, if $C$'s stack height is large enough such that it never empties on reading the suffix $x$, the actual height of the stack doesn't matter. That is, any reading of $x$ with an arbitrarily large stack which is far enough away from being empty will all have a similar behaviour if starting in the same state. This is because if the stack does not empty, it has little impact on the processing of $x$. We describe this below.

\begin{remark}
Let $C \in \ILUPDC$ and suppose $C$ can perform at most $c$ $\lambda$-transitions in a row. Consider running $C$ on an input of the form $yx$ and let $q$ be the state $C$ ends in after reading $y$. If $C$'s stack has a height above $(c+1)|x|$ after reading $y$, then $C$'s stack can never be fully emptied upon reading $x$. Hence, for $k,k' \geq (c+1)|x|$ with $k\neq k'$ then $C(q,x,0^kz_0) = C(q,x,0^{k'}z_0),$ i.e. $C$ will output the same string regardless of whether the height is $k$ or $k'$. Thus, prior to reading $x$, only knowing whether the stack's height is below $(c+1)|x|$ will have any importance.
\label{PD: Rmk: height discussion}
\end{remark}

\subsection{Lempel-Ziv 78}

The Lempel-Ziv 78 algorithm (denoted LZ) \cite{DBLP:LZ78} is a lossless dictionary based compression algorithm. Given an input $x \in \fbins$, LZ parses $x$ into phrases $x = x_1x_2\ldots x_n$ such that each phrase $x_i$ is unique in the parsing, except for possibly the last phrase. Furthermore, for each phrase $x_i$, every prefix of $x_i$ also appears as a phrase in the parsing. That is, if $y \sqsubset x_i$, then $y = x_j$ for some $j<i$. Each phrase is stored in LZ's dictionary. LZ encodes $x$ by encoding each phrase as a pointer to its dictionary containing the longest proper prefix of the phrase along with the final bit of the phrase. Specifically for each phrase $x_i$, $x_i = x_{l(i)}b_i$ for $l(i) < i$ and $b_i \in \{0,1\}.$ Then for $x = x_1x_2\ldots x_n$
$$LZ(x) = c_{l(1)}b_1c_{l(2)}b_2\ldots c_{l(n)}b_n$$
where $c_i$ is a prefix free encoding of the pointer to the $i^{th}$ element of LZ's dictionary, and $x_0 = \lambda$.

For strings $w=xy$, we let $\LZ(y|x)$ denote the output of LZ on $y$ after it has already parsed $x$. For strings of the form $w = xy^n$, we use Lemma $1$ from \cite{DBLP:journals/mst/MayordomoMP11} to get an upper bound for $|\LZ(y^n|x)|$.

\begin{lemma}[\cite{DBLP:journals/mst/MayordomoMP11}]
Let $n \in \N$, and $x,y \in \fbins$ where $y \neq \lambda$. Let $w = xy^n$. Suppose on its computation of the string $w$ that $\LZ$'s dictionary contained $d \geq 0$ phrases after reading $x$. Then we have that $$|\LZ(y^n|x)| \leq \sqrt{2(|y|+1)|y^n|}\log{(d + \sqrt{2(|y|+1)|y^n|})}.$$
\label{LZ: Lemma: Repeat lemma}
\end{lemma}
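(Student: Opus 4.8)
The plan is to factor $|\LZ(y^n|x)|$ as a product of two quantities: the number $m$ of phrases that $\LZ$ produces while scanning the $y^n$ portion, and an upper bound on the number of bits spent encoding each such phrase. The first factor is where the periodic structure of $y^n$ is exploited, and it is the heart of the argument; the second factor is the routine dictionary-pointer accounting common to all LZ bounds. So I would prove two claims, $m \leq \sqrt{2(|y|+1)|y^n|}$ and ``each phrase costs at most $\log(d+m)$ bits'', and then multiply.

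For the phrase count, the key observation I would use is that $y^n$ has period $|y|$, so for every length $\ell$ a sliding window of width $\ell$ takes at most $|y|$ distinct values as it moves across $y^n$; hence $y^n$ contains at most $|y|$ distinct substrings of each fixed length. Since the phrases $\LZ$ creates are pairwise distinct, at most $|y|$ of them can share any given length. Sorting the $m$ new phrases by increasing length, the $j$-th shortest therefore has length at least $\lceil j/|y|\rceil$, so that the total length parsed satisfies
\[
|y^n| \;=\; \sum_{j=1}^{m}\ell_{(j)} \;\geq\; \sum_{j=1}^{m}\frac{j}{|y|} \;=\; \frac{m(m+1)}{2|y|} \;\geq\; \frac{m^2}{2|y|}.
\]
Rearranging gives $m \leq \sqrt{2|y|\,|y^n|}\leq \sqrt{2(|y|+1)|y^n|}=:M$, where the extra unit inside $|y|+1$ supplies the slack needed to absorb the two boundary effects discussed below.

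For the per-phrase cost, I would use that while the $y^n$ portion is being read the dictionary never holds more than $d+m\leq d+M$ phrases, so under the LZ encoding each phrase is emitted as a prefix-free pointer into a dictionary of size at most $d+M$ which, together with its single terminal bit, contributes at most $\log(d+M)$ bits to the output. Multiplying the $m\leq M$ phrases by this per-phrase cost, and using the monotonicity $\log(d+m)\leq\log(d+M)$, yields $|\LZ(y^n|x)|\leq M\log(d+M)$, which is exactly the claimed inequality once $M$ is written out in full.

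The main obstacle, and the place I would spend the most care, is the two edge effects hidden in the clean counting above. First, the LZ parse of $xy^n$ need not break at the boundary between $x$ and $y^n$, so a single phrase may straddle it; and the final phrase of the whole parse may be an incomplete repetition of $y$, hence possibly a duplicate that violates the ``at most $|y|$ per length'' bookkeeping by one. Neither is a genuine substring-of-period obstruction, but each can add one extra phrase or a slightly shorter trailing block, which is precisely why the statement carries $|y|+1$ rather than $|y|$. I would handle these by checking that the straddling phrase and the trailing partial phrase each add at most a constant to the count and so remain absorbed by the $+1$ slack, leaving the extremal length inequality, and therefore the final bound, intact.
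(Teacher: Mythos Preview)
The paper does not prove this lemma; it is quoted verbatim from \cite{DBLP:journals/mst/MayordomoMP11} (their Lemma~1) and used as a black box, so there is no in-paper argument to compare against. Your proposal is the standard, and correct, proof: bound the number $m$ of new phrases via the fact that a period-$|y|$ string has at most $|y|$ distinct factors of each length (so the distinct LZ phrases must grow, forcing $|y^n|\geq m^2/(2(|y|+1))$), then multiply by the $\log(d+m)$ pointer cost. Your handling of the two edge effects---one possible boundary-straddling phrase and one possible repeated final phrase---is exactly what the extra unit in $|y|+1$ is there for. The only place to be a touch more careful is the per-phrase cost: depending on the precise prefix-free encoding of the pointer, the terminal bit may contribute an additive $+1$ per phrase on top of $\log(d+m)$, but this is absorbed by the slack already present in the bound (or by the convention on $c_i$ used in the cited paper).
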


\section{Finite-State Depth}

An infinitely often (i.o.) finite-state depth notion was introduced by Doty and Moser in \cite{DotyM07} based on finite-state transducers. In this section, we state and prove properties of finite-state depth found in \cite{DBLP:conf/sofsem/JordonM20} whose proofs were omitted in their original publication for space. These properties are needed to compare it with pushdown depth and LZ-depth introduced in later sections. Henceforth when we say a sequence is finite-state deep, we assume it is the i.o. version.

\subsection{Binary Representation of FSTs}

Before we begin examining depth, we first  choose a binary representation of all finite-state transducers.

\begin{definition}
A \emph{binary representation of finite-state transducers} $\sigma$ is a partially computable map $\sigma : D \subseteq \fbins \rightarrow \mathrm{FST}$, such that for every $\mathrm{FST}$ $T$, there exists some $x \in D$ such that $\sigma(x)$ fully describes $T$, i.e. $\sigma$ is surjective. If $\sigma(x) = T$, we call $x$ a $\sigma$ \emph{description} of $T$. If $x \notin D$, then $\sigma(x) = \bot.$ 
\label{FS: Def: Binary Description}
\end{definition}

For a binary representation of FSTs $\sigma$, we define 
\begin{equation}
|T|_{\sigma} = \min \{|x| : \sigma(x) = T\} \label{FS: EQ: FST Size}
\end{equation}
to be the size of $T$ with respect to $\sigma$. For all $k \in \N$, we define

\begin{equation}\FSTsize{k}_{\sigma} = \{ T \in \mathrm{FST}:|T|_{\sigma} \leq k \} \label{FS: EQ: FST k}
\end{equation}
to be the set of FSTs of with a $\sigma$ description of length $k$ or less. 
\begin{definition}
For $k \in \N$ and $x \in \fbins$, the \emph{k-finite-state complexity} of $x$ with respect to binary representation $\sigma$ is defined as $$D^{k}_{\sigma}(x) = \min \Big \{ |y| \, : \,  (T \in \mathrm{FST}^{\leq k}_{\sigma})\, T(y) = x \, \Big \}.$$
\end{definition} Here $y$ is the shortest string that gives $x$ as an output when inputted into an FST of size $k$ or less with respect to the binary representation $\sigma$. If no such $y$ exists we say $\FScomp{k}(x) = \infty.$ $T$ can be thought of as the FST that can decompress $y$ to reproduce $x$.

For the purpose of this paper, we fix a binary representation of finite-state transducers $\sigma$. Let $T = (Q,q_0,\delta,\nu)$ be an FST. We define the function $\Delta: Q \times \bin \rightarrow Q \times \fbins$, where \begin{equation}\Delta(q,b) = (\delta(q,b),\nu(q,b))
\label{FS: EQ: Delta}
\end{equation} which completely describes the state transitions and outputs of $T$. Calude, Salomaa and Roblot previously presented different methods to encode $\Delta$ in \cite{calude2011finite}. Further study of binary representations of FSTs can be found in \cite{calude2016finite}. We re-present the first encoding scheme from \cite{calude2011finite} here (borrowing the notation they use) as it is used to define our own binary representation later.

For $n \in \N$, let $\binf(n)$ denote the binary representation of $n$. For instance $\binf(1) = 1, \, \binf(2) = 10,\,  \binf(3) = 11$. Note that $\binf(n)$ begins with a $1$ for all $n$. $\stringf(n)$ denotes the binary string built by removing the first $1$ in $\binf(n)$. So, $\binf(n) = 1\cdot \stringf(n)$. Note that $|\stringf(n)| = \lfloor \log (n) \rfloor.$

For $x = x_1x_2\ldots x_l$, where $x_i \in \bin$ for $1 \leq i \leq l$, we define the following two strings:

\begin{enumerate}
    \item $x^{\dagger} = x_10x_20\ldots x_{l-1}0x_l1$, and
    \item $x^{\diamond} = \overline{(1x)^{\dagger}},$
\end{enumerate}
where $\bar{0} = 1$ and $\bar{1} = 0.$

Then if $Q = \{q_1,\ldots ,q_m\}$, $\Delta$ is encoded by the binary string

\begin{equation}\pi = \binf(n_1)^{\ddagger}\cdot \stringf(n_1')^{\diamond}\cdot \binf(n_2)^{\ddagger} \cdots \binf(n_{2m})^{\ddagger}\cdot \stringf(n_{2m}')^{\diamond}, \label{FS: EQ: Calude Binary}
\end{equation}
where $\Delta(q_i,b) = (q_{(1+ n_{2i - 1 + b}\mod m)}, \stringf(n_{2i-1+b}')),  1 \leq i \leq m,$ and $b \in \bin$. Here, $\binf(n_t)^{\ddagger} = \lambda$ if the corresponding transition stays in the same state, that is $\delta(q_t,b) = q_t.$ Otherwise $\binf(n_t)^{\ddagger} = \binf(n_t)^{\dagger}.$

While $\pi$ in \eqref{FS: EQ: Calude Binary} gives a complete description of $\Delta$, there is no indication of what the initial state of $T$ is. One is left to assume that $q_1$ is the initial state. This leads to the question of whether changing the initial state of $T$ to $q_i \neq q_1$ drastically alters the length of the corresponding encoding of the new FST.

To overcome this, the binary representation $\sigma: D \rightarrow \text{FST}$ for FSTs we use is as follows. Let $$\Delta_m = \{ \pi \, | \pi \, \text{is an encoding of $\Delta$ for an FST with $m$ states.}\}$$ be the set of all possible encodings of $\Delta$ for all FSTs with $m$ states. The domain $D$ of $\sigma$ is the set of strings $$ D = \bigcup\limits_{m} \bigcup\limits_{1 \leq i \leq m} \{d(\binf(i) ) 01 y \, | \, y \in \Delta_m \}.$$ Then for $1\leq i \leq m$ and $y \in \Delta_m$ we set \begin{equation}\sigma(d(\binf(i))01y) = T \label{FS: EQ: Binary Rep} \end{equation}where $T$ is the FST with $Q = \{q_1, \ldots q_m\}$ with initial state $q_0 = q_i$ and whose transition function $\Delta$ is described by $y$. Clearly $\sigma$ is surjective and so is a binary representation of all FSTs.

We require a pointer to the initial state as for two transducers which are equivalent up to a relabelling of their states, this change of relabelling of states changes the encoding of their respective $\Delta$. This pointer allows us to easily get a bound on the size of transducers with equivalent transition tables, but different initial states. Specifically our binary representation $\sigma$ is used to prove Lemma \ref{FS: Lemma: GEQ Lemma}. However, Lemma \ref{FS: Lemma: Deep for every binary rep} demonstrates that if a sequence is deep when the size of transducers is viewed from the perspective of one binary representation, it is deep when viewed from the perspective of any binary representation. Henceforth, we will drop the $\sigma$ notation and instead write $|T|$ for $|T|_{\sigma}$, FST$^{\leq k}$ for FST$_{\sigma}^{\leq k}$ and $\FScomp{k}(x)$ instead of $D^{k}_{\sigma}(x)$. All other definitions and results hold and can be proved regardless of the binary representation being used.

To measure the randomness of a sequence in the finite-state setting, we use the following notions of dimension.

\begin{definition}
Let $S\in \infbins$.
\begin{enumerate}
    \item The \emph{finite-state dimension} of $S$ \cite{DBLP:journals/tcs/DaiLLM04} is defined to be \begin{equation}\text{dim}_{\text{FS}}(S) = \lim_{k\to\infty}\liminf\limits_{n\rightarrow \infty}\frac{\FScomp{k}(S \upharpoonright n)}{n} = \inf_{T \in \ILFST}\liminf_{n \rightarrow \infty}\frac{|T(S \upharpoonright n)|}{n}.\end{equation}
    \item The \emph{finite-state strong dimension} of $S$ \cite{DBLP:journals/siamcomp/AthreyaHLM07} is defined to be \begin{equation}\text{Dim}_{\text{FS}}(S) = \lim_{k\to\infty}\limsup\limits_{n\rightarrow \infty}\frac{\FScomp{k}(S \upharpoonright n)}{n} = \inf_{T \in \ILFST}\limsup_{n \rightarrow \infty}\frac{|T(S \upharpoonright n)|}{n} .\end{equation}
\end{enumerate}
\label{FS: Def: Dimension}
\end{definition}

The original definitions of finite-state and finite-state strong dimension presented in \cite{DBLP:journals/siamcomp/AthreyaHLM07,DBLP:journals/tcs/DaiLLM04} were based on finite-state gamblers and information lossless finite-state compressors. However, using the relationship between finite-state compressors and decompressors as shown in  \cite{DotyMoserLossy,DBLP:journals/iandc/SheinwaldLZ95}, the equalities above relating dimension to $k$-finite-state complexity hold also.
 
 Note that $\text{dim}_{\text{FS}}(S) = \rho_{\ILFST}(S)$ and $\text{Dim}_{\text{FS}}(S) = R_{\ILFST}(S)$.

\subsection{Finite-State Depth}

A sequence $S$ is \textit{finite-state deep} if, given any finite-state transducer, we can always build a more powerful finite-state transducer (i.e. via a combination of having more states to process the input and the ability to output longer strings) such that when we examine the $k$-finite-state complexity of prefixes of $S$ on each transducer, their difference is always bounded below by the length of the prefix times a fixed constant. Intuitively, the larger transducer is more powerful and can spot patterns of the sequence that the smaller transducer cannot not. As such, the larger transducer requires less bits to describe the prefix. In \cite{DotyM07}, a notion of depth based on FSTs was introduced which we give the definition of below.

\begin{definition}
A sequence $S$ is \textit{infinitely often finite-state deep (FS-deep)} if 
$$(\exists \alpha > 0)(\forall k \in \N)(\exists k' \in \N)(\exists^{\infty} n \in \N) \, \FScomp{k}(S \upharpoonright n) - \FScomp{k'}(S \upharpoonright n) \geq \alpha n.$$ \label{FS: Def: io fs deep}
\end{definition}

The following lemma demonstrates that if a sequence $S$ is FS-deep when the size of finite-state transducers are viewed with respect to one binary representation, then it is FS-deep regardless of what binary representation is used.

\begin{lemma}
Let $\pi$ be a binary representation of FSTs. Let $S$ be a FS-deep sequence when the size of the FSTs are viewed with respect to the binary representation $\pi$. Then $S$ is FS-deep when the size of the FSTs are viewed with respect to every binary representation.
\label{FS: Lemma: Deep for every binary rep}
\end{lemma}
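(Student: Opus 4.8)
The plan is to show that FS-depth is invariant under change of binary representation by exploiting the key structural feature of the depth definition: the outer quantifier structure $(\exists \alpha)(\forall k)(\exists k')$. The crucial observation is that the \emph{set} of transducers $\FSTsize{k}$ changes when we switch representations, but the underlying quantity $\FScomp{k}(x)$ depends only on which transducers are available up to size $k$, not on how they are named. So the whole argument reduces to comparing, for two representations $\pi$ and $\rho$, how the size of a fixed transducer $T$ relates across the two. Since both $\sigma$-descriptions and $\rho$-descriptions are finite strings, for any individual $T$ the values $|T|_\pi$ and $|T|_\rho$ are both finite; the content lies in controlling them uniformly.

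First I would set up the comparison between two representations $\pi$ and $\rho$ (with $\pi$ the one for which $S$ is assumed FS-deep, and $\rho$ an arbitrary target representation). The key monotonicity fact is that the complexity is determined by the collection of available transducers: if every transducer with $\rho$-size at most $k$ also has $\pi$-size at most some $g(k)$, then $\FSTsize{k}_\rho \subseteq \FSTsize{g(k)}_\pi$, and hence $\FScomp{g(k)}_\pi(x) \leq \FScomp{k}_\rho(x)$ for all $x$ (a smaller string can be found using a larger pool of transducers). I would establish the existence of such a nondecreasing $g$ by noting that $\FSTsize{k}_\rho$ is a \emph{finite} set of transducers for each $k$ (only finitely many $\rho$-descriptions of length $\leq k$), so we may simply set $g(k) = \max\{|T|_\pi : T \in \FSTsize{k}_\rho\}$, which is finite. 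Symmetrically there is a nondecreasing $h$ with $\FSTsize{k}_\pi \subseteq \FSTsize{h(k)}_\rho$, giving $\FScomp{h(k)}_\rho(x) \leq \FScomp{k}_\pi(x)$.

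Next I would run the translation of the depth witness. Assume $S$ is FS-deep under $\pi$: there is $\alpha > 0$ so that for every $k$ there is $k'$ with $\FScomp{k}_\pi(S \upharpoonright n) - \FScomp{k'}_\pi(S \upharpoonright n) \geq \alpha n$ for infinitely many $n$. To prove FS-depth under $\rho$ with the same $\alpha$, fix an arbitrary $m \in \N$; I must produce some $m'$ working for infinitely many $n$ in the $\rho$-world. Using $h$, the $\pi$-transducers of size $m$ are $\rho$-available at size $h(m)$, so $\FScomp{h(m)}_\rho(x) \leq \FScomp{m}_\pi(x)$. Apply the $\pi$-depth hypothesis at the level $h(m)$: there is $k'$ with $\FScomp{h(m)}_\pi(S\upharpoonright n) - \FScomp{k'}_\pi(S\upharpoonright n)\geq\alpha n$ infinitely often. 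Then using $g$, we have $\FScomp{g(k')}_\rho(x) \leq \FScomp{k'}_\pi(x)$. The chain of inequalities then lets me bound the $\rho$-difference below: on those infinitely many $n$, $\FScomp{m}_\rho(S\upharpoonright n) - \FScomp{g(k')}_\rho(S\upharpoonright n)$ should exceed $\alpha n$, after I arrange the inequalities so that the small-size $\rho$-complexity sits above a $\pi$-complexity and the large-size $\rho$-complexity sits below a $\pi$-complexity.

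The main obstacle I anticipate is getting the directions of the inequalities to line up correctly so that the $\alpha n$ gap survives. The subtle point is that $\FScomp{m}_\rho$ (the small, \emph{weak} transducer) must be bounded \emph{below} by a $\pi$-complexity, whereas the monotonicity facts from the finiteness argument naturally give \emph{upper} bounds $\FScomp{g(k)}_\pi \leq \FScomp{k}_\rho$. Resolving this requires using the inclusions in the correct orientation: I need $\FScomp{m}_\rho(x) \geq \FScomp{k}_\pi(x)$ for the appropriate $k$, which follows from $\FSTsize{m}_\rho \subseteq \FSTsize{k}_\pi$ reading off that a \emph{larger} $\pi$-pool can only shorten the minimal input. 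Concretely I want the translated weak level to satisfy $\FScomp{m}_\rho \geq \FScomp{m}_\pi$-type bounds obtained by choosing the level in the opposite representation large enough; so I would instead apply the $\pi$-hypothesis at level $m$ directly and translate both resulting complexities into $\rho$, taking care that the weak side only shrinks and the strong side only grows by a controlled amount, so that $\alpha$ need not be decreased. Once the two one-sided bounds are composed, the common constant $\alpha$ transfers verbatim and the lemma follows.
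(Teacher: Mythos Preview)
Your approach is essentially the paper's: use finiteness of $\FSTsize{k}$ under each representation to produce translation bounds between $\pi$ and $\rho$, then sandwich the $\pi$-depth gap between two $\rho$-complexities. One clarification: after you correctly identify in the final paragraph that $\FSTsize{m}_\rho \subseteq \FSTsize{g(m)}_\pi$ yields $\FScomp{m}_\rho \geq \FScomp{g(m)}_\pi$, the $\pi$-hypothesis must be invoked at level $g(m)$ (not at level $m$ as you write at the very end), after which setting $m' = h(k')$ gives exactly the paper's choice $k' = (k+c)' + d$.
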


\begin{proof}
Let $S$ and $\pi$ be as in the statement of the lemma. Let $\tau$ be a different binary representation of all FSTs. Fix $k\in\N$. Then there exists a constant $c$ such that $\mathrm{FST}_{\tau}^{\leq k} \subseteq \mathrm{FST}_{\pi}^{\leq k+c}$. Therefore for all $n\in\N,$  
\begin{equation}D^{k+c}_{\pi}(S \upharpoonright n) \leq D^{k}_{\tau}(S \upharpoonright n).\label{FS: Lemma: Deep for every binary rep: EQ1}\end{equation}

As $S$ is FS-deep with respect to $\pi$, there exists constants $\alpha$ and $(k+c)'$ such that for infinitely many $n$ \begin{equation}D^{k+c}_{\pi}(S\upharpoonright n) - D^{(k+c)'}_{\pi}(S\upharpoonright n) \geq \alpha n. \label{FS: Lemma: Deep for every binary rep: EQ2}\end{equation}

Let $d$ be a constant such that $\mathrm{FST}_{\pi}^{(k+c)'} \subseteq \mathrm{FST}_{\tau}^{(k+c)' + d}.$ Therefore for almost every $n$,
\begin{equation}
    D_{\tau}^{(k+c)'+d}(S\upharpoonright n) \leq D^{(k+c)'}_{\pi}(S\upharpoonright n). \label{FS: Lemma: Deep for every binary rep: EQ3}
\end{equation}

Therefore for infintely many $n$,
\begin{equation}
D^{k}_{\tau}(S\upharpoonright n) - D_{\tau}^{k'}(S\upharpoonright n)
\geq D^{k+c}_{\pi}(S\upharpoonright n) - D^{(k+c)'}_{\pi}(S\upharpoonright n) \geq \alpha n.
\end{equation}
where $k' = (k+c)'+d$. As $k$ is arbitrary, we have that $S$ is also FS-deep with respect to $\tau$.

\end{proof}

The following two lemmas demonstrate the relationship between the $k$-finite-state complexity of strings $x$ and $M(x)$ where $M$ is an ILFST. It previously appeared in \cite{DotyM07} but we restate and reprove parts of it here.

\begin{lemma}[\cite{DotyM07}]
Let $M$ be an $\mathrm{ILFST}$. 
\begin{enumerate}
    \item $(\forall k \in \N)(\exists k' \in \N)(\forall x \in \fbins) \, \FScomp{k'}(M(x)) \leq \FScomp{k}(x).$
    
    \item $(\forall \varepsilon >0)(\forall k \in \N)(\exists k' \in \N)(\forall^{\infty} x \in \fbins) \, \FScomp{k'}(x) \leq (1 + \varepsilon)\FScomp{k}(M(x)) + O(1).$
\end{enumerate}
\label{FS: Lemma: ILFST Machine less/greater}
\end{lemma}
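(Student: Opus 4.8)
For part 1 the plan is to compose decompressors. Fix $k$ and let $T \in \FSTsize{k}$ with $T(y) = x$ and $|y| = \FScomp{k}(x)$ be an optimal witness for $x$. Composition is closed within the FST model (the standard product construction on $Q_M \times Q_T$, feeding the output of $T$ through $M$), so $M \circ T$ is an FST with $(M \circ T)(y) = M(T(y)) = M(x)$. Since $M$ is fixed, the number of states and the maximal per-transition output length of $M\circ T$ are bounded by a function of $k$ and the fixed machine $M$, hence so is its $\sigma$-description length; let $k'$ be any such bound. Then $\FScomp{k'}(M(x)) \le |y| = \FScomp{k}(x)$, and as $k'$ depends only on $k$ and $M$, not on $x$, this holds for every $x$. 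The only thing to verify carefully is the size bound for the composition, which is routine bookkeeping.

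For part 2 I would run the construction in reverse using the inverse transducer of Theorem \ref{FS: Thm: inverse fst}. Fix $\varepsilon>0$ and $k$, let $z$ be an optimal witness for $M(x)$, i.e.\ $T \in \FSTsize{k}$ with $T(z) = M(x)$ and $|z| = \FScomp{k}(M(x))$, and let $M^{-1}$ and $c$ be as in Theorem \ref{FS: Thm: inverse fst}. The FST $M^{-1}\circ T$ then maps $z$ to $M^{-1}(T(z)) = M^{-1}(M(x))$, which by the theorem equals $x$ with at most its last $c$ bits removed; write $x = x's$ with $|s| \le c$. Feeding $z$ directly thus recovers only the prefix $x'$, so the decompressor must be supplemented to re-emit the bounded suffix $s$.

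The obstacle, and the heart of the argument, is that a plain FST cannot append a fixed string at the very end of its output, since it receives no end-of-input signal. I would circumvent this by making the input self-delimiting so that the decompressor can detect where $z$ ends and switch to copying $s$. The naive fix, doubling $z$ and using the then-forbidden pattern $01$ as a marker, costs a factor of $2$; to obtain the stated $(1+\varepsilon)$ I would instead parse $z$ into blocks of a large constant length $L \approx 1/\varepsilon$, prefix each block with a single flag bit (with a distinct terminating flag after the last block), and append a constant-length self-delimiting code for $s$. The resulting decompressor $T'$ reads a flag and either feeds the next block through the simulation of $M^{-1}\circ T$, or, on the terminating flag, reads and directly emits $s$; its size is bounded by a function of $k$, $\varepsilon$ and $M$, yielding a single $k'$. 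The input length is then $(1+1/L)|z| + O(1) \le (1+\varepsilon)\FScomp{k}(M(x)) + O(1)$.

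Finally, the per-block overhead only drops below $\varepsilon|z|$ once $|z|$ exceeds a threshold depending on $\varepsilon$. This is exactly where the $\forall^\infty x$ enters: the set of $x$ with $\FScomp{k}(M(x))$ below any fixed bound is finite, since there are finitely many FSTs of size $\le k$ and finitely many short inputs, and information losslessness forces each resulting output–final-state pair to identify a unique $x$. For all but these finitely many $x$ the threshold is met and the inequality holds, which is precisely the claimed $\forall^\infty x$.
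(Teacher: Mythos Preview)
Your proposal is correct and follows essentially the same route as the paper. For part 1 the paper simply cites \cite{DotyM07}, and your composition argument is the standard one. For part 2 the paper does exactly what you outline: it takes an optimal program $p$ for $M(x)$, block-encodes it with blocks of size $b=\lceil 2/\varepsilon'\rceil$ using a flag bit per block (doubling the short final block rather than your unspecified handling of the remainder), appends a separator and the missing suffix of length at most $c$, and builds a decompressor that simulates $M^{-1}\circ A$ on the recovered $p$ and then copies the suffix; your justification of the $\forall^{\infty}x$ via the injectivity of $M$ is more explicit than the paper's, which simply notes the bound holds once $|p|$ is large.
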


\begin{proof}
The proof for part $1$ is in \cite{DotyM07}.

For part $2$, let $\varepsilon,k, x$ and $M$ be as stated in the lemma. Furthermore let $0<\varepsilon' < \varepsilon.$ By Theorem \ref{FS: Thm: inverse fst}, there exists an ILFST $M^{-1}$ and a constant $c\in \N$ such that for all $y \in \fbins$, $y\upharpoonright(|y|-c) \sqsubseteq M^{-1}(M(y))\sqsubseteq y.$ 

Let $p$ be a $k$-minimal program for $M(x)$, i.e. $A(p) = M(x)$ for $A \in \FSTsize{k}$ and $\FScomp{k}(M(x)) = |p|$.

Let $b = \lceil \frac{2}{\varepsilon'}\rceil$. There exits non-negative integers $n$ and $r$ such that $|p| = nb + r$, where $0\leq r< b$. Let $p'$ be a new string such that $p'$ begins with the first $nb$ bits of $p$, with a $0$ placed to separate every $b$ bits starting at the beginning of the string, followed by a $1$ and then the remaining $r$ bits of $p$ doubled. That is $$p' = 0p_1\ldots p_b0p_{b+1}\ldots p_{2b}0\ldots p_{nb}1p_{nb+1}p_{nb+1}\ldots p_{nb + r}p_{nb + r}.$$ Note therefore that \begin{equation}
    |p'| = n(b+1) + 2r + 1 = |p| + n + r + 1 < |p| + n + b + 1.\label{FS: Lemma: ILFST Machine less/greater: EQ1}
\end{equation}

\noindent
$nb \leq |p|$ means $n \leq \big\lceil \frac{|p|}{b}\big\rceil$ and so for $|p|$ large it holds that
\begin{align}
    |p'| &\leq |p| + \bigg\lceil \frac{|p|}{b}\bigg\rceil + b + 1 \leq |p| + 2\bigg\lceil \frac{|p|}{b}\bigg\rceil 
     = |p| + 2\bigg\lceil \frac{|p|}{\lceil \frac{2}{\varepsilon'}\rceil}\bigg\rceil \notag \\
     &\leq |p| + 2(\frac{\varepsilon' |p|}{2} + 1) 
     = |p|(1+\varepsilon') + 2 \notag \\
     &\leq |p|(1 + \varepsilon). \label{FS: Lemma: ILFST Machine less/greater: EQ2}
\end{align}

Next we build $A'$ for $x$. Let $y = M^{-1}M(x))$, i.e. $x = yz$ for some $|z| \leq c$. Let $A'$ be the machine such that on input $p'01z$: $A'$ uses $p'$ to simulate $A(p)$ to retrieve $M(x)$. $A'$ knows where $p'$ ends due to the $01$ separator. $A'$ takes $M(x)$'s output and simulates it on $M^{-1}$ to retrieve $y$. This is possible as the composition of ILFSTs can be computed by an ILFST. After seeing the separator, $A'$ acts as the identity transducer and outputs $z$. Thus $A'(p'01z) = yz = x$. Thus
\begin{equation}\FScomp{|A'|}(x) \leq |p'| + 2 + |z| \leq |p|(1 + \varepsilon) + 2 + c = \FScomp{k}(M(x)) + O(1). \label{FS: Lemma: ILFST Machine less/greater: EQ3}\end{equation}

As $|A'|$ depends on the size of $A$, the size of $M^{-1}$ and $b$ (i.e. it does not vary with $x$), $k'$ can be chosen such that $k' = |A'|.$

\end{proof}

Lemmas \ref{FS: Lemma: GEQ Lemma} and \ref{FS: Lemma: LEQ Lemma} below examine the $k$-finite-state complexity of substrings within a string on FSTs of roughly the same size. Lemma \ref{FS: Lemma: GEQ Lemma}'s proof relies on viewing FSTs with respect to our fixed binary representation $\sigma$. However, this has no impact on whether a sequence is deep or not by Lemma \ref{FS: Lemma: Deep for every binary rep}. Suppose we are given an FST $T$ and an input $vw$. If $T$ outputs $xy$ on reading $vw$ and $x$ on reading the prefix $v$, this means that $v$ is a description of $x$ and $vw$ is a description of $xy$ via $T$. We can alter $T$ to create a new transducer $T'$ such that the states and transitions of $T'$ and $T$ are the same, with the only difference being that the start state of $T'$ is the one $T$ ends in after reading $v$. This means that $w$ is a description of $y$ via $T'$.

\begin{lemma}
For our fixed binary representation $\sigma$ (from \ref{FS: EQ: Binary Rep}),
$$(k \geq 4)(\forall n \in \N)(\forall x,y,z \in \fbins) \, \FScomp{k}(xy^nz) \geq \FScomp{3k}(x) + n\FScomp{3k}(y) + \FScomp{3k}(z).$$
\label{FS: Lemma: GEQ Lemma}
\end{lemma}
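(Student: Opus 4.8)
The plan is to take an optimal decompressor for $xy^nz$ and cut its input into $n+2$ pieces, one per block, so that each piece is itself a program for its block with respect to a slightly larger transducer. Fix $T \in \FSTsize{k}$ and $p \in \fbins$ with $T(p) = xy^nz$ and $|p| = \FScomp{k}(xy^nz)$. Running $T$ on $p$, I would locate the input positions at which the emitted output first reaches the boundaries $|x|, |x| + |y|, \dots, |x| + n|y|$, record the states $q^{(0)} = q_0, q^{(1)}, \dots, q^{(n+1)}$ that $T$ occupies there, and cut $p$ at these input positions into $p = p_0 p_1 \cdots p_n p_{n+1}$. Since the cuts are made at input positions, $|p| = |p_0| + \sum_{i=1}^n |p_i| + |p_{n+1}|$, so it suffices to show $|p_0| \ge \FScomp{3k}(x)$, $|p_i| \ge \FScomp{3k}(y)$ for $1 \le i \le n$, and $|p_{n+1}| \ge \FScomp{3k}(z)$.

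Each such bound follows, as in the discussion just before the statement, by re-pointing the start state: let $T^{(i)}$ be the transducer with the same states, transitions and outputs as $T$ but with start state $q^{(i)}$, so that $T^{(i)}$ reads $p_i$ along exactly the run segment of $T$ that emitted the $i$-th block. The only change to the $\sigma$-description in \eqref{FS: EQ: Binary Rep} is to the initial-state pointer: $T$ is encoded as $d(\binf(j))\,01\,w$ with $w$ encoding the transition table of an $m$-state machine, and $T^{(i)}$ merely replaces $d(\binf(j))$ by $d(\binf(r))$ with $r \le m$. As any $\sigma$-description of an $m$-state machine must reference all $m$ states, $\lfloor\log m\rfloor \le |T| \le k$, so re-pointing changes the size only by the $O(\log m)$ pointer overhead; this is dominated by $|T|$ (the transition table alone costs far more than $\log m$ bits), so $|T^{(i)}| \le |T| + 2\lfloor\log m\rfloor + O(1) \le 3k$ once $k \ge 4$. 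Hence each $T^{(i)} \in \FSTsize{3k}$ witnesses the required inequality for its piece, and summing gives the lemma. This pointer-length bound is exactly where the factor $3$ originates.

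The step I expect to be the genuine obstacle is the assumption, implicit in that discussion, that the cuts can be placed \emph{exactly} on the block boundaries. Because $\nu$ may emit several bits on a single transition, the transition carrying the output past a boundary can straddle it, so that no input position yields output of length exactly $|x| + i|y|$. Cutting at the first position that reaches the boundary then makes $T^{(i)}(p_i)$ a cyclic shift of $y$ rather than $y$ itself: it is missing the short prefix already emitted by $p_{i-1}$ and carries a short suffix belonging to the next block. To repair this I would have $T^{(i)}$ prepend the missing (bounded-length) prefix on its first step and truncate the straddling suffix at its last step, isolating the single straddling transition with a constant number of auxiliary states. Verifying that these repairs only lengthen a bounded number of outputs and add only finitely many states, so that the $O(\log m)$ size accounting above is undisturbed and each repaired $T^{(i)}$ still lies in $\FSTsize{3k}$, is the crux of the argument; here the eventual periodicity of the pair (state, overshoot) along $y^n$ is what keeps the number of distinct repairs finite.
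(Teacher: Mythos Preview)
Your first two paragraphs are exactly the paper's proof: it takes an optimal $T\in\FSTsize{k}$ and program $p$ with $T(p)=xy^nz$, splits $p=p_xp_{y,1}\cdots p_{y,n}p_z$ at the block boundaries, re-points the start state to get transducers $T_w$ (your $T^{(i)}$), and bounds $|T_w|\le 2(\lfloor\log k\rfloor+1)+2+k\le 3k$ for $k\ge 4$ using precisely the pointer-overhead argument you give. The paper then concludes $\FScomp{3k}(y)\le\min_j|p_{y,j}|$ and sums.

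Regarding your third paragraph: the paper does \emph{not} address the straddling issue. It simply asserts the existence of $p_x,p_{y,1},\ldots,p_{y,n},p_z$ with $T(p_x)=x$ and $T(p_xp_{y,1}\cdots p_{y,j})=xy^j$ exactly, i.e.\ it takes clean output boundaries as given. So the obstacle you anticipate is real in general, but the paper's own proof glosses over it rather than repairing it; your proposed fix with auxiliary prepend/truncate states and the periodicity of the (state, overshoot) pair is already more careful than the published argument. If you want to match the paper, you can drop paragraph three entirely; if you want a fully rigorous proof, your repair is the right direction, though you should check that the added states (whose number is governed by the maximum single-transition output length, itself bounded since the whole $\Delta$-encoding fits in $k$ bits) still keep each $T^{(i)}$ within $\FSTsize{3k}$.
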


\begin{proof}
Let $k,n,x,y,z$ be as in the lemma. Let $T \in \FSTsize{k}$ and $p_x,p_{y,i},\ldots p_{y,n},p_z \in \fbins$ be such that $\FScomp{k}(xy^nz) = |p_xp_{y,1}\ldots p_{y,n}p_z|$ with $T(p_xp_{y,1}\ldots p_{y,n}p_z) = xy^nz,$ $T(p_xp_{y,1}\ldots p_{y,j}) = xy^j$ for $1 \leq j \leq n$, and $T(p_x) = x$.

For all $w\in \fbins$, let $T_w$ be the FST such that $T_w$'s states, transitions and outputs are the same as $T$'s with the only difference being that the start state of $T_w$ is the state that $T$ on input $w$ ends in.
So $T_{p_xp_{y,1} \ldots p_{y,j-1}}(p_{y,j}) = y$ and $T_{p_xp_{y,1} \ldots p_{y,n}}(p_z) = z$.

Next we put a bound on the binary description length of $T_w$. Recall that for $\sigma$, for an FST $M$, $\sigma(d(bin(n))01\pi) = M$ where $d(bin(n))$ is a pointer to $M$'s start state $q_n$, and $\pi$ describes the function $\Delta$ of $M$'s transitions and outputs. We write  $\Delta_M$ for the $\Delta$ function of $M$.

As $|T| \leq k$, $T$ has at most $k$ states. Therefore the pointer to $T_w$'s start state takes at most $2|\binf(k)| = 2(\lfloor \log k \rfloor + 1)$ bits to encode. Similarly, the encoding of $\Delta_{T}$ can be used to encode $\Delta_{T_w}$ and so the number of bits required to encode $\Delta_{T_w}$ is bounded above by $k$ bits also. Hence we have that whenever $k \geq 4$ \begin{equation}
    |T_w| \leq 2(\lfloor \log k \rfloor + 1) + 2 + k \leq 3k. \label{FS: Lemma: GEQ Lemma: EQ1}
\end{equation} 

\noindent Thus for $k \geq 4$, we have that $\FScomp{3k}(x) \leq |p_x|,$ $\FScomp{3k}(z)\leq |p_z|$ and $\FScomp{3k}(y) \leq |p_y|$ where $|p_y| = \min\big\{ |p_{y,j}| : 1 \leq j \leq n \big \}.$

Hence we have that
\begin{align*}
    \FScomp{k}(xy^nz)  = |p_xp_{y,1}\ldots p_{y,n}p_z| 
     \geq |p_x| + n|p_y| + |p_z| 
     \geq \FScomp{3k}(x) + n \FScomp{3k}(y) + \FScomp{3k}(z)
\end{align*}
as desired.

\end{proof}

\begin{remark} Note that for all $x \in \fbins,$ $\FScomp{k}(x) \leq \FScomp{k+1}(x)$. Hence while Lemma \ref{FS: Lemma: GEQ Lemma}'s result is only for $k \geq 4$, it gives us that for all $x,y,z$, $$\FScomp{1}(xy^nz) \geq \FScomp{2}(xy^nz) \geq  \FScomp{3}(xy^nz) \geq \FScomp{12}(x) + n \FScomp{12}(y) + \FScomp{12}(z).$$
\label{FS: Remark: k less than 4}
\end{remark}

\begin{remark}
Lemma \ref{FS: Lemma: GEQ Lemma} can be generalised such that for our fixed binary representation $\sigma$, we can break the input into any number of substrings to get a similar result. That is for any string $x = x_1\ldots x_n$,  $$(\forall^{\infty} k \in \N) \, \FScomp{k}(x_1 \ldots x_n) \geq \sum_{i=1}^n\FScomp{3k}(x_i).$$
\label{FS: Remark: any substring split}
\end{remark}

The following lemma states that for almost every pair of strings $x$ and $y$, given a description of $x$ and a description of $y$, a transducer $T$ can be built such that upon reading a padded version of the description of $x$, a flag, and the description for $y$, $T$ can output the string $xy$.

\begin{lemma}

$(\forall \varepsilon > 0)( \forall k \in \N)( \exists k' \in \N) (\forall^{\infty} x \in \fbins)(\forall y \in \fbins)$\,
$$\FScomp{k'}(xy) \leq (1+\varepsilon)\FScomp{k}(x) + \FScomp{k}(y) + 2.$$
\label{FS: Lemma: LEQ Lemma}
\end{lemma}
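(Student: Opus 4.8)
The plan is to build, for a fixed $\varepsilon>0$ and $k\in\N$, a single FST $T$ (whose size $k'$ depends only on $\varepsilon$ and $k$) which, on a suitably encoded concatenation of a minimal $k$-program for $x$ and a minimal $k$-program for $y$, outputs $xy$. Three ingredients are needed: (i) a self-delimiting padding of the $x$-program, reused from the proof of Lemma \ref{FS: Lemma: ILFST Machine less/greater}(2), so that $T$ can detect where the description of $x$ ends; (ii) a mechanism for $T$ to know \emph{which} size-$\leq k$ FSTs decode $x$ and $y$; and (iii) a length accounting showing that all encoding overhead is absorbed into the $(1+\varepsilon)$ factor for almost every $x$.

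For (ii), the key observation is that $\FSTsize{k}$ is a \emph{finite} set (there are at most $2^{k+1}$ descriptions of length $\leq k$), say $\FSTsize{k}=\{B_1,\dots,B_N\}$. I can therefore build $T$ to contain a simulator of each $B_i$ and prefix the two programs with fixed-length selector blocks of $s=\lceil\log N\rceil$ bits naming which $B_i$ decodes $x$ and which decodes $y$. Concretely, fix $0<\varepsilon'<\varepsilon$, set $b=\lceil 2/\varepsilon'\rceil$, and for (almost every) $x$ and every $y$ choose $A_x,A_y\in\FSTsize{k}$ and minimal programs $p_x,p_y$ with $A_x(p_x)=x$, $A_y(p_y)=y$, $|p_x|=\FScomp{k}(x)$, $|p_y|=\FScomp{k}(y)$. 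Let $p_x'$ be the self-delimiting padding of $p_x$, ending in a $01$ marker as in Lemma \ref{FS: Lemma: ILFST Machine less/greater}, so that $|p_x'|\leq(1+\varepsilon')|p_x|+2$. The input presented to $T$ is $\langle A_x\rangle\, p_x'\, 01\, \langle A_y\rangle\, p_y$, where $\langle\cdot\rangle$ denotes the $s$-bit selector. On this input $T$ first reads the $s$ selector bits, then decodes $p_x'$ on the fly (reading $b$ bits after each $0$-flag and, once it meets the $1$-flag, un-doubling the padded remainder until the $01$ marker) while feeding the recovered bits of $p_x$ to its simulation of $A_x$, thereby outputting $x$; upon seeing $01$ it reads the next $s$ selector bits and feeds the remaining input $p_y$ to its simulation of $A_y$, outputting $y$. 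Thus $T(\langle A_x\rangle p_x' 01 \langle A_y\rangle p_y)=xy$, and the size $k'=|T|$ depends only on $b$, $N$ and the maximum size of an FST in $\FSTsize{k}$, i.e.\ only on $\varepsilon$ and $k$.

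For (iii), the input length is $|w|=s+|p_x'|+2+s+|p_y|\leq(1+\varepsilon')\FScomp{k}(x)+\FScomp{k}(y)+(2s+4)$, so the claimed bound follows once $2s+2\leq(\varepsilon-\varepsilon')\FScomp{k}(x)$. Since $\FSTsize{k}$ is finite and each $B_i$ produces only finitely many outputs from inputs of any bounded length, the set $\{x:\FScomp{k}(x)\leq M\}$ is finite for every $M$; hence $\FScomp{k}(x)\to\infty$ and the inequality holds for all but finitely many $x$. Crucially the threshold on $\FScomp{k}(x)$ depends only on $s,\varepsilon,\varepsilon'$ and not on $y$, which is exactly what lets the single ``$\forall^\infty x$'' precede ``$\forall y$'' and hold for every $y$ simultaneously.

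The main obstacle is point (ii): unlike Lemma \ref{FS: Lemma: ILFST Machine less/greater}, where one fixed machine $M$ is given, here the optimal decoders for $x$ and for $y$ vary with the strings, and no single FST can simulate an arbitrary FST. The resolution rests entirely on the finiteness of $\FSTsize{k}$, which makes a universal-within-$\FSTsize{k}$ machine possible and bounds the selector overhead by a constant. Everything else — the self-delimiting padding and the absorption of the constant overhead into the $\varepsilon$-slack for large $\FScomp{k}(x)$ — is routine and parallels the earlier lemma.
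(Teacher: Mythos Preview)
Your proposal is correct and follows essentially the same line as the paper: pad a minimal $k$-program for $x$ into a self-delimiting $p_x'$ using the block-plus-doubling trick from Lemma~\ref{FS: Lemma: ILFST Machine less/greater}, append a two-bit separator and a minimal $k$-program for $y$, and simulate the two decoders in sequence.

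The one implementation difference is in how you handle the varying decoders $A_x,A_y\in\FSTsize{k}$. The paper builds a \emph{different} FST $T_{A,B}$ for each pair $(A,B)$, takes the input to be $p'10q$ with no selector bits, and simply observes that since $A,B\in\FSTsize{k}$, the size $|T_{A,B}|$ is bounded by some $k'$ depending only on $k$ and $b$; this already gives the bound with the exact constant $+2$. You instead build a single ``universal-within-$\FSTsize{k}$'' machine and pay $2s$ extra selector bits, which you then absorb into the $(\varepsilon-\varepsilon')$ slack for large $\FScomp{k}(x)$. Both routes work; the paper's variant avoids the second absorption step and yields the $+2$ directly, while yours makes the dependence of the machine on the decoders fully explicit (a point the paper's proof leaves implicit in the phrase ``$k'$ is a number whose value is dependent only on $k$ and $b$'').
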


\begin{proof}
Let $\varepsilon, x,y$ and $k$ be as stated in the lemma. Consider $p,q \in \fbins$ such that $\FScomp{k}(x) = |p|$ and $\FScomp{k}(y) = |q|$, and suppose $A,B \in \FSTsize{k}$ where $A(p) = x$ and $B(q) = y$.

Let $b = \lceil \frac{2}{\varepsilon}\rceil$. Then there exists integers $n$ an $r$ such that $|p| = nb + r$, where $0\leq r< b$. Let $p'$ be a new string such that $p'$ begins with the first $nb$ bits of $p$, with a $0$ placed to separate every $b$ bits starting at the beginning of the string. This is followed by a $1$ and the remaining $r$ bits of $p$ doubled, i.e. $$p' = 0p_1\ldots p_b0p_{b+1}\ldots p_{2b}0\ldots p_{nb}1p_{nb+1}p_{nb+1}\ldots p_{nb + r}p_{nb + r}.$$ Then by the same argument as in Lemma \ref{FS: Lemma: GEQ Lemma}, whenever $|p|$ is large enough we can arrive to the same result as in Equation \eqref{FS: Lemma: ILFST Machine less/greater: EQ2}, i.e. it holds that $|p'| \leq |p|(1 + \varepsilon)$. Another way of saying this holds only for large $|p|$ is when $\FScomp{k}(x)$ is large. Hence the $(\forall^{\infty}x \in \fbins)$ requirement in the statement of the lemma.

Next let $T \in \FSTsize{k'}$ where $k'$ is a number whose value is dependent only on $k$ and $b$ be the following FST such that on input $p'10q$: $M$ uses $p'$ to output $A(p)=x$. $M$ can spot the beginning bits of $p$ from the blocks of size $b$ by the $0$s. When $M$ sees the block beginning with $1$ it knows that the remaining bits will be the final bits of $p$ doubled. Upon reading $10$, $M$ uses the remaining bits to output $B(q)=y$. Therefore, for almost all $x$ and all $y$ it holds that
\begin{equation*}
    \FScomp{k'}(xy) \leq |p'| + |q| + 2 \leq (1+\varepsilon)\FScomp{k}(x) + \FScomp{k}(y) + 2.
\end{equation*}

\end{proof}

\section{Pushdown Depth}

This section presents our new notion of pushdown depth (PD-depth) based on pushdown compressors that differs from the notion in \cite{DBLP:conf/sofsem/JordonM20} and compares it with FS-depth. It is an almost everywhere notion. Our definition examines the difference between the performance between all ILUPDCs and an ILPDC $C'$. Intuitively, a sequence $S$ is pushdown deep if $S$ contains some structure which ILUPDCs cannot exploit during compresisson due to their stack restriction while $C'$ can.

\begin{definition}
Let $S \in \infbins$. $S$ is \emph{pushdown deep (PD-deep)} if  $$(\exists \alpha > 0)(\forall C \in \ILUPDC)(\exists C' \in \ILPDC)(\forall^{\infty} n \in \N),\,  |C(S \upharpoonright n)| - |C'(S \upharpoonright n)| \geq \alpha n.$$ \label{PD: Def: PD deep}
\end{definition}

\subsection{Basic Properties of Pushdown Depth}

The following theorem shows that PD-depth satisfies two of the fundamental depth properties in that both ILUPDC-trivial sequences (in the sense that $R_{\textrm{UPD}}(S) = 0$) and ILPDC-incompressible sequences (in the sense that $\rho_{\textrm{PD}}(S) = 1$) are not PD-deep. This is analogous to computable and Martin-L\"{o}f-random sequences being shallow in Bennett's original depth notion. 

\begin{theorem}
Let $S \in \infbins$.
\begin{enumerate}
    \item If $\rho_{\mathrm{PD}}(S) = 1$, then $S$ is not $\mathrm{PD}$-deep.
    \item If $R_{\mathrm{UPD}}(S) = 0$, then $S$ is not $\mathrm{PD}$-deep.
\end{enumerate}
\label{PD: Thm: Easy Hard}
\end{theorem}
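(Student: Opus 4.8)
The plan is to prove both parts directly from the definitions by exhibiting, for the negation of PD-depth, a single ILUPDC $C$ that defeats every competing ILPDC $C'$. To show $S$ is not PD-deep I must establish the formal negation of Definition \ref{PD: Def: PD deep}, namely
$$(\forall \alpha > 0)(\exists C \in \ILUPDC)(\forall C' \in \ILPDC)(\exists^{\infty} n \in \N)\, |C(S \upharpoonright n)| - |C'(S \upharpoonright n)| < \alpha n.$$
In each part I would fix an arbitrary $\alpha > 0$ and produce the witness $C$ by reading off the hypothesis.

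For part $1$, I would take $C$ to be the identity ILUPDC (the unary-stack analogue of $I_{\mathrm{PD}}$ that echoes its input without using the stack), so that $|C(S \upharpoonright n)| = n$ for every $n$. The key observation is that $\rho_{\mathrm{PD}}(S) = 1$, being an infimum over all ILPDCs equal to $1$, forces $\rho_{C'}(S) \geq 1$ for every $C' \in \ILPDC$: if some compressor had liminf ratio strictly below $1$, the infimum would drop below $1$. Thus for each fixed $C'$ we have $\liminf_{n} |C'(S \upharpoonright n)|/n \geq 1$, so for almost every $n$ it holds that $|C'(S \upharpoonright n)| > (1 - \alpha) n$, whence $|C(S \upharpoonright n)| - |C'(S \upharpoonright n)| = n - |C'(S \upharpoonright n)| < \alpha n$. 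Since this holds for almost every $n$ it holds for infinitely many, completing the negation.

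For part $2$, I would again fix $\alpha > 0$, but now use $R_{\mathrm{UPD}}(S) = 0$ to select the compressor: since this infimum is $0$, there exists $C \in \ILUPDC$ with $R_C(S) < \alpha$, i.e. $\limsup_{n} |C(S \upharpoonright n)|/n < \alpha$, so $|C(S \upharpoonright n)| < \alpha n$ for almost every $n$. Because every ILPDC output length is nonnegative, $|C(S \upharpoonright n)| - |C'(S \upharpoonright n)| \leq |C(S \upharpoonright n)| < \alpha n$ for every $C' \in \ILPDC$ and almost every $n$, again yielding the required infinitely many $n$.

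These arguments are essentially bookkeeping on the definitions, so there is no serious obstacle. The only point needing care is the correct reading of the quantifier nesting in Definition \ref{PD: Def: PD deep} and its negation, together with the interpretation of the two hypotheses: $\rho_{\mathrm{PD}}(S) = 1$ is a statement about \emph{every} individual ILPDC (none beats ratio $1$ in liminf), whereas $R_{\mathrm{UPD}}(S) = 0$ asserts the \emph{existence} of a single good ILUPDC with limsup ratio below any prescribed $\alpha$.
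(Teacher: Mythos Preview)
Your proposal is correct and follows essentially the same approach as the paper: the identity ILUPDC witnesses part~1, and a sufficiently good ILUPDC witnesses part~2, with the bounds then following by routine $\varepsilon$-$n$ manipulation of the liminf/limsup hypotheses. If anything your part~2 is slightly more careful, since you pick $C$ with $R_C(S)<\alpha$ rather than assuming the infimum $R_{\mathrm{UPD}}(S)=0$ is actually attained by some single $C$.
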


\begin{proof}

Let $S \in \infbins$ be such that $\rho_{\mathrm{PD}}(S) = 1$. Therefore for every $\alpha > 0$ and every $C \in \mathrm{ILPDC}$, for almost every $n$ \begin{equation}|C(S \upharpoonright n)| > n(1 - \alpha). \label{PD: Thm: Easy Hard: EQ1}\end{equation}

\noindent Then for almost every $n$
\begin{equation}|I_{\mathrm{PD}}(S \upharpoonright n)| - |C(S \upharpoonright n)| < n - n(1-\alpha) = \alpha n.\label{PD: Thm: Easy Hard: EQ2}\end{equation} As $\alpha$ is arbitrary and $I_{\mathrm{PD}} \in \ILUPDC$, $S$ is therefore not PD-deep.

Next suppose $S \in \infbins$ is such that $R_{\mathrm{UPD}}(S) = 0$. Let $C \in \ILUPDC$ be such that $\limsup\limits_{n \to \infty} 
|C(S \upharpoonright n)|/n = 0.$ Hence for every $\beta > 0$ and almost every $n$, \begin{equation}|C(S \upharpoonright n)| < \beta n.\label{PD: Thm: Easy Hard: EQ3}\end{equation}  

\noindent Therefore for every $C' \in  \ILPDC$, it holds that for almost every $n$ \begin{equation}
    |C(S \upharpoonright n)| - |C'(S \upharpoonright n)| \leq |C(S \upharpoonright n)| < \beta n.\label{PD: Thm: Easy Hard: EQ4}\end{equation}
As $\beta$ is arbitrary, $S$ is not PD-deep.

\end{proof}

Before we prove a slow growth law for pushdown depth, we first demonstrate that the composition of any ILPDC (or ILUPDC) $C$ with any ILFST $T$ can be simulated by another ILPDC (or ILUPDC) $N$ which is allowed to perform more $\lambda$-transitions than $C$.

\begin{lemma} 
Given $C \in \ILPDC$ (similarly $C \in \ILUPDC$) and $T \in  \ILFST$, we can build an $\ILPDC$ (similarly an $\ILUPDC$) $N$, such that $\forall x \in \fbins$, $N(x) = C(T(x))$.
\label{PD: Lemma: PD FST Composition}
\end{lemma}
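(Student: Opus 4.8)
The plan is to give a product construction in which $N$ simulates $T$ on its input and feeds the resulting string into $C$ on the fly, using a bounded internal buffer to reconcile the fact that $T$ may emit several output bits (or none) per input bit while $C$ consumes one bit at a time. I would have $N$ carry in its finite control a triple $(p,q,\beta)$ recording the current state $p$ of $T$, the current state $q$ of $C$, and a buffer $\beta$ holding the suffix of $T$'s output that has been produced but not yet supplied to $C$. The stack of $N$ is exactly the stack of $C$, so $N$ inherits $C$'s stack alphabet; in particular if $C$ is an \ILUPDC{} then so is $N$, which disposes of both cases at once. Since $\nu$ of $T$ produces a fixed string on each state/bit pair and $T$ has finitely many states, there is a maximal per-step output length $L$, so $\beta$ ranges over a finite set and $N$ has finitely many states.

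First I would specify two kinds of moves. When $\beta = \lambda$, $N$ reads the next input bit $b$, advances $T$ via $p \mapsto \delta_T(p,b)$, sets the buffer to the string $T$ outputs on $b$, emits nothing, and leaves the stack unchanged (pop the top symbol and immediately push it back, a legal move that never removes $z_0$). When $\beta \neq \lambda$, $N$ performs $\lambda$-transitions that replay $C$'s computation on the buffered bits: each such move carries out one step of $C$, namely either one of $C$'s $\lambda$-pops or the reading of the leading bit of $\beta$ (which is then deleted from $\beta$), updating $q$, performing the corresponding stack operation, and emitting $C$'s output. Applying $C$'s $\lambda$-closure greedily after each consumed bit, $N$ drains $\beta$ and returns to a state with $\beta = \lambda$, where it reads the next input bit. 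Determinism holds because empty-buffer states only read and have no $\lambda$-move, while nonempty-buffer states only do $\lambda$-moves, so the model's restriction is met for every state/stack-top pair.

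The key quantitative point I would check is that $N$'s number of consecutive $\lambda$-transitions is bounded by a constant. Draining a buffer of length at most $L$ costs at most one reading step of $C$ per buffered bit together with at most $c$ of $C$'s $\lambda$-pops after each, so at most $L(c+1)$ of $N$'s $\lambda$-moves occur between two successive input reads; taking $N$'s bound to be this constant makes $N$ a legitimate (U)PDC. Because in this model the output and final state of a compressor already incorporate the $\lambda$-closure performed after the last input bit, reading all of $x$ triggers a final closure that flushes the residual buffer entirely into $C$; hence $N(x)$ is $C$ applied to all of $T(x)$, giving $N(x) = C(T(x))$, while the final state of $N$ records $\widehat{\delta}_T(x)$ together with $C$'s final state on $T(x)$ and an empty buffer. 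A short induction on $|x|$ confirms the simulation is faithful.

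Finally I would verify information losslessness by composing the two IL hypotheses. From the pair $(N(x),\delta_{Q,N}(x))$ one reads off $C(T(x))$ and $C$'s final state on $T(x)$; since $C$ is IL this recovers $T(x)$, and reading off $\widehat{\delta}_T(x)$ as well, the IL-ness of $T$ then recovers $x$. Thus $x \mapsto (N(x),\delta_{Q,N}(x))$ is injective and $N \in \ILPDC$ (resp.\ $\ILUPDC$). The main obstacle is the structural point in the first two paragraphs: because $T$ is not length-preserving, a naive synchronous product fails, and the real work is organising the buffer so that the burst of $C$-steps triggered by a single input bit is simulated by a bounded block of $N$'s $\lambda$-transitions, all while keeping the automaton deterministic and the $\lambda$-transition bound finite.
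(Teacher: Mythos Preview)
Your product-with-buffer idea and your IL argument are both sound in spirit, but the buffer-draining step does not fit the PDC model defined in this paper. Here a $\lambda$-transition is required to have the form $\delta(q,\lambda,a)=(q',\lambda)$: it pops exactly one stack symbol, pushes nothing, and by the definition of the extended output function satisfies $\nu(q,\lambda,a)=\lambda$, so it emits nothing. Your nonempty-buffer states are supposed to ``replay $C$'s computation'' on a buffered bit via $\lambda$-moves, but if $C$ on that bit would push onto its stack or produce output, $N$ has no way to realise either action through a $\lambda$-transition. Since your empty-buffer bit-reading transitions explicitly leave the stack unchanged and emit nothing, under your scheme $N$ would never grow its stack and never output at all.

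The paper's construction sidesteps this by inverting the two phases. \emph{Before} reading an input bit, $N$ performs up to $cd$ $\lambda$-transitions (with $d=\max_{q,b}|\nu_T(q,b)|$) to pop the top of the stack into its finite state; these are pure pops, as the model demands. Then, on the single bit-reading transition, $N$ already holds in its finite memory enough of the stack to simulate $C$'s entire run on the string $\nu_T(q_T,b)$ of length at most $d$: it emits the concatenated outputs of $C$, moves to the appropriate $(q_C',q_T',\lambda)$ state, and pushes back in one step whatever $C$'s stack would now contain above the unpopped portion. All pushing and all output happen during that one reading step, where the model places no such restriction. Your write-up is easily repaired along these lines, but the pre-popping trick is the missing idea.
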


\begin{proof}

Let $T = (Q_T,q_{0,T},\delta_T,\nu_T)$ and $C = (Q_C, \Gamma_C,\delta_C,\nu_C,q_{0,C},z_0,c)$ be an ILFST and an ILPDC respectively as stated in the lemma. Let $d = \max \{ |T(q,b)| : q \in Q_T, b \in \bin \}$ denote the longest output possible from a transition in $T$. We build the PDC $N = (Q_N,\Gamma_C,\delta_N,\nu_N,q_{0,N},z_0,cd),$ where
\begin{itemize}
    \item $Q_N = Q_C \times Q_T \times S,$ where $S = \bin^{\leq cd}$,
    \item $q_{0,N} = (q_{0,C},q_{0,T},\lambda).$
\end{itemize}

$N$ works as follows: Before reading a bit, $N$ uses $\lambda$-transitions to pop the topmost $cd$ bits of its stack, or until the stack only contains $z_0$, and remembers them in its states. That is, while $|y| < cd$ and $a \neq z_0,$ $$\delta_N((q_C,q_T,y),\lambda,a) = ((q_C,q_T,ya), \lambda).$$ On such states, $$\nu_N((q_C,q_T,y),\lambda,a) = \lambda.$$

Then for $b \in \bin$, if $a = z_0$ or $|y| = cd$, $N$ moves to the state representing how $C$ would move on input $\nu_T(q_T,b)$, how $T$ would move on input $b$, and to the state representing not having the topmost stack bits in memory. $N$'s stack then updates to be the same as $C$'s would be as if it had read $\nu_T(q_T,b)$. That is, $$\delta((q_C,q_T,y),b,a) = ((\delta_{C,Q}(q_C,\nu_T(q_T,b),ya),\delta_{T,Q}(q_T,b),\lambda),xa)$$ where for some $w \in \fbins$ either
\begin{enumerate}
    \item $x = wy$, if $C$ would have pushed $w$ onto its stack reading $\nu_T(q_T,b)$,
    
    \item $x = wy[i \ldots |y|-1]$, if $C$ would have popped off the top $i$ symbols and then pushed $w$ onto its stack reading $\nu_T(q_T,b)$,
    
    \item $x =y[i \ldots |y|-1]$, if $C$ would have popped off the top $i$ symbols from its stack and pushed nothing on when reading $\nu_T(q_T,b)$.
\end{enumerate} 

\noindent As there are only a finite number of possibilities, these can all be coded into the states and transitions. On such states, $$\nu_N((q_C,q_T,y),b,a) = \nu_C(q_C,\nu_T(q_T,b),ya).$$

$N$ is an ILPDC as from knowledge of the output and $q_C$, we can recover $T(x)$ as $C$ is IL, and from $q_T$ and $T(x)$ we can recover $x$ as $T$ is IL.

Note that if $C \in \ILUPDC$, the proof remains the same except $S = \{0\}^{\leq cd}$ and $w$ mentioned above is an element of $\{0\}^*$.

\end{proof}

The following result shows that pushdown depth satisfies a slow growth law.

\begin{theorem}[Slow Growth Law]
Let $S\in \infbins$, let $g:\infbins \rightarrow \infbins$ be ILFS computable and let $S' = g(S)$. If $S'$ is PD-deep then $S$ is PD-deep.
\label{PD: Thm: SGL}
\end{theorem}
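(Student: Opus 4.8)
The plan is to transport the compression gap witnessing the PD-depth of $S'$ back to $S$ by pre-composing compressors with the \ILFST{} computing $g$ and with its finite-state inverse. Write $T \in \ILFST$ for the transducer with $T(S) = S'$ guaranteed by Definition \ref{FS: Def: ILFS Computable}, and let $T^{-1} \in \ILFST$ together with the constant $c \in \N$ be the inverse transducer from Theorem \ref{FS: Thm: inverse fst}, so that $(S \upharpoonright n)\upharpoonright(n-c) \sqsubseteq T^{-1}(T(S \upharpoonright n)) \sqsubseteq S \upharpoonright n$ for every $n$. Let $\alpha > 0$ be the PD-depth level of $S'$. I want to produce a single $\alpha' > 0$ so that for an \emph{arbitrary} $C \in \ILUPDC$ I can exhibit $C' \in \ILPDC$ beating it on almost every prefix of $S$, which is exactly Definition \ref{PD: Def: PD deep} for $S$.

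Fix $C \in \ILUPDC$. First I would feed $C$ the $S'$-picture of its input: by Lemma \ref{PD: Lemma: PD FST Composition} applied to the ILFST $T^{-1}$ and $C$, there is an \ILUPDC{} $D$ with $D(y) = C(T^{-1}(y))$ for all $y \in \fbins$. Applying the PD-depth hypothesis for $S'$ to $D$ yields a $D' \in \ILPDC$ with $|D(S' \upharpoonright m)| - |D'(S' \upharpoonright m)| \geq \alpha m$ for almost every $m$. I then push $D'$ back onto $S$: by Lemma \ref{PD: Lemma: PD FST Composition} applied to $T$ and $D'$, there is a $C' \in \ILPDC$ with $C'(x) = D'(T(x))$. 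This $C'$ is my candidate competitor against $C$ on $S$.

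Now I would chain the inequalities while tracking prefix lengths. Fix $n$ and set $m = m(n) = |T(S \upharpoonright n)|$; since $T(S\upharpoonright n) \sqsubseteq S'$ this gives $T(S \upharpoonright n) = S' \upharpoonright m$. Hence $C'(S \upharpoonright n) = D'(S' \upharpoonright m)$, while $D(S' \upharpoonright m) = C(T^{-1}(T(S \upharpoonright n))) = C(S \upharpoonright n')$ for some $n - c \leq n' \leq n$ by Theorem \ref{FS: Thm: inverse fst}. Substituting into the $S'$-gap gives $|C(S \upharpoonright n')| - |C'(S \upharpoonright n)| \geq \alpha m$. Because a PDC only appends to its output, $C(S\upharpoonright n') \sqsubseteq C(S\upharpoonright n)$ and so $|C(S \upharpoonright n)| \geq |C(S \upharpoonright n')|$, whence $|C(S \upharpoonright n)| - |C'(S \upharpoonright n)| \geq \alpha m$. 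It remains to replace $\alpha m$ by a constant multiple of $n$: since $T^{-1}$ reads $m$ bits and outputs $n' \geq n-c$ bits, and an ILFST emits at most a bounded number $L$ of bits per input bit, we get $n - c \leq n' \leq Lm$, i.e.\ $m \geq (n-c)/L$. Taking $\alpha' = \alpha/(2L)$ then yields $|C(S \upharpoonright n)| - |C'(S \upharpoonright n)| \geq \alpha' n$ for all large $n$. Finally, since $g$ is ILFS computable we have $m(n) \to \infty$, so the almost-everywhere-in-$m$ bound transfers to an almost-everywhere-in-$n$ bound, and as $\alpha'$ does not depend on $C$, this establishes that $S$ is PD-deep.

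The two applications of the composition lemma are routine; the delicate point is the bookkeeping of prefix lengths. I expect the main obstacle to be the linear lower bound $m \geq (n-c)/L$: one must argue that the finite-state image $T(S \upharpoonright n)$ is not asymptotically shorter than a fixed fraction of $n$, which I would extract from information-losslessness (the inverse $T^{-1}$ recovers all but $c$ bits of $S \upharpoonright n$ while expanding by at most the constant factor $L$) together with the $m(n) \to \infty$ guarantee built into ILFS computability. The constant offset between $n'$ and $n$ likewise has to be absorbed, and I would handle it purely by monotonicity of the PDC output rather than by any estimate on the adversary's compressor $C$.
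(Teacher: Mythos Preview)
Your proposal is correct and follows essentially the same approach as the paper: compose the given $C \in \ILUPDC$ with $T^{-1}$ via Lemma \ref{PD: Lemma: PD FST Composition} to obtain an \ILUPDC{} on $S'$, invoke PD-depth of $S'$ to get a good \ILPDC{} there, and compose back with $T$ to obtain $C'$ on $S$. The only cosmetic difference is that the paper derives the linear lower bound $m \geq \beta n$ directly from the IL property of $T$ (no state repeats without output), whereas you obtain $m \geq (n-c)/L$ from the bounded per-bit expansion of $T^{-1}$; both routes are equally valid and yield a uniform $\alpha'$ independent of $C$.
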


\begin{proof}
Let $S, S',f$ and $g$ be as in the statement of the lemma and $T$ be the ILFST computing $g$.

For all $n\in \N$ such that $T(S \upharpoonright m) = S' \upharpoonright n$ for some $m$, let $m_n$ denote the largest integer such that $T(S \upharpoonright m_n) = S' \upharpoonright n$. Note that for all $m$, there exists an $n$ such that $M(S \upharpoonright m_{n-1}) \sqsubset M(S \upharpoonright m) \sqsubseteq M(S \upharpoonright m_n).$ As $M$ is IL, it cannot visit the same state twice without outputting at least one bit, so there exists a $\beta > 0$ such that for such $n$, $n\geq \beta m_n$. Furthermore recall from Theorem \ref{FS: Thm: inverse fst} that there exists an ILFST $T^{-1}$ and a constant $a$ such that for all $x \in \fbins$, $x \upharpoonright (|x|-a) \sqsubseteq T^{-1}(T(x)) \sqsubseteq x.$

Let $C \in \ILUPDC$. Let $N$ be the $\ILUPDC$ given by Lemma \ref{PD: Lemma: PD FST Composition} such that $N(x) = C(T^{-1}(x))$ for all $x$. Note that for some $n$, 
\begin{align}
|C(S \upharpoonright m)| &\geq |C(T^{-1}(T(S\upharpoonright m))|
= |N(T(S \upharpoonright m))| \notag\\
&= |N(T(S \upharpoonright m_n))| = |N(S' \upharpoonright n)|. \label{PD: Thm: SGL: EQ1}
\end{align}
\noindent As $S'$ is deep, there exists $\alpha > 0$ and an ILPDC $N'$ such that for almost every $m$, \begin{equation}
|N(S' \upharpoonright n)| - |N'(S' \upharpoonright n)| \geq \alpha n.\label{PD: Thm: SGL: EQ2}
\end{equation}

Next let $C'$ be the $\ILPDC$ given by Lemma \ref{PD: Lemma: PD FST Composition} such that on input $x$, $C'(x) = N'(T(x))$ for all $x$. Hence for some $n$, \begin{align}
    |C'(S \upharpoonright m)| = |N'(T(S \upharpoonright m))| = |N'(T(S \upharpoonright m_n))| = |N'(S' \upharpoonright n)|. \label{PD: Thm: SGL: EQ3}
\end{align}
Therefore for almost every $m \in \N$, there exists some $n$ such that
\begin{align}
    |C(S \upharpoonright m)| - |C'(S \upharpoonright m)| & \geq |N(S' \upharpoonright n)| - |N'(S' \upharpoonright n)| \tag{by \eqref{PD: Thm: SGL: EQ1} and \eqref{PD: Thm: SGL: EQ3}}\\
    & \geq \alpha n \tag{by \eqref{PD: Thm: SGL: EQ2}} \\
    & \geq \alpha \beta m_n \geq \alpha \beta m.
\end{align}
Hence $S$ is PD-deep.

\end{proof}

\subsection{Separation from Finite-State Depth}

Prior to comparing pushdown depth with Doty and Moser's i.o. finite-state depth, we require the following definition which defines the \textit{depth-level} of a sequence. Simply put, the depth-level of a deep sequence is the $\alpha$ term in Definitions \ref{FS: Def: io fs deep} and \ref{PD: Def: PD deep}.

\begin{definition}
Let $S \in \infbins$. Let $\alpha > 0$.
\begin{enumerate}
    \item We say that FS-\textit{depth}$(S) \geq \alpha$ if $$(\forall k \in \N)(\exists k' \in \N)(\exists^{\infty} n \in \N) \, \FScomp{k}(S \upharpoonright n) - \FScomp{k'}(S \upharpoonright n) \geq \alpha n.$$
    Otherwise we say FS-\textit{depth}$(S) < \alpha$.
    
    \item  We say that PD-\textit{depth}$(S) \geq \alpha$ if $$(\forall C \in \ILUPDC)(\exists C' \in \ILPDC)(\forall^{\infty} n \in \N) \, |C(S \upharpoonright n)| - |C'(S \upharpoonright n)| \geq \alpha n.$$
    Otherwise we say PD-\textit{depth}$(S) < \alpha$.
\end{enumerate}
\label{PD: Def: Depth Level}
\end{definition}

The following result demonstrates the existence of a sequence which has a large FS-depth level but not even a small PD-depth level. This sequence is composed of chunks of random strings which grow exponentially. Some of these chunks are composed of repetitions of random strings which small FSTs cannot identify while larger FSTs can, resulting in finite-state depth. Other chunks $x$ are such that $K(x) \geq |x|$ preventing the sequence being PD-deep. The construction takes advantage of the fact that one is an i.o. notion (Moser and Doty's notion) while the other is an a.e. notion.

\begin{theorem}
There exists a sequence $S$ such that for all $0 < \alpha < 1$, $\mathrm{FS}$-depth(S) $> (1-\alpha)$ and $\mathrm{PD}$-depth(S) $< \alpha$.
\label{PD: Thm: fs not PD}
\end{theorem}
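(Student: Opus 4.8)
The plan is to construct $S$ explicitly as a concatenation of blocks whose lengths grow exponentially, alternating between two types of block that serve opposite purposes. The core tension is that we need $S$ to have high i.o. finite-state depth (an infinitely-often property) but essentially no a.e. pushdown depth (whose failure is also an i.o. property, namely that for some ILUPDC $C$ the gap $|C(S\upharpoonright n)|-|C'(S\upharpoonright n)|$ fails to stay above $\alpha n$ for almost all $n$). Because both the guarantee we want and the obstruction we want are infinitely-often statements, interleaving blocks lets both hold simultaneously along different subsequences of prefix lengths.

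\textbf{Construction.} I would fix a rapidly growing sequence of lengths, say $S = w_1 w_2 w_3 \cdots$ where $|w_i|$ dominates $\sum_{j<i}|w_j|$ (e.g.\ $|w_i| = 2^{2^i}$), so that at the end of each block the block itself constitutes almost the entire prefix read so far. Odd-indexed blocks $w_{2i-1}$ are designed to create finite-state depth: following the idea behind Lemma \ref{FS: Lemma: GEQ Lemma}, take $w_{2i-1} = r_i^{m_i}$ where $r_i$ is an incompressible string (one with $K(r_i)\geq |r_i|$) and $m_i$ is large. A small FST cannot detect the period $|r_i|$ and so needs roughly $|r_i^{m_i}|$ bits to describe it, whereas a larger FST that ``knows'' $r_i$ can describe $r_i^{m_i}$ using about $|r_i| + O(\log m_i)$ bits, giving a compression gap of order $(1-\alpha)|w_{2i-1}|$ via the lower bound in Lemma \ref{FS: Lemma: GEQ Lemma} against the upper bound in Lemma \ref{FS: Lemma: LEQ Lemma}. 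Even-indexed blocks $w_{2i}$ are taken to be Kolmogorov-random strings with $K(w_{2i})\geq |w_{2i}|$; these are incompressible even by full ILPDCs, so no ILPDC can beat the identity ILUPDC on prefixes ending inside or at the end of such a block.

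\textbf{Verifying the two bounds.} For $\mathrm{FS}$-depth$(S) > (1-\alpha)$: I would evaluate the depth condition at prefix lengths $n$ ending exactly at the close of an odd block $w_{2i-1}$. There the prefix is (earlier blocks)$\cdot r_i^{m_i}$, the earlier part is negligible in length, and Lemma \ref{FS: Lemma: GEQ Lemma} forces any size-$k$ FST to use $\approx m_i |r_i| \cdot \FScomp{3k}(r_i)/|r_i|$ while a tailored larger FST of size $k'$ achieves $\approx |r_i|+O(\log m_i)$, yielding a difference exceeding $(1-\alpha)n$ for infinitely many $n$, uniformly in $k$ once $r_i$ is chosen incompressible and long. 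For $\mathrm{PD}$-depth$(S) < \alpha$: I would use the identity ILUPDC $I_{\mathrm{PD}}$ as the witnessing $C$ and show that no single ILPDC $C'$ can maintain $|C(S\upharpoonright n)|-|C'(S\upharpoonright n)|\geq \alpha n$ for \emph{almost every} $n$, by exhibiting infinitely many $n$ where the gap fails. Evaluating at the ends of even (random) blocks, incompressibility of $w_{2i}$ forces $|C'(S\upharpoonright n)| \geq n - o(n)$ — here one must argue that an ILPDC cannot compress a Kolmogorov-random suffix below its length minus a small term, which follows because an ILPDC with a fixed finite description composed with its decompressor would otherwise give a too-short program for the random block.

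\textbf{Main obstacle.} The delicate point is the pushdown lower bound: showing that on prefixes ending in the random blocks, \emph{every} ILPDC $C'$ fails to gain $\alpha n$ over the identity, despite $C'$ having a stack it could exploit on the earlier (periodic) portion of the prefix. The worry is that $C'$ might compress the long odd block $r_i^{m_i}$ sitting earlier in the prefix and bank enough savings there to keep the running gap above $\alpha n$ even across the random block. I would neutralise this by choosing the block lengths so that each even block $w_{2i}$ is overwhelmingly longer than everything preceding it; then even if $C'$ compressed the entire earlier prefix to nothing, the incompressible block $w_{2i}$ alone forces $|C'(S\upharpoonright n)| \geq |w_{2i}| - O(1) \geq (1-\alpha')n$, killing the gap. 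Making this counting rigorous — relating the ILPDC output length on the random block to its Kolmogorov complexity via the information-lossless property and a fixed-size decompressor — is the step I expect to require the most care, together with checking that the finite-state depth argument is genuinely uniform over all initial block-size choices $k$.
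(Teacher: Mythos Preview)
Your overall architecture matches the paper's: interleave periodic blocks $r^{m}$ (to create FS-depth along one subsequence of prefix lengths) with Kolmogorov-random blocks (to defeat PD-depth along another), with each block dominating the total length of its predecessors. The PD-depth half is essentially right, and the paper argues it the same way: at the end of a random block $S_j$, the tuple consisting of $C'$'s description, the start and end states, the (logarithmically short) preceding prefix $\overline{S_{j-1}}$, and $C'$'s output on $S_j$ recovers $S_j$ by the IL property, forcing that output to have length at least $|S_j|-O(\log|S_j|)$ and hence $|C'(\overline{S_j})|>(1-\alpha)|\overline{S_j}|$.

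The FS-depth half has a genuine quantifier problem. You use a fresh period $r_i$ in each periodic block and say a ``larger FST that knows $r_i$'' achieves the upper bound. But in the definition of FS-depth the witness size $k'$ is chosen after $k$ and \emph{before} $n$: a single $k'$ must serve infinitely many prefixes. An FST that hard-codes $r_i$ has size growing with $|r_i|$, so if $|r_i|\to\infty$ no fixed $k'$ handles infinitely many blocks; conversely if the $|r_i|$ stay bounded then once $k$ exceeds the size needed to encode every $r_i$ the lower bound $\FScomp{k}(S\upharpoonright n)\approx n$ collapses. Either way $\FScomp{k}-\FScomp{k'}=o(n)$ and no depth is produced. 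The paper repairs this by dovetailing: for each $k$ it fixes one string $r_k$ (of length $|I_{2^k}|$) that is $3k$-FS-random, and devotes \emph{infinitely many} even-indexed intervals to repetitions of that same $r_k$, via the schedule $j=2^k+t\cdot 2^{k+1}$ for $t\geq 0$. Then one may take $k'$ large enough to contain the single-state transducer $T_{r_k}$ that emits $r_k$ per input bit, and this fixed $k'$ witnesses the gap on every interval devoted to $k$. This scheduling is the missing ingredient in your sketch; once you add it, your argument becomes the paper's.
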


\begin{proof}
Let $0 < \alpha < 1.$ Begin by partitioning the non-negative integers into disjoint consecutive intervals $I_1,I_2,\ldots$ where $|I_1| = 2$ and $|I_j| = 2^{|I_1| + \cdots + |I_{j-1}|}$ for $j \geq 2.$ For instance, $I_1 = \{0,1\},$ $I_2 = \{2,3,4,5\}$, and $I_3 = \{6,7,\ldots, 69\}.$ For each $I_j$, set $m_j = \min(I_j)$ and $M_j = \max(I_j).$ Note that for all $j$, $m_{j+1} = M_j + 1$. $S$ is constructed in stages $S_1S_2S_3 \ldots$, where $S_j$ denotes the substring $S[m_j..M_j]$ . For all $j$, we henceforth use the notation $\overline{S_j}$ to denotes the prefix $S_1\cdots S_j$ of $S$. Note that $|S_j| = |I_j|$ and when $j > 1$ we have that $\log(|S_j|) = |\overline{S_{j-1}}|$.

For every interval $I_j$ where $j$ is odd, we set $S_j$ to be a string with maximal plain Kolmogorov complexity in the sense that $K(S_j) \geq |S_j|$. If $j$ is even, $I_j$ is devoted to some FST description bound length $k$. Specifically for each $k$, $k$ is devoted to every interval $I_j$ where $j$ is of the form $j = 2^k + t(2^{k+1}),$ for $t \geq 0$. So $k=1$ is first devoted to $I_2$ and every $4^\thh$ interval after that and $k=2$ is first devoted to $I_4$ and every $8^\thh$ interval after that, and so on. 

For each $k$, let $r_k$ be a string of length $|I_{2^k}|$ such that $r_k$ is $3k$-FS random in the sense that \begin{equation}
    \FScomp{3k}(r_k) \geq |r_k| - 4k. \label{PD: Thm: fs not PD: EQ1}
\end{equation} Such a string exists as $|\FSTsize{3k}|\cdot 2^{|r_k| - 4k} < 2^{|r_k|}$ for $k \geq 1$. Then if $I_j$ is devoted to $k$ we set 
\begin{align}
    S_j = r_k^{\frac{|I_j|}{|r_k|}} = r_k^{2^{t(2^{k+1})}}. \label{PD: Thm: fs not PD: EQ2}
\end{align}

First we show $\mathrm{FS}$-depth$(S) > (1-\alpha)$. Fix some $k \geq 4$ (this allows us to apply Lemma \ref{FS: Lemma: GEQ Lemma}). We consider prefixes of the form $\overline{S_j}$ of $S$ where interval $I_j$ is devoted to $k$. We first find a lower bound for the $k$-finite-state complexity of $\overline{S_j}$. We have that \begin{align}
    \FScomp{k}(\overline{S_j}) & \geq \FScomp{3k}(\overline{S_{j-1}}) + \frac{|S_j|}{|r_k|}\FScomp{3k}(r_k) \tag{by Lemma \ref{FS: Lemma: GEQ Lemma}} \\
    & \geq \frac{|S_j|}{|r_k|}(|r_k| - 4k). \tag{by \eqref{PD: Thm: fs not PD: EQ1}}
\end{align}

For each $r \in \fbins$, consider the single state FST $T_r$ such that for each bit of its input read, $T_r$ stays in the same state and outputs $r$. That is, for all $x \in \fbins$, $T_r(x) = r^{|x|}$. Let $\widehat{k}$ be large enough so that $T_{r_k}$ and the identity transducer are contained in $\FSTsize{\widehat{k}}$. This enables us to get upper bounds for the $\widehat{k}$-finite-state complexity of $\overline{S_{j-1}}$ and $S_j$ of
\begin{align}
    \FScomp{\widehat{k}}(\overline{S_{j-1}}) \leq |\overline{S_{j-1}}|\, \textrm{ and } \, \FScomp{\widehat{k}}(S_j) \leq \frac{|S_j|}{|r_k|}. \label{PD: Thm: fs not PD: EQ3}
\end{align}

By Lemma \ref{FS: Lemma: LEQ Lemma}, whenever $\FScomp{\widehat{k}}(\overline{S_{j-1}})$ and  $\FScomp{\widehat{k}}(S_{j})$ are large (i.e. for long enough prefixes of S) we have that there exists a $k'$ such that \begin{align}
    \FScomp{k'}(\overline{S_j}) &\leq 2\FScomp{\widehat{k}}(\overline{S_{j-1}}) + \FScomp{\widehat{k}}(S_j) + 2 \tag{by Lemma \ref{FS: Lemma: LEQ Lemma}} \\
    & \leq 2|\overline{S_{j-1}}| + \frac{|S_j|}{|r_k|} + 2 \tag{by \eqref{PD: Thm: fs not PD: EQ3}} \\
    & = 2(\log(|S_j|) + 1) + \frac{|S_j|}{|r_k|}. \label{FS: Thm: io not ae deep: EQ4} 
\end{align}

Let $k$ be large. Using that $\lim\limits_{k \to \infty}(4k+1) / |r_k| = 0$, for sufficiently long prefixes of the form $\overline{S_j}$ where $k$ is devoted to $j$ it holds that
\begin{align}
    \FScomp{k}(\overline{S_j}) - \FScomp{k'}(\overline{S_j}) & \geq \frac{|S_j|}{|r_k|}(|r_k| - 4k -1) - 2(\log(|S_j|) + 1) \tag{by \eqref{PD: Thm: fs not PD: EQ1} and \eqref{FS: Thm: io not ae deep: EQ4}}\\
    & \geq |S_j|(1 - \frac{\alpha}{2}) \tag{for $j$ and $k$ sufficiently large} \\
    & = (|\overline{S_j}| -\log|S_j|)(1 - \frac{\alpha}{2}) \notag \\ 
    & \geq |\overline{S_j}|(1 - \alpha) \label{FS: Thm: io not ae deep: EQ5}
\end{align}
for $j$ sufficiently large. 

Similarly as the above only holds for large enough $k$, for all $i \leq k$ for $k$ large, when $k'$ is chosen as above we have that $\FScomp{i}(\overline{S_j}) - \FScomp{k'}(\overline{S_j}) \geq |\overline{S_j}|(1 - \alpha)$ when $j$ is devoted to $k$. Hence $\mathrm{FS}$-depth$(S) > (1-\alpha)$.

Next we show that PD-depth$(S) < \alpha$. Throughout the remainder of the proof we assume that $j$ is odd.

Let $C$ be any ILPDC. Consider the tuple $$(\overline{S_{j-1}},q_s,q_e,\widehat{C},\bar{\nu_C}(S_j))$$ where $q_s$ is the state $C$ when begins reading $S_j$, $q_e$ is the state $C$ ends up in after reading $S_j$, $\widehat{C}$ is an encoding of $C$ in some representation of PDCs and $\bar{\nu_C}(S_j)$ is the suffix of $C(\overline{S_j})$ outputted when reading $S_j$. Then given this tuple, one can recover $S_j$ as $C$ is information lossless.

Using the fact that tuples of the form $(x_1,x_2,\ldots,x_n)$ can be encoded by the string \begin{equation}1^{\lceil \log n_1 \rceil}0n_1x_11^{\lceil \log n_2 \rceil}0n_2x_2 \ldots1^{\lceil \log n_{n-1} \rceil}0n_{n-1}x_{n-1}x_n,\label{Tuple Encoding}\end{equation} where $n_i = |x_i|$ in binary, we have that for \begin{align}
    |S_j| \leq K(S_j) & \leq |\bar{\nu_C}(S_j)| + 2\log(|\overline{S_{j-1}}|) + |\overline{S_{j-1}}| + O(|\widehat{C}|) + O(1). \tag{as $j$ is odd} \\
    & = |\bar{\nu_C}(S_j)| + 2\log(\log(|S_j|)) + \log(|S_j|) + O(|\widehat{C}|) + O(1). \label{PD: Thm: fs not pd: EQ1}
\end{align}
Hence for $j$ large we have that
\begin{align}
    |\bar{\nu_C}(S_j)| \geq |S_j| - 2\log(\log(|S_j|)) - \log(|S_j|) - O(|\widehat{C}|) - O(1) > |S_j|(1 - \frac{\alpha}{2}). \label{PD: Thm: fs not pd: EQ2}
\end{align} Therefore, for $j$ large we have that  \begin{align}
    |C(\overline{S_j})| &\geq |\bar{\nu_C}(S_j)| 
     > |S_j|(1 - \frac{\alpha}{2}) \tag{by \eqref{PD: Thm: fs not pd: EQ2}} \\
    & = (|\overline{S_j}| - \log(|\overline{S_j}|))(1 - \frac{\alpha}{2}) > |\overline{S_j}|(1 - \alpha). \label{PD: Thm: fs not pd: EQ3}
\end{align}

Hence, for infinitely many prefixes $\overline{S_j}$ of $S$ it holds that \begin{equation}
|I_{\textrm{PD}}(\overline{S_j})| - |C(\overline{S_j})| < |\overline{S_j}| - |\overline{S_j}|(1 - \alpha) = \alpha|\overline{S_j}| \label{PD: Thm: fs not pd: EQ4}
\end{equation}As $C$ was chosen arbitrarily, it holds that PD-depth(S) $< \alpha$.

\end{proof}

The next result demonstrates the existence of a sequence which achieves a PD-depth of roughly $1/2$ while at the same time while having a small finite-state depth level. This sequence is split into chunks of strings of the form $RFR^{-1}$ where $F$ is a flag and $R$ is a string not containing $F$ with large plain Kolmogorov complexity relative to its length. A large ILPDC $C$ is built to push $R$ onto its stack, and then when it sees the flag $F$, uses its stack to compress $R^{-1}$. These $R$ are such that an $\ILUPDC$ cannot use its stack to compress $R$, resulting in no compression. For the finite-state transducers, the sequence appears almost random and so little depth is achieved. The sequence from Theorem $3$ of \cite{DBLP:journals/mst/MayordomoMP11} satisfies the theorem as shown.

\begin{theorem}
For all $0< \beta < 1/2$, there exists a sequence $S$ such that PD-depth$(S) \geq 1/2 - \beta$, and FS-depth$(S) < \beta$.
\label{PD: Thm: pd not fs}
\end{theorem}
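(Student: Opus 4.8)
The plan is to construct a sequence $S$ built from consecutive chunks of the form $R_i F R_i^{-1}$, where $F$ is a fixed flag string, and each $R_i$ is a string that contains no occurrence of $F$ and has maximal plain Kolmogorov complexity relative to its length (so that $K(R_i) \geq |R_i|$). The chunk lengths should grow in the same exponential/doubling fashion as in Theorem \ref{PD: Thm: fs not PD}, so that each chunk dominates the total length of all preceding chunks; this ensures that the behaviour on a single chunk controls the compression ratio on the corresponding prefix. The paper signals that the sequence from Theorem $3$ of \cite{DBLP:journals/mst/MayordomoMP11} already has this structure, so I would take that sequence (or a minor adaptation of it) as the witness and verify the two required properties.

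For the lower bound $\mathrm{PD}$-depth$(S) \geq 1/2 - \beta$, I would exhibit a single clever $\mathrm{ILPDC}$ $C'$ that works against every $\mathrm{ILUPDC}$ $C$. The compressor $C'$ reads $R_i$ and pushes it verbatim onto its (binary) stack; upon detecting the flag $F$ it switches to ``pop-and-match'' mode, using the stack contents to regenerate $R_i^{-1}$ while emitting almost no output. This compresses each chunk of length roughly $2|R_i| + |F|$ down to about $|R_i|$ bits, giving a compression ratio near $1/2$ on these prefixes. On the ILUPDC side, the key point is that a unary stack cannot store the arbitrary binary string $R_i$: since $R_i$ is Kolmogorov-random and the stack holds only a height (a single integer), the composite information-lossless argument forces any $C \in \mathrm{ILUPDC}$ to essentially output $R_i^{-1}$ in full, so $|C(\overline{S_i})| \geq (1-o(1))|\overline{S_i}|$. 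Subtracting yields a difference of roughly $(1/2)|\overline{S_i}|$, and choosing the chunk growth to wash out the lower-order $F$ and pointer terms gives the bound $\geq 1/2 - \beta$ for almost every $n$ (using the domination of the current chunk to extend the estimate from chunk boundaries to all intermediate prefixes).

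For the upper bound $\mathrm{FS}$-depth$(S) < \beta$, I would argue that finite-state transducers cannot see the palindromic $R_i F R_i^{-1}$ structure, so the sequence looks nearly random to them and admits little finite-state depth. Concretely, I would use Lemma \ref{FS: Lemma: GEQ Lemma} (and Remark \ref{FS: Remark: any substring split}) to lower-bound $\FScomp{k'}(\overline{S_i})$ in terms of the $\FScomp{3k'}$ complexities of the constituent random blocks $R_i$ and $R_i^{-1}$, which are large because $K(R_i)$ is large; simultaneously the identity transducer gives the trivial upper bound $\FScomp{k}(\overline{S_i}) \leq |\overline{S_i}|$. The gap between any small-FST complexity and any larger-FST complexity is then forced to be $o(|\overline{S_i}|)$ up to the target $\beta$, so no depth level of $\beta$ or more can be achieved. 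The main obstacle, as in the previous theorem, is the bookkeeping at the boundary between chunks: I must show the compression estimates hold not just at the chunk endpoints $\overline{S_i}$ but for every sufficiently long prefix (including those ending in the middle of a chunk, or mid-flag), and that the $\lambda$-transition bound $c$ of the ILUPDCs does not let them cheat — here Remark \ref{PD: Rmk: height discussion}, which says the exact stack height above $(c+1)|x|$ is irrelevant, is the crucial tool for controlling the unary-stack behaviour.
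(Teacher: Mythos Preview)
Your overall architecture matches the paper's: a sequence of blocks $R_j\,1^k\,R_j^{-1}$ with each $R_j$ Kolmogorov-random and flag-free, an $\mathrm{ILPDC}$ $C'$ that pushes $R_j$ and pops it to verify $R_j^{-1}$, and a Kolmogorov argument (via Remark~\ref{PD: Rmk: height discussion}) showing no $\mathrm{ILUPDC}$ can compress the $R_j$ or $R_j^{-1}$ zones. The $\mathrm{FS}$-depth bound also follows for the reason you give, though the paper takes a shorter route: once $\rho_{\mathrm{UPD}}(S) > 1 - 6/k$ is known, the inclusion $\mathrm{ILFST}\subseteq\mathrm{ILUPDC}$ gives $\dim_{\mathrm{FS}}(S) > 1-6/k$ directly, so $\FScomp{l}(S\upharpoonright n)\geq (1-7/k)n$ for every $l$, and comparing with the identity gives $\mathrm{FS}$-depth$(S)<8/k\leq\beta$. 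There is no need to invoke Lemma~\ref{FS: Lemma: GEQ Lemma}.

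However, there is a genuine gap in your choice of block growth. You propose that the chunks ``grow in the same exponential/doubling fashion as in Theorem~\ref{PD: Thm: fs not PD}, so that each chunk dominates the total length of all preceding chunks'', and you invoke this domination to extend the estimate to intermediate prefixes. This is exactly backwards for an \emph{almost-everywhere} notion like PD-depth. Consider a prefix of the form $\overline{S_{j-1}}R_j1^k$, i.e.\ one that stops just after a flag. At this point $C'$ has output all of $R_j1^k$ verbatim (it has pushed $R_j$ onto its stack but has not yet seen any of $R_j^{-1}$ to compress), and every $\mathrm{ILUPDC}$ $C$ has also output essentially all of $R_j$ (since $R_j$ is random). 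The saving $|C(\cdot)|-|C'(\cdot)|$ on this prefix is therefore at most the saving accumulated on $\overline{S_{j-1}}$, namely $O(|\overline{S_{j-1}}|)$. If $|R_j|$ is exponentially larger than $|\overline{S_{j-1}}|$, the prefix length $n$ satisfies $n\sim|R_j|$ while the saving is $O(\log n)$, so $|C(S\upharpoonright n)|-|C'(S\upharpoonright n)|=o(n)$ on infinitely many $n$ and PD-depth$(S)=0$.

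The paper avoids this by taking $|R_j|=kt_j$ with $j\leq t_j\leq kj$, so $|R_j|=\Theta(j)$ grows only linearly and $|\overline{S_{j-1}}|=\Theta(j^2)$. Then any partial chunk $y\sqsubset S_{j+1}$ has $|y|=O(j)=o(n)$, and the compression gap achieved on $\overline{S_j}$ carries over to $\overline{S_j}y$ with only an $o(n)$ loss. The paper's explicit computation of $R_{\mathrm{PD}}(S)\leq 1/2$ singles out exactly these post-flag prefixes as the worst case and shows they are still fine. So: keep your construction, but make the blocks grow polynomially, not exponentially; the current chunk must be \emph{small} relative to the prefix, not dominant.
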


\begin{proof}
Let $0 < \beta < 1/2$, and let $k> 8$ be a positive integer such that $\beta \geq 8/k$. For each $n$, let $t_n = k^{\lceil\frac{\log n}{\log k}\rceil}$. Note that for all $n$, \begin{align}
    n \leq t_n \leq kn. \label{PD: Thm: pd not fs3: EQ1} 
\end{align}
Consider the set $T_j$ which contains all strings of length $j$ that do not contain $1^{k}$ as a substring. As $T_j$ contains strings of the form $x_10x_20x_30\cdots$ where each $x_t$ is a string of length $k-1$, we have that $|T_j| \geq 2^{j(1 - \frac{1}{k})}$.
For each $j$, let $R_j \in T_{kt_j}$ have maximal plain Kolmogorov complexity in the sense that 
\begin{equation}
K(R_j) \geq |R_j|(1 - \frac{1}{k}). \label{PD: Thm: pd not fs3: EQ2}
\end{equation} Such an $R_j$ exists as $|T_{|R_j|}| > 2^{|R_j|(1 - \frac{1}{k})}-1$. Note that $kj \leq |R_j| \leq k^2j$. We construct $S$ in stages $S = S_1S_2\ldots$ where for each $j$, $$S_j = R_j1^kR_j^{-1}.$$

First we examine how well any ILUPDC compresses occurrences of $R_j$ zones in $S$. Let $C \in \ILUPDC$. Consider the tuple  $$(\widehat{C},q_s,q_e,z,\nu_C(q_s,R_j,z))$$
where $\widehat{C}$ is an encoding of $C$, $q_s$ is the state that $C$ begins reading $R_j$ in, $q_e$ is the state $C$ ends up in after reading $R_j$, $z$ is the stack contents of $C$ as it begins reading $R_j$ in $q_s$ (i.e. $z = 0^pz_0$ for some $p$), and the output $\nu_C(q_s,R_j,z)$ of $C$ on $R_j$. By Remark \ref{PD: Rmk: height discussion}, $C$'s stack is only important if $|z| < (c+1)|R_j|$, as if $|z|$ is larger, $C$ will output the same irregardless of $|z|$'s true value.  Hence, set \begin{equation}
    z' = \begin{cases}
         |z| & \text{ if } |z| < (c+1)|R_j| \\
         (c+1)|R_j| & \text{ if } |z| \geq (c+1)|R_j|
    \end{cases}
\end{equation}

As $C$ is lossless, having knowledge of the tuple $(\widehat{C},q_s,q_e,z',\nu_C(q_s,R_j,z))$ means we can recover $R_j$. If we encode the tuple $(\widehat{C},q_s,q_e,z',\nu_C(q_s,R_j,z))$ the same way as in \eqref{Tuple Encoding}, and noting that $z'$ contributes roughly $O(\log |R_j|)$ bits to the encoding, we have we have by Equation \eqref{PD: Thm: pd not fs3: EQ2} that \begin{equation}
    |R_j|(1 - \frac{1}{k}) \leq K(R_j) \leq |\nu_C(q_s,R_j,z)| + O(\log|R_j|) + O(|\widehat{C}|) + O(1). \label{PD: Thm: pd not fs3: EQ3}
\end{equation}
Therefore, for $j$ large we have \begin{align}
    |\nu_C(q_s,R_j,z)| \geq |R_j|(1 - \frac{1}{k}) - O(\log|R_j|) > |R_j|(1 - \frac{2}{k}). \label{PD: Thm: pd not fs3: EQ4}
\end{align}
This is similarly true for $R_j^{-1}$ zones also as $K(R_j) \leq K(R_j^{-1}) + O(1)$. Hence for $j$ large we see that $C$ outputs at least
\begin{align}
    |C(\overline{S_j})| - |C(\overline{S_{j-1}})| &\geq 2|R_j|(1 - \frac{2}{k}) \notag\\
    & = (|S_j| - k)(1 - \frac{2}{k}) \notag \\
    & \geq |S_j|(1 - \frac{3}{k}) \label{PD: Thm: pd not fs3: EQ5}
\end{align}
bits when reading $S_j$.

Next we examine how well $C$ compresses $S$ on arbitrary prefixes. Consider the prefix $S \upharpoonright n$ and let $j$ be such that $\overline{S_j}$ is a prefix of $S \upharpoonright n$ but $\overline{S_{j+1}}$ is not. Thus $S\upharpoonright n = \overline{S_j}\cdot y$ for some $y \sqsubset S_{j+1}$. Suppose Equation \eqref{PD: Thm: pd not fs3: EQ5} holds for all $i\geq j'$. Hence we have that
\begin{align}
    |C(S \upharpoonright n)| & \geq |C(\overline{S_j})| \notag \geq |C(\overline{S_j})| - |C(\overline{S_{j'-1}})| \notag \\
    & \geq |S_{j'}\ldots S_j|(1 - \frac{3}{k}) - O(1) \tag{by \eqref{PD: Thm: pd not fs3: EQ5}} \\
    & = (n - |y| - |\overline{S_{j'-1}}|(1 - \frac{3}{k}) - O(1) \notag \\
    & \geq (n - |y|)(1 - \frac{4}{k}). \label{PD: Thm: pd not fs3: EQ6}
\end{align}
Noting that $n = \Omega(j^2)$ and $|y| = O(j)$, by Equation \eqref{PD: Thm: pd not fs3: EQ6} we have that \begin{equation}
    |C(S \upharpoonright n)| \geq n(1 - \frac{5}{k}).\label{PD: Thm: pd not fs3: EQ7}
\end{equation}
As $C$ was arbitrary, we therefore have that \begin{equation}
    \rho_{\textrm{UPD}}(S) > 1 - \frac{6}{k}. \label{PD: Thm: pd not fs3: EQ8}
\end{equation}

Next we build an ILPDC $C'$ that is able to compress prefixes of $S$. In \cite{DBLP:journals/mst/MayordomoMP11}, it was shown that $R_{\textrm{PD}}(S) \leq 1/2$. We provide the details of this proof here for completeness.

For the ILPDC $C'$, informally, $C'$ outputs its input for some prefix $S_1\ldots S_i$. Then, for all $j > i$, $C'$ compresses $S_j$ as follows: On $S_j$, $C'$ outputs its input on $R_j1^k$ while trying to identify the $1^k$ flag. Once the flag is found, $C'$ pops the flag from its stack and then begins to read an $R_j^{-1}$ zone. On $R_j^{-1}$, $C'$ counts modulo $v$ to output a zero every $v$ bits, and uses its stack to ensure that the input is indeed $R_{j}^{-1}$. If this fails, $C'$ outputs an error flag, enters an error state and from then on outputs its input. Furthermore, $v$ is cleverly chosen such that for all but finitely many $j$, $v$ divides evenly into $|R_j|$. Specifically we set $v = k^a$ for some $a \in \N$. A complete description of $C'$ is provided at the end of this proof for completeness.

Next we will compute the compression ratio of $C'$ on $S$. We let $p$ be such that for all $j \geq p$, $v$ divides evenly into $|R_j|$. $C'$ will output its input on $\overline{S_{p-1}}$ and begin compressing on the succeeding zones. Also, note that the compression ratio of $C'$ on $S$ is largest on prefixes ending with a flag $1^k$. Hence, consider some prefix $\overline{S_{j-1}}R_j1^k$ of $S$. We have that for $j$ sufficiently large \begin{align}
    \frac{|C(\overline{S_{j-1}}R_j1^k)|}{|\overline{S_{j-1}}R_j1^k|} & \leq \frac{|\overline{S_{p-1}}| + \sum_{i = p}^j (|R_i| + k + \frac{|R_i|}{v}) - \frac{|R_j|}{k}}{|\overline{S_{j-1}}R_j1^k|} \notag \\
    & \leq \frac{|\overline{S_{p-1}}|}{|\overline{S_{j-1}}|} + \frac{(1 + \frac{1}{v})\sum_{i = 1}^jkt_i + jk - \frac{kt_j}{v}}{|\overline{S_{j-1}}|} \notag \\
    & \leq \frac{1}{6v} + \frac{(1 + \frac{1}{v})\sum_{i = 1}^jkt_i + jk - \frac{kt_j}{v}}{ 2k \sum_{i = 1}^{j-1}t_i} \tag{for $j$ large} \\
    & \leq \frac{1}{6v} + \frac{(1 + \frac{1}{v})\sum_{i = 1}^jt_i + j - \frac{t_j}{v}}{ 2 \sum_{i = 1}^{j-1}t_i} \notag\\
    & \leq \frac{1}{6v} + \frac{(1+ \frac{1}{v})\sum_{i = 1}^{j-1}t_i}{2 \sum_{i = 1}^{j-1}t_i} + \frac{t_j}{2 \sum_{i = 1}^{j-1}t_i} + \frac{j}{2 \sum_{i = 1}^{j-1}t_i} \notag\\
    & \leq \frac{1}{6v} + \frac{1}{2} + \frac{1}{2v} + \frac{1}{2}\Big(\frac{jk}{(j-1)(j)/2}\Big) + \frac{1}{2}\Big(\frac{j}{(j-1)(j)/2}\Big) \notag \\
    & \leq \frac{1}{6v} + \frac{1}{2} + \frac{1}{2v} + \frac{1}{6v} + \frac{1}{6v} \tag{for $j$ large} \\
    & = \frac{1}{2} + \frac{1}{v}.
\end{align}  

As $v$ can be chosen to be arbitrarily large, we therefore have that \begin{align}
    R_{\textrm{PD}}(S) \leq \frac{1}{2}. \label{PD: Thm: pd not fs3: EQ9}
\end{align}

Hence, for almost every $n$, by Equations \eqref{PD: Thm: pd not fs3: EQ8} and \eqref{PD: Thm: pd not fs3: EQ9} it follows that for all $C \in \ILUPDC$ \begin{align}
    |C(S \upharpoonright n)| - |C'(S \upharpoonright n)| & \geq (1 - \frac{6}{k} - \frac{1}{k})n - (\frac{1}{2} + \frac{1}{k})n \\
    & = (\frac{1}{2} - \frac{8}{k}). \label{PD: Thm: pd not fs3: EQ10}
\end{align} Choosing $k$ large such that $\frac{8}{k} \leq \beta$ gives us our desired result of PD-depth$(S) \geq \frac{1}{2} - \beta.$

Next we examine the finite-state depth of $S$. From Equation \eqref{PD: Thm: pd not fs3: EQ8} and Definition \ref{FS: Def: Dimension}, it follows that $\dimFS(S) > 1 - \frac{6}{k}$. Hence, for all $l$ it follows that for all but finitely many $n$ that \begin{equation}
    \FScomp{l}(S \upharpoonright n) \geq (1 - \frac{7}{k})n. \label{PD: Thm: pd not fs3: EQ11}
\end{equation}
Therefore, for $l'$ such that $I_{\textrm{FS}} \in \FSTsize{l'}$ we have that for all $l$ and almost every $n$  \begin{equation}
    \FScomp{l'}(S \upharpoonright n) - \FScomp{l}(S \upharpoonright n) \leq n - (1 - \frac{7}{k})n < \frac{8}{k}\cdot n. \label{PD: Thm: pd not fs3: EQ12}
\end{equation}
That is, FS-depth$(S) < \beta$ as desired.

For completeness we now present a full description of the ILPDC $C'$: Let $Q$ be the following set of states: \label{ILPDC Construction}
\begin{enumerate}
    \item the start state $q_0^s$,
    \item the counting states $q^s_1,\ldots q^s_m$ and $q_0$ that count up to $m = |\overline{S_{p-1}}|$,
    \item the flag checking states $q_1^{f_1},\ldots,q_k^{f_1}$ and $q_1^{f_0},\ldots,q_k^{f_0}$,
    \item the pop flag states $q_0^F,\ldots, q_k^F$,
    \item the compress states $q_1^c,\ldots, q_{v+1}^c$,
    \item the error state $q_e$.
\end{enumerate}

\noindent We now describe the transition function of $C'$. At first, $C'$ counts om $q_0^s$ to $q^s_m$ to ensure that for later $R_j$ zones, $v$ divides evenly into $|R_j|$. That is, for $0 \leq i \leq m-1$, $$\delta(q_i^s,x,y) = (q_{i+1}^s,y)$$ and $$\delta(q_m^s,\lambda,y) = (q_0, y).$$

\noindent Once this counting has taken place, an $R_j$ zone begins. Here, the input is pushed onto the stack and $C'$ tries to identify the flag $1^k$ by examining group of $k$ symbols. We set \begin{align*}
    \delta(q_0,x,y) = 
    \begin{cases}
         (q_1^{f_1},xy) & \textrm{if $x = 1$}\\
         (q_1^{f_0},xy) & \textrm{if $x \neq 1$}
    \end{cases}
\end{align*}
and for $1 \leq i \leq k-1$, $$\delta(q_i^{f_0},x,y) = (q_{i+1}^{f_0},xy)$$ and \begin{align*}
    \delta(q_i^{f_1},x,y) = 
    \begin{cases}
         (q_{i+1}^{f_1},xy) & \textrm{if $x = 1$} \\
         (q_{i+1}^{f_0},xy) & \textrm{if $x \neq1$.}
    \end{cases}
\end{align*}
If the flag $1^k$ is not detected after $k$ symbols, the test begins again. That is $$\delta(q_k^{f_0},\lambda,y) = (q_0,y).$$
If the flag is detected, the pop flag state is entered. $\delta(q_k^{f_1},\lambda,y) = (q_0^F,y).$ The flag is then removed from the stack, that is, for $0 \leq i \leq k$ $$\delta(q_i^F,\lambda,y) = (q_{i+1}^F, \lambda)$$ and $$\delta(q_k^F,\lambda,y) = (q_1^c,y).$$ 

\noindent $C'$ then checks using the stack, that the next part of the input it reads is $R_j^{-1}$, counting modulo $v$. If the checking fails, the error state is entered. That is for $1 \leq i \leq v$, \begin{align*}
    \delta(q_i^c,x,y) = 
    \begin{cases}
         (q_{i+1}^c,\lambda) & \textrm{if $x  = y$}\\
         (q_e, y) & \textrm{if $x \neq y$ and $y \neq z_0$} \\
         (q_1^{f_1},xz_0) & \textrm{if $x=1$ and $y = z_0$}\\
         (q_1^{f_0},xz_0) & \textrm{if $x\neq1$ and $y = z_0$}.
    \end{cases}
\end{align*}
Once $v$ symbols are checked, the checking starts again. That is $$\delta(q_{v+1}^c,\lambda,y) = (q_1^c,y).$$
The error state is the loop $$\delta(q_e,x,y) = (q_e,y).$$

We now describe the output function of $C'$. Firstly, on the counting states, $C'$ outputs its input. That is, for $0 \leq i \leq m-1$ $$\nu(q_i^s,x,y) = x.$$
On the flag checking states $C'$ outputs its input. That is, for $1 \leq i \leq k-1$ $$\nu(q_i^{f_0},x,y) = \nu(q_i^{f_1},x,y) = x.$$
$C'$ outputs nothing while in the flag popping states $q_0^F,\ldots, q_k^F$ and on the compression states $q_1^c,\ldots,q_{v+1}^c$ except in the case when $v$ symbols have just been checked. That is, $$\nu(q_{v}^c,x,y) = 0 \textrm{ if $x = y$}.$$
When an error is seen, a flag is outputted. That is for $1 \leq i \leq v$ $$\nu(q_i^c,x,y) = 1^{3m + i}0x \textrm{ if $x \neq y$ and $y \neq z_0$}.$$
$C'$ outputs its input while in the error state. That is, $$\nu(q_e,x,y) = x.$$

Lastly we verify that $C'$ is in fact IL. If the final state is not an error state, then all $R_j$ zones and $1^k$ flags are output as in the input. If the final state is $q_i^c$ then the number $t$ of zeros after the last flag in the output along with $q_i^c$ determines that the last $R_j^{-1}$ zone read is $tv + i-1$ bits long.
If the final state is $q_e$, then the output is of the form $$aR_j1^k0^t1^{3m + i}0b$$ for $a,b \in \fbins.$ The input is uniquely determined to be the input corresponding to the output $aR_j1^k0^t$ with final state $q_1^c$
 followed by $$R_j^{-1}[tv..tv + (i-1)-1].$$ As $1^{3m}$ does occur anywhere as a substring of $S$ post the prefix $\overline{S_{p-1}}$, its first occurrence post $\overline{S_{p-1}}$ as part of an output must correspond to an error flag.
 
 \end{proof}

\section{Lempel-Ziv Depth}
This section develops a notion of Lempel-Ziv depth (LZ-depth) based on the difference in compression of information lossless finite-state transducers and the Lempel-Ziv '78 (LZ) algorithm. We propose that LZ is a good choice to compare against ILFSTs as it asymptotically reaches the lower bound of compression attained by any finite-state compressor \cite{SheinwaldLZ,DBLP:LZ78}. Intuitively, a sequence is LZ-deep if given any $\ILFST$, the compression difference between the ILFST and the LZ algorithm is bounded below by a constant times the length of the prefix examined.

\begin{definition}
A sequence $S$ is \emph{(almost everywhere) Lempel-Ziv deep ((a.e.) LZ-deep)} if $$(\exists \alpha > 0)(\forall C \in \ILFST)(\forall^{\infty} n \in \N), |C(S \upharpoonright n)| - |\LZ(S \upharpoonright n)| \geq \alpha n.$$
\label{LZ: Def}
\end{definition}

\noindent We say a sequence is \emph{infinitely often (i.o.) LZ-deep} if the $(\forall^{\infty} n \in \N)$ term in the above definition is replaced with $(\exists^{\infty} n \in \N)$. We usually use the i.o. notation to denote i.o. LZ-depth but do not use the a.e. notation to denote a.e. LZ-depth. Therefore, one can assume LZ-depth always refers to the a.e. notion.

Prior to comparing LZ-depth with other notions, we require the following definition which describes the \textit{depth-level} of a sequence. This is similar to Definition \ref{PD: Def: Depth Level}.

\begin{definition}
Let $S \in \infbins$. Let $\alpha > 0$. We say that $\LZ\textit{-depth}(S) \geq \alpha$ if $$(\forall C \in \ILFST)(\forall^{\infty} n \in \N) \, |C(S \upharpoonright n)| - |\LZ(S \upharpoonright n)| \geq \alpha n.$$ Otherwise we say that $\LZ-\textit{depth}(S) < \alpha.$
\label{LZ: Def: Depth level}
 \end{definition}

We now show that LZ-depth satisfies the property of depth which says that trivial sequences and random sequences are not deep. Here, trivial means sequences which are FST-trivial (i.e. have a finite-state strong dimension of $0$ (recall Definition \ref{FS: Def: Dimension})) and random means LZ-incompressible.

\begin{theorem}
Let $S \in \infbins$.
\begin{enumerate}
    \item If $\rho_{\mathrm{LZ}}(S) = 1$, then $S$ is not $\LZ$-deep.
    \item If $\DimFS(S) = 0$, then $S$ is not $\LZ$-deep.
\end{enumerate}
\label{LZ: Thm: Easy Hard}
\end{theorem}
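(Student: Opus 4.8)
The plan is to mirror the proof of Theorem~\ref{PD: Thm: Easy Hard}, exploiting the fact that the negation of LZ-depth requires only producing, for each $\alpha > 0$, a single witness $C \in \ILFST$ that stays within $\alpha n$ of $\LZ$ on infinitely many prefixes. First I would record the negated statement explicitly: $S$ fails to be LZ-deep precisely when $(\forall \alpha > 0)(\exists C \in \ILFST)(\exists^{\infty} n \in \N)\, |C(S\upharpoonright n)| - |\LZ(S\upharpoonright n)| < \alpha n$. Both parts then reduce to choosing the correct witness $C$.

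For part 1, I would take $C$ to be the identity transducer $I_{\mathrm{FS}}$, so that $|C(S\upharpoonright n)| = n$ for every $n$. Since $\rho_{\mathrm{LZ}}(S) = \liminf_{n} |\LZ(S\upharpoonright n)|/n = 1$, for any fixed $\alpha > 0$ there are infinitely many $n$ with $|\LZ(S\upharpoonright n)| > (1-\alpha)n$; for each such $n$ we obtain $|I_{\mathrm{FS}}(S\upharpoonright n)| - |\LZ(S\upharpoonright n)| < n - (1-\alpha)n = \alpha n$. As $\alpha$ was arbitrary and $I_{\mathrm{FS}} \in \ILFST$, this witnesses that $S$ is not LZ-deep.

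For part 2, I would instead use the hypothesis $\DimFS(S) = R_{\ILFST}(S) = 0$ to supply the compressor. Given $\alpha > 0$, since the infimum of the worst-case ratios over $\ILFST$ is $0$, there is some $C \in \ILFST$ with $\limsup_{n} |C(S\upharpoonright n)|/n < \alpha$, hence $|C(S\upharpoonright n)| < \alpha n$ for almost every $n$. Because $|\LZ(S\upharpoonright n)| \geq 0$, this yields $|C(S\upharpoonright n)| - |\LZ(S\upharpoonright n)| \leq |C(S\upharpoonright n)| < \alpha n$ for almost every (in particular infinitely many) $n$, again showing $S$ is not LZ-deep.

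Both parts are essentially immediate once the right witness $C$ is chosen, so I do not expect a substantive obstacle; the only care needed is in correctly negating the alternating quantifiers of Definition~\ref{LZ: Def} and in keeping straight which hypothesis feeds which witness. In part 1 the $\liminf$ condition $\rho_{\mathrm{LZ}}(S)=1$ supplies only infinitely many favorable $n$, which is exactly what the $\exists^{\infty} n$ in the negation requires, whereas in part 2 the $\limsup$/infimum condition $\DimFS(S)=0$ supplies an almost-every statement that is stronger than needed. The distinction between the best-case ratio ($\rho$) used in part 1 and the worst-case dimension ($R_{\ILFST}=\DimFS$) used in part 2 is the one point where a reader could be misled, so I would state the relevant equalities from Definition~\ref{FS: Def: Dimension} as I invoke them.
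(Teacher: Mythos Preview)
Your proposal is correct and follows essentially the same argument as the paper: use $I_{\mathrm{FS}}$ as the witness in part~1 and a well-compressing ILFST in part~2, then bound the difference by $\alpha n$. One small remark: your final paragraph understates part~1, since $\rho_{\mathrm{LZ}}(S)=\liminf_n |\LZ(S\upharpoonright n)|/n = 1$ actually gives $|\LZ(S\upharpoonright n)| > (1-\alpha)n$ for \emph{almost every} $n$ (not merely infinitely many), so the distinction you flag there does not arise; conversely, your quantifier handling in part~2 (choosing $C$ after $\alpha$) is slightly more careful than the paper's own phrasing.
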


\begin{proof}
The proof follows the same structure as Theorem \ref{PD: Thm: Easy Hard}.

Let $S \in \infbins$ be such that $\rho_{\mathrm{LZ}}(S) = 1$. Therefore for every $\alpha > 0$, for almost every $n$ \begin{equation}|\LZ(S \upharpoonright n)| > n(1 - \alpha). \label{LZ: Thm: Easy Hard: EQ1}\end{equation}

\noindent Then for almost every $n$
\begin{equation}|I_{\mathrm{FS}}(S \upharpoonright n)| - |\LZ(S \upharpoonright n)| < n - n(1-\alpha) = \alpha n.\label{LZ: Thm: Easy Hard: EQ2}\end{equation} As $\alpha$ is arbitrary and $I_{\mathrm{FS}} \in \ILFST$, $S$ is not LZ-deep.

Next suppose $S \in \infbins$ is such that $\DimFS(S) = 0$. Hence there exists some $C \in \ILFST$ such that for every $\beta > 0$ and almost every $n$, \begin{equation}|C(S \upharpoonright n)| < \beta n.\label{LZ: Thm: Easy Hard: EQ3}\end{equation}  

\noindent Therefore it holds that for almost every $n$ \begin{equation}
    |C(S \upharpoonright n)| - |\LZ(S \upharpoonright n)| \leq |C(S \upharpoonright n)| < \beta n.\label{LZ: Thm: Easy Hard: EQ4}\end{equation}
As $\beta$ is arbitrary, $S$ is not LZ-deep.

\end{proof}

\subsection{Separation from Finite-State Depth}

 In our first result we demonstrate the existence of a LZ-deep sequence which is not i.o FS-deep. It relies on a result by Lathrop and Strauss which demonstrates the existence of a normal sequence $S$ such that $R_{\LZ}(S) < 1$, i.e. a normal sequence Lempel-Ziv can somewhat compress \cite{maliciousLZ}. The proof relies on the famous result that for every sequence $S$ which is normal, it holds that $\rho_{\ILFST}(S) = 1$ \cite{DBLP:journals/tcs/BecherH13}.

\begin{theorem}
There exists a normal $\LZ$-deep sequence. \label{LZ: Thm: LZ normal}
\end{theorem}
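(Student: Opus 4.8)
The plan is to combine the two cited external results in an essentially direct way, since together they almost immediately yield the statement. First I would invoke the theorem of Lathrop and Strauss \cite{maliciousLZ} to fix a \emph{normal} sequence $S$ with $R_{\LZ}(S) < 1$; write $\gamma = R_{\LZ}(S)$, so $0 \le \gamma < 1$. The goal is then to show that this same $S$ is LZ-deep, and crucially to exhibit a single depth level $\alpha > 0$ that works simultaneously for every $C \in \ILFST$, as required by Definition \ref{LZ: Def}.

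The key observation is that normality forces every information-lossless finite-state compressor to be essentially useless on $S$. Concretely, by the result of Becher and Heiber \cite{DBLP:journals/tcs/BecherH13}, normality of $S$ gives $\rho_{\ILFST}(S) = 1$. Since $\rho_{\ILFST}(S) = \inf\{\rho_C(S) : C \in \ILFST\}$, this forces $\rho_C(S) = \liminf_n |C(S \upharpoonright n)|/n \ge 1$ for \emph{every} $C \in \ILFST$: if some compressor had $\rho_C(S) < 1$ the infimum would drop below $1$. This is the step that makes $\alpha$ uniform in $C$, since the lower bound of $1$ on the compression ratio holds for all $C$ at once, with only the threshold beyond which it takes effect depending on $C$.

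With these two facts in hand the rest is a routine $\varepsilon$-chase. I would set $\varepsilon = (1-\gamma)/4$ and $\alpha = (1-\gamma)/2 > 0$. Fixing an arbitrary $C \in \ILFST$, the bound $\rho_C(S) \ge 1$ yields $|C(S \upharpoonright n)| \ge (1-\varepsilon)n$ for almost every $n$, while $R_{\LZ}(S) = \gamma$ yields $|\LZ(S \upharpoonright n)| \le (\gamma + \varepsilon)n$ for almost every $n$. Intersecting these two cofinite sets of $n$ and subtracting gives, for almost every $n$,
\begin{equation*}
|C(S \upharpoonright n)| - |\LZ(S \upharpoonright n)| \ge (1 - \varepsilon)n - (\gamma + \varepsilon)n = (1 - \gamma - 2\varepsilon)n = \alpha n.
\end{equation*}
Since $\alpha$ does not depend on $C$ and $C$ was arbitrary, $S$ satisfies Definition \ref{LZ: Def}, so $S$ is a normal LZ-deep sequence.

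The only genuine subtlety, and the step I would take care to phrase precisely, is the uniformity of $\alpha$ over all $C \in \ILFST$. It would be a mistake to read $\rho_{\ILFST}(S) = 1$ merely as saying the infimum is \emph{approached}; what we need, and what it genuinely provides, is a per-compressor lower bound $\rho_C(S) \ge 1$ holding for all $C$ simultaneously, so that the single constant $\alpha = (1-\gamma)/2$ certifies depth against every ILFST at once.
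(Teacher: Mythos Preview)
Your proof is correct and follows essentially the same approach as the paper: invoke Lathrop--Strauss to obtain a normal $S$ with $R_{\LZ}(S)<1$, use normality via Becher--Heiber to get $\rho_C(S)\ge 1$ for every $C\in\ILFST$, and subtract the two bounds with a small slack to exhibit a single $\alpha>0$ that works for all $C$. Your explicit attention to the uniformity of $\alpha$ over $C$ is exactly the point the paper handles implicitly.
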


\begin{proof}
Let $S$ be the normal sequence from Theorem 4.3 of \cite{maliciousLZ} such that $R_{\LZ}(S) = \varepsilon <  1.$ Let $\delta>0$ be such that $\varepsilon + \delta < 1.$ Therefore, for almost every $n$ it holds that \begin{equation}
    |\LZ(S \upharpoonright n)| \leq (\varepsilon + \frac{\delta}{2})n.
    \label{LZ: Thm: LZ normal: EQ1}
\end{equation}

As $S$ is normal, $\rho_{\ILFST}(S) = 1$. Hence for all $C \in \ILFST$ and almost every $n$ we have that \begin{equation}
    |C(S \upharpoonright n)| \geq (1 - \frac{\delta}{2})n. \label{LZ: Thm: LZ normal: EQ2}
\end{equation}

Therefore for almost every $n$ \begin{align}
    |C(S \upharpoonright n)| - |\LZ(S \upharpoonright n)| \geq (1 - \frac{\delta}{2})n-(\varepsilon + \frac{\delta}{2})n = (1 - \varepsilon - \delta)n. \label{LZ: Thm: LZ normal: EQ3}
\end{align}
Therefore, as $C$ was arbitrary, $S$ is LZ-deep.

\end{proof}

  \begin{remark}Moser and Doty show that normal sequences are not FS-deep in \cite{DotyM07}. Hence the sequence satisfying Theorem \ref{LZ: Thm: LZ normal} is an example of an LZ-deep but not FS-deep sequence.
  \end{remark}
  
  Next we demonstrate that the sequence which satisfies Theorem \ref{PD: Thm: fs not PD} is an FS-deep sequence that is not LZ-deep. The long sections of randoms strings in $S$ prevent LZ-depth as was the case with PD-depth.

\begin{theorem}
There exists a sequence $S$ which is FS-deep but not LZ-deep.
\label{LZ: Thm: io Fs not LZ}
\end{theorem}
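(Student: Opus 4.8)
The plan is to reuse the sequence $S$ constructed in the proof of Theorem \ref{PD: Thm: fs not PD} and show that it is also not LZ-deep. Recall that $S = S_1 S_2 S_3 \cdots$, that for odd $j$ the block $S_j$ satisfies $K(S_j) \geq |S_j|$, and that for every $j > 1$ we have $|\overline{S_{j-1}}| = \log |S_j|$, so the random blocks dominate the length of the prefixes $\overline{S_j}$. Theorem \ref{PD: Thm: fs not PD} already gives that $\mathrm{FS}$-depth$(S) > 1 - \alpha$ for every $0 < \alpha < 1$, hence $S$ is FS-deep; it remains only to establish that $S$ is not LZ-deep.

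To show $S$ is not LZ-deep I would exhibit, for each $\alpha > 0$, a single ILFST witnessing the failure of Definition \ref{LZ: Def} at level $\alpha$ on infinitely many prefixes, namely the identity transducer $I_{\mathrm{FS}}$, for which $|I_{\mathrm{FS}}(S \upharpoonright n)| = n$. The heart of the argument is that $\LZ$ cannot compress the prefixes that end at the random blocks. Since $\LZ$ is a fixed, information lossless and computable algorithm, its decompressor is a fixed program, so $K(x) \leq |\LZ(x)| + O(1)$, i.e. $|\LZ(x)| \geq K(x) - O(1)$ for every $x \in \fbins$; this is the only property of $\LZ$ I would invoke.

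Next I would bound $K(\overline{S_j})$ from below for odd $j$. Since $S_j$ is the suffix of $\overline{S_j}$ obtained by deleting the first $|\overline{S_{j-1}}| = \log|S_j|$ bits, one recovers $S_j$ from $\overline{S_j}$ together with the number $|\overline{S_{j-1}}|$, which costs only $O(\log \log |S_j|)$ bits; hence $K(S_j) \leq K(\overline{S_j}) + O(\log\log|S_j|) + O(1)$. Combined with $K(S_j) \geq |S_j|$ this yields $K(\overline{S_j}) \geq |S_j| - O(\log\log|S_j|)$. Since $|\overline{S_j}| = |S_j| + \log|S_j|$ and $\log|S_j| = o(|S_j|)$, it follows that $|\LZ(\overline{S_j})|/|\overline{S_j}| \to 1$ along odd $j$, so for every $\alpha > 0$ and all sufficiently large odd $j$ we have $|\LZ(\overline{S_j})| > (1-\alpha)|\overline{S_j}|$. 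Therefore $|I_{\mathrm{FS}}(\overline{S_j})| - |\LZ(\overline{S_j})| = |\overline{S_j}| - |\LZ(\overline{S_j})| < \alpha |\overline{S_j}|$ for infinitely many $j$, which contradicts the almost-everywhere requirement of Definition \ref{LZ: Def} for $C = I_{\mathrm{FS}}$. As $\alpha$ was arbitrary, $S$ is not LZ-deep.

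The main obstacle is making the Kolmogorov lower bound on $K(\overline{S_j})$ precise while controlling the overhead contributed by the repetitive prefix $\overline{S_{j-1}}$; the saving grace is that $\overline{S_{j-1}}$ is only logarithmically long in $|S_j|$, so all such overheads are $o(|\overline{S_j}|)$ and vanish in the ratio. A secondary point to state carefully is the inequality $|\LZ(x)| \geq K(x) - O(1)$, which rests solely on $\LZ$ being a fixed invertible computable map and not on any finer property of its parsing.
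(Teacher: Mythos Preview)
Your proposal is correct and follows essentially the same route as the paper: both use the sequence $S$ from Theorem~\ref{PD: Thm: fs not PD}, focus on prefixes $\overline{S_j}$ with $j$ odd, take $I_{\mathrm{FS}}$ as the witnessing ILFST, and use the Kolmogorov randomness of $S_j$ together with $|\overline{S_{j-1}}| = \log|S_j|$ to force $|\LZ(\overline{S_j})| \geq (1-\beta)|\overline{S_j}|$. The only cosmetic difference is in how the LZ lower bound is extracted: the paper bounds the conditional output $|\LZ(S_j \mid \overline{S_{j-1}})|$ via a description $d(\overline{S_{j-1}})\cdot 01 \cdot \LZ(S_j\mid\overline{S_{j-1}})$ of $S_j$, whereas you invoke the cleaner global inequality $|\LZ(x)| \geq K(x) - O(1)$ and then pass from $K(\overline{S_j})$ to $K(S_j)$; your packaging is slightly more direct but the content is the same.
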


\begin{proof}
Let $0 < \beta < 1$. Let $S$ be the sequence satisfying Theorem \ref{PD: Thm: fs not PD}. Then FS-depth$(S) \geq \beta$, i.e. $S$ is FS-deep.

Next we show $S$ is not LZ-deep. Recall that $S$ is broken into substrings $S= S_1S_2S_3\ldots$ where $|S_1| = 2$ and for all $j$, $|S_j| = 2^{|S_1 \ldots S_{j-1}|}$, i.e. for $j> 1$, $\log|S_j| = |S_1\ldots S_{j-1}|$. Recall also that for $j$ odd, $S_j$ is a string of maximal plain Kolmogorov complexity in the sense that $K(S_j) \geq |S_j|$. We assume $j$ is always odd in the rest of the theorem. We write $\overline{S_j}$ to denote the prefix $S_1\ldots S_j$.

For any prefix of the form $\overline{S_j}$, as LZ is lossless, $\overline{S_j}$ can be recovered from the string $$d(\overline{S_{j-1}})\cdot01\cdot\LZ(S_j|\overline{S_{j-1}}).$$ Therefore for $j$ odd, \begin{equation}
    |S_j| \leq K(S_j) \leq 2|\overline{S_{j-1}}| + 2 + |\LZ(S_j|\overline{S_{j-1}})| + O(1),\label{LZ: Thm: io Fs not LZ: EQ1}\end{equation}
and so for $j$ large \begin{align}
    |\LZ(S_j|\overline{S_{j-1}} )| &\geq |S_j| - 2|\overline{S_{j-1}}| - O(1) \notag \\   
    & = |S_j| - 2\log(|S_j|) - O(1) > |S_j|(1 - \frac{\beta}{2}). \label{LZ: Thm: io Fs not LZ: EQ2}
    \end{align}
    Therefore for infinitely many prefixes of the form $\overline{S_j}$, we have
    \begin{align}
        |\LZ(\overline{S_j})| &\geq |\LZ(S_j | \overline{S_{j-1}})|  > |S_j|(1 - \frac{\beta}{2}) \tag{by \eqref{LZ: Thm: io Fs not LZ: EQ2}} \\
        & = (|\overline{S_j}| - \log(|S_j|))(1 - \frac{\beta}{2}) \notag \\
        & > |\overline{S_j}|(1 - \frac{\beta}{2})(1 - \frac{\beta}{2}) > |\overline{S_j}|(1 - \beta). \label{LZ: Thm: io Fs not LZ: EQ3}
    \end{align}
    Hence for infinitely many prefixes of $S$ we have that \begin{align}
    |I_{\mathrm{FS}}(\overline{S_j})| - |\LZ(\overline{S_j})| < |\overline{S_j}| - |\overline{S_j}|(1 - \beta) = |\overline{S_j}|\beta. \label{LZ: Thm: io Fs not LZ: EQ4}
\end{align}
As $\beta$ was arbitrary, it follows that $S$ is not LZ-deep.

\end{proof}

The following result follows from the previous theorem and shows that it is possible to build i.o. LZ-deep sequences which are not a.e. LZ-deep. The sequence from Theorem \ref{PD: Thm: fs not PD}, while finite-state deep and not LZ-deep, it is in fact infinitely often LZ-deep. This is because the LZ algorithm is able to compress the sections of the sequence composed of repetitions of random strings.

\begin{lemma}
There exists a sequence $S$ which is i.o. LZ-deep and FS-deep but not a.e. LZ-deep.
\label{LZ: Lemma: io LZ not ae LZ}
\end{lemma}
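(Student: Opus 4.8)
The plan is to reuse the very sequence $S$ constructed in the proof of Theorem \ref{PD: Thm: fs not PD}. Theorem \ref{LZ: Thm: io Fs not LZ} already shows that this $S$ is FS-deep and that it is \emph{not} a.e. LZ-deep, the latter because the odd blocks $S_j$ with $K(S_j) \ge |S_j|$ force $|\LZ(\overline{S_j})| \ge (1-\beta)|\overline{S_j}|$ infinitely often. Hence the only thing left to establish is that $S$ \emph{is} infinitely often LZ-deep, and for this I would examine the \emph{even} blocks, which have the form $S_j = r_k^{\,|S_j|/|r_k|}$, i.e.\ long repetitions of a single $3k$-FS-random string $r_k$.

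First I would bound LZ from above on these blocks. Fixing a block devoted to some $k$, I apply Lemma \ref{LZ: Lemma: Repeat lemma} with $y = r_k$ and $x = \overline{S_{j-1}}$: since $|r_k|$ is constant for fixed $k$ while $|S_j| \to \infty$ as $j \to \infty$, the bound gives $|\LZ(S_j \mid \overline{S_{j-1}})| = O(\sqrt{|S_j|}\log|S_j|) = o(|S_j|)$. Together with $|\LZ(\overline{S_{j-1}})| \le |\overline{S_{j-1}}| + O(1) = \log|S_j| + O(1)$, this yields $|\LZ(\overline{S_j})| = o(|S_j|) = o(|\overline{S_j}|)$ for even $j$.

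Second I would bound every fixed ILFST compressor $C$ from below on the same prefixes. Given $C$, Theorem \ref{FS: Thm: inverse fst} supplies an inverse $C^{-1} \in \ILFST$ and a constant $c$ with $C^{-1}(C(x)) \sqsupseteq x \upharpoonright (|x|-c)$; thus $C(x)$ is an input to the size-$|C^{-1}|$ transducer $C^{-1}$ producing all but the last $c$ bits of $x$, and after appending those $c$ bits at an $O(1)$ cost in both program length and machine size one obtains $|C(\overline{S_j})| \ge \FScomp{k_1}(\overline{S_j}) - O(1)$ for a constant $k_1$ depending only on $C$. Now restrict attention to even blocks devoted to a fixed $k \ge \max(k_1,4)$. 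Monotonicity of $\FScomp{\cdot}$ in the size parameter gives $\FScomp{k_1}(\overline{S_j}) \ge \FScomp{k}(\overline{S_j})$, and the lower bound already derived inside the proof of Theorem \ref{PD: Thm: fs not PD} (via Lemma \ref{FS: Lemma: GEQ Lemma} and the FS-randomness \eqref{PD: Thm: fs not PD: EQ1} of $r_k$) gives $\FScomp{k}(\overline{S_j}) \ge |S_j|(1 - 4k/|r_k|)$. Since $|r_k| = |I_{2^k}|$ dwarfs $4k$, the factor $1 - 4k/|r_k|$ exceeds $1/2$, and with $|S_j| = |\overline{S_j}| - \log|S_j| = (1-o(1))|\overline{S_j}|$ we get $|C(\overline{S_j})| \ge \tfrac12|\overline{S_j}|$ for all large even $j$ devoted to $k$.

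Finally I would combine the two estimates. For every ILFST $C$ there are infinitely many even blocks devoted to a fixed $k \ge \max(k_1,4)$ (each value of $k$ is devoted to infinitely many intervals), and on the corresponding prefixes $n = |\overline{S_j}|$ we obtain $|C(S \upharpoonright n)| - |\LZ(S \upharpoonright n)| \ge \tfrac12|\overline{S_j}| - o(|\overline{S_j}|) \ge \tfrac14 n$ for $j$ large, so that $\alpha = 1/4$ witnesses i.o.\ LZ-depth. The main obstacle, and the one place that needs care, is making $\alpha$ genuinely \emph{uniform} in $C$: the threshold past which the inequality holds depends on $C$ through $k_1$ and the additive $O(1)$ terms, but the size of the gap is bounded below by the fixed constant $\tfrac14$ precisely because $1 - 4k/|r_k| > 1/2$ for every admissible $k$ and because the LZ output is $o(|\overline{S_j}|)$, which is exactly what allows a single $\alpha$ to serve all compressors simultaneously.
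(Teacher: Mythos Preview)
Your overall strategy matches the paper's: take the sequence $S$ from Theorem~\ref{PD: Thm: fs not PD}, invoke Theorem~\ref{LZ: Thm: io Fs not LZ} for the FS-deep and not-a.e.-LZ-deep parts, and establish i.o.\ LZ-depth on the even blocks by pairing the repetition bound of Lemma~\ref{LZ: Lemma: Repeat lemma} for LZ with an ILFST incompressibility bound coming from the $3k$-FS-randomness of $r_k$. Where you diverge is in how you obtain the ILFST lower bound. The paper works locally: for each state $q_i$ of $C$ it forms the variant $C_i$ with start state $q_i$, applies both directions of Lemma~\ref{FS: Lemma: ILFST Machine less/greater} to show $|C_i(r_{l'})|\ge(1-\varepsilon)|r_{l'}|$ for a suitably large $l'$, and then sums over the $|S_j|/|r_{l'}|$ copies of $r_{l'}$ inside $S_j$. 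Your route is global: pass from $|C(\overline{S_j})|$ to $\FScomp{k_1}(\overline{S_j})$ via the inverse transducer, and then quote the bound $\FScomp{k}(\overline{S_j})\ge|S_j|(1-4k/|r_k|)$ already derived in Theorem~\ref{PD: Thm: fs not PD}. This reuse of the existing lower bound is elegant and avoids re-analysing copies of $r_k$.

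There is, however, a genuine technical gap in the line ``after appending those $c$ bits at an $O(1)$ cost in both program length and machine size one obtains $|C(\overline{S_j})|\ge\FScomp{k_1}(\overline{S_j})-O(1)$''. An FST has no end-of-input marker, so you cannot simply run $C^{-1}$ and then emit a fixed suffix; the machine has no way to know when to switch. What Theorem~\ref{FS: Thm: inverse fst} actually gives you is $\FScomp{|C^{-1}|}(x')\le|C(\overline{S_j})|$ for some prefix $x'$ with $|\overline{S_j}|-|x'|\le c$, and bridging from $x'$ to $\overline{S_j}$ within a uniform $O(1)$ is precisely the delimiting problem that forces the $(1+\varepsilon)$ factor in Lemma~\ref{FS: Lemma: ILFST Machine less/greater}(2). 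The clean fix is to invoke that lemma directly (with $M=C$ and $k$ large enough for the identity) to obtain $|C(\overline{S_j})|\ge\tfrac{1}{1+\varepsilon}\FScomp{k_1}(\overline{S_j})-O(1)$; since $1-4k/|r_k|$ is essentially $1$ for every $k\ge 2$, the extra factor is harmless and your uniform $\alpha$ survives (indeed one gets $\alpha$ arbitrarily close to $1$, not just $1/4$). A minor side remark: $|\LZ(\overline{S_{j-1}})|\le|\overline{S_{j-1}}|+O(1)$ is not correct in general---LZ can overshoot by $o(|x|)$, not $O(1)$---but this has no effect on your $o(|\overline{S_j}|)$ conclusion since $|\overline{S_{j-1}}|=\log|S_j|$.
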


\begin{proof}
Let $S$ be the sequence satisfying Theorems \ref{PD: Thm: fs not PD} and \ref{LZ: Thm: io Fs not LZ} which is i.o. FS-deep but not a.e. LZ-deep. All that remains to show is that $S$ is i.o. LZ-deep.

Recall to construct $S$, we split the non-negative integers into intervals $I_1,I_2,\ldots$ such that $|I_1| = 2$ and $|I_j| = 2^{|I_1| + \cdots |I_{j-1}|}$ for all $j > 1.$ Also recall that for all $k \geq 1$, $k$ was devoted to intervals $I_j$ where $j$ had the form $j = 2^k + t(2^{k+1})$, for $t \geq 0.$ Recall also that $S$ was built in stages $S = S_1S_2\ldots$ such that if $k$ was devoted to interval $I_j$ then $S_j = r_k^{|I_j|/|r_k|}$ where $r_k$ was a string of length $|I_{2^k}|$ that was $3k$-finite-state random in the sense that \begin{equation}
    \FScomp{3k}(r_k) \geq |r_k| - 4k. \label{LZ: Lemma: io LZ not ae LZ: EQ1}
\end{equation}

We first examine how well any ILFST compresses prefixes of $S$. Let $C \in \ILFST$ with states $Q_C = \{q_1,\ldots, q_p\}.$ We assume all states of $C$ are reachable from its start state. For all $1 \leq i \leq p$, we let $C_i$ denote the FST with the same states, transitions and outputs as $C$ but with start state $q_i$, i.e. for all $x \in \fbins$, $C_i(x) = \nu_C(q_i,x).$ As $C$ is an ILFST, so is $C_i$. Recall from our encodings of FSTs that therefore $C_i \in \FSTsize{3|C|}$, where $|C|$ is the length of the encoding for $C$.

Next let $d$ be such that $I_{\mathrm{FS}}\in \FSTsize{d}$. As $C_i$ is an ILFST, by Lemma \ref{FS: Lemma: ILFST Machine less/greater} there exists $d'$ such that for all $i$ and $x$ \begin{equation}
    \FScomp{d'}(C_i(x)) \leq \FScomp{d}(x). \label{LZ: Lemma: io LZ not ae LZ: EQ2}
\end{equation}

Let $0< \varepsilon < 1$. By Lemma \ref{FS: Lemma: ILFST Machine less/greater}, there exists an $l$ such that for all $i$ and almost every $x$, \begin{equation}
    \FScomp{l}(x) \leq (1 + \frac{\varepsilon}{3})\FScomp{d'}(C_i(x)) + O(1).
    \label{LZ: Lemma: io LZ not ae LZ: EQ3}
\end{equation}
Of our set of random strings $\{r_k\}_{k \geq 1}$, let $l' \geq l$ be such that $r_{l'}$ satisfies both Equation \eqref{LZ: Lemma: io LZ not ae LZ: EQ3} and $|r_{l'}| - 4l' \geq (1 - \varepsilon/3)|r_{l'}|$. Such an $l'$ must exist as $\{r_k\}_{k \in \N}$ is a set of strings of increasing length.

Hence we have that for all $i$ \begin{align}
    |r_{l'}|(1 - \frac{\varepsilon}{3}) & \leq |r_{l'}| - 4l' \leq \FScomp{3l'}(r_{l'}) \leq \FScomp{l}(r_{l'}) \tag{as $r_{l'}$ is $3l'$-FS random} \\
    & \leq (1 + \frac{\varepsilon}{3})\FScomp{d'}(C_i(r_{l'})) + O(1) \tag{by \eqref{LZ: Lemma: io LZ not ae LZ: EQ3}} \\
    & \leq \FScomp{d'}(C_i(r_{l'})) + \frac{\varepsilon}{3}\FScomp{d}(r_{l'}) + O(1) \tag{by \eqref{LZ: Lemma: io LZ not ae LZ: EQ2}} \\
    & \leq |C_i(r_{l'})| + \frac{\varepsilon}{3}|r_{l'}| + O(1). \label{LZ: Lemma: io LZ not ae LZ: EQ4}
\end{align}
Hence by Equation \eqref{LZ: Lemma: io LZ not ae LZ: EQ4} for all $i$, when $l'$ is chosen large \begin{align}
    |C_i(r_{l'})| \geq |r_{l'}|(1 - \frac{\varepsilon}{3}) - \frac{\varepsilon}{3}|r_{l'}| - O(1) > |r_{l'}|(1 - \varepsilon). \label{LZ: Lemma: io LZ not ae LZ: EQ5}
\end{align}
That is, for all $i$ \begin{align}|C(q_i,r_{l'})| > |r_{l'}|(1 - \varepsilon).\label{LZ: Lemma: io LZ not ae LZ: EQ6}\end{align}

We now calculate a lower bound of compression of $C$ for prefixes of $S$ of the form $\overline{S_j}$ where $j$ is devoted to $r_{l'}$. For almost every such $j$ we have that \begin{align}
    |C(\overline{S_j})| &\geq |C(\overline{S_j})| - |C(\overline{S_{j-1}})| \notag\\
    & > \frac{|S_j|}{|r_{l'}|}(|r_{l'}|(1 - \varepsilon)) \tag{by \eqref{LZ: Lemma: io LZ not ae LZ: EQ6}} \\
    & = |S_j|(1 - \varepsilon) = (|\overline{S_j}| - |\overline{S_{j-1}}|)(1 - \varepsilon) \\
    & = (|\overline{S_j}| - \log(|S_j|))(1 - \varepsilon) \notag \\
    & > |\overline{S_j}|(1 - \beta) \label{LZ: Lemma: io LZ not ae LZ: EQ7}
\end{align}
where $\varepsilon < \beta < 1.$

We next examine how well LZ compresses any prefix $\overline{S_j}$ of $S$ where $j$ is devoted to $l'$. Note after reading $\overline{S_{j-1}}$, LZ's dictionary will contain at most $|\overline{S_{j-1}}|$ entries, i.e. it has size bounded above by $\log(|S_j|).$ Setting $a_{l'} = |r_{l'}| + 1$, by Lemma \ref{LZ: Lemma: Repeat lemma} we have that \begin{align}
    |\LZ(S_j|\overline{S_{j-1}})| &\leq \sqrt{2a_{l'}|S_j|}\log(|\overline{S_{j-1}}| + \sqrt{2a_{l'}|S_j|} \notag\\
    & = \sqrt{2a_{l'}|S_j|}\log(\log|S_j| + \sqrt{2a_{l'}|S_j|}) \notag\\
    & = O(\sqrt{|S_j|}\log|S_j|).\label{LZ: Lemma: io LZ not ae LZ: EQ8}
\end{align}

Hence we have that for $j$ large devoted to $l'$ that \begin{align}
    |\LZ(\overline{S_j})| &= |\LZ(\overline{S_{j-1}})| +  |\LZ(S_j|\overline{S_{j-1}})| \notag \\
    & \leq |\overline{S_{j-1}}| + o(|\overline{S_{j-1}}|) +  O(\sqrt{|S_j|}\log|S_j|) \tag{by \eqref{LZ: Lemma: io LZ not ae LZ: EQ8}}\\
    & = \log(|S_j|) + o(\log(|S_j|)) + O(\sqrt{|S_j|}\log|S_j|) \notag \\
    & = O(\sqrt{|S_j|}\log|S_j|). \label{LZ: Lemma: io LZ not ae LZ: EQ9}
\end{align}

Hence, as infinitely many intervals are devoted to $l'$, for infinitely many prefixes we have
\begin{align}
    |C(\overline{S_j})| - |\LZ(\overline{S_j})| & \geq |\overline{S_j}|(1 - \beta) - O(\sqrt{|S_j|}\log|S_j|) \tag{by \eqref{LZ: Lemma: io LZ not ae LZ: EQ7} and \eqref{LZ: Lemma: io LZ not ae LZ: EQ9}} \\
    & > |\overline{S_j}|( 1 - \alpha) \label{LZ: Lemma: io LZ not ae LZ: EQ10}
\end{align}
where $\beta < \alpha < 1.$ Hence as $C$ was an arbitrary ILFST, it holds that $S$ is i.o. LZ-deep.

\end{proof}

\subsection{Separation from Pushdown Depth}

The following subsection demonstrates the difference between LZ-depth with pushdown depth. We first demonstrate the existence of a sequence that has high LZ-depth but is not pushdown deep. We also show that we can build sequences that have PD-depth level of roughly $1/2$ which have a small LZ-depth level.

We first show demonstrate the existence of a LZ-deep sequence which is not pushdown deep. It relies on the construction of a sequence $S$ from Theorem $1$ of \cite{DBLP:journals/mst/MayordomoMP11} such that $R_{\LZ}(S) = 0$ but $\rho_{\textrm{PD}}(S) = 1.$ $S$ is constructed such that in contains repeated non-consecutive random substrings which Lempel-Ziv can exploit to compress. However, the random strings are long enough that a pushdown compressor cannot compress the sequence.

\begin{theorem}
\label{LZ: Thm: LZ not PD}
\end{theorem}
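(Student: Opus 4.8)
The plan is to exhibit the sequence $S$ from Theorem~1 of \cite{DBLP:journals/mst/MayordomoMP11}, which by construction satisfies $R_{\LZ}(S)=0$ and $\rho_{\textrm{PD}}(S)=1$: it is built from long random blocks that repeat non-consecutively, so LZ exploits the repetitions to compress it down to vanishing density, while the blocks are long enough that no (bounded) pushdown compressor can beat the identity. I would then verify the two halves of the claim separately. The easy half is that $S$ is not PD-deep: since $\rho_{\textrm{PD}}(S)=1$, part~1 of Theorem~\ref{PD: Thm: Easy Hard} applies verbatim and yields that $S$ is not PD-deep.

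For the LZ-deep half, I would first record the structural observation that every $\ILFST$ can be viewed as an $\ILPDC$ that simply never touches its stack (keeping $z_0$ throughout), and that information losslessness is preserved under this embedding. Consequently $\ILFST \subseteq \ILPDC$ as families of compressors, so
\[
\rho_{\ILFST}(S) = \inf_{C \in \ILFST}\rho_C(S) \geq \inf_{C \in \ILPDC}\rho_C(S) = \rho_{\textrm{PD}}(S) = 1.
\]
Since the identity transducer witnesses $\rho_{\ILFST}(S)\leq 1$, this forces $\rho_{\ILFST}(S)=1$ (equivalently $\dimFS(S)=1$). Hence for every $C \in \ILFST$ and every $\delta>0$, the $\liminf$ characterisation of $\rho_C$ gives $|C(S\upharpoonright n)| \geq (1-\delta)n$ for almost every $n$.

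Next I would use $R_{\LZ}(S)=0$: because the $\limsup$ of $|\LZ(S\upharpoonright n)|/n$ is $0$ and the ratio is nonnegative, the limit is $0$, so for every $\varepsilon>0$ and almost every $n$ we have $|\LZ(S\upharpoonright n)| < \varepsilon n$. Fixing, say, $\delta=\varepsilon=\tfrac14$ and setting $\alpha = 1-\delta-\varepsilon = \tfrac12 > 0$, I combine the two bounds: for every $C \in \ILFST$ and almost every $n$,
\[
|C(S\upharpoonright n)| - |\LZ(S\upharpoonright n)| \geq (1-\delta)n - \varepsilon n = \alpha n,
\]
which is precisely Definition~\ref{LZ: Def} with a single depth level $\alpha$ valid for all $C$, so $S$ is LZ-deep.

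This argument mirrors the structure of Theorems~\ref{LZ: Thm: Easy Hard} and \ref{LZ: Thm: LZ normal}, and no genuinely hard step remains once the sequence $S$ is imported. The only point that needs care---and the one I would state explicitly---is the inclusion $\ILFST \subseteq \ILPDC$, which converts the hypothesis $\rho_{\textrm{PD}}(S)=1$ into the finite-state lower bound $\rho_{\ILFST}(S)=1$; all the real combinatorial effort is hidden in the cited construction of $S$, whose two compression properties I would take as given.
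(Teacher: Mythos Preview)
Your proposal is correct and follows essentially the same route as the paper: both import the sequence $S$ from Theorem~1 of \cite{DBLP:journals/mst/MayordomoMP11}, invoke Theorem~\ref{PD: Thm: Easy Hard} for the non-PD-deep half, and derive $\rho_{\ILFST}(S)=1$ from $\rho_{\textrm{PD}}(S)=1$ to combine with $R_{\LZ}(S)=0$ for the LZ-deep half. The only cosmetic differences are that you spell out the inclusion $\ILFST\subseteq\ILPDC$ explicitly (the paper just asserts $\dimFS(S)=1$), and you fix $\delta=\varepsilon=\tfrac14$ whereas the paper keeps $\alpha$ arbitrary to show LZ-depth$(S)\geq 1-\alpha$ for every $\alpha>0$; your argument generalises immediately to recover this.
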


\begin{proof}
Let $S$ be the sequence from Theorem 1 of \cite{DBLP:journals/mst/MayordomoMP11} which satisfies $R_{\LZ}(S) = 0$ but $\rho_{\textrm{PD}}(S) = 1.$ We omit a description of the construction of $S$ in this proof. As $\rho_{\textrm{PD}}(S) = 1$, $S$ is not PD-deep by Theorem \ref{PD: Thm: Easy Hard}.

Let $0 < \alpha < 1$. As $\rho_{\textrm{PD}}(S) = 1,$ it also holds that $\dimFS(S) = 1.$ Hence for all $\widehat{C}\in \ILFST$ it holds that 
\begin{equation}
    |\widehat{C}(S \upharpoonright n)| > (1 - \frac{\alpha}{2})n. \label{LZ: Thm: LZ not PD: EQ1}
\end{equation}
As $R_{\LZ}(S) = 0,$ it holds that for almost all $n$ that \begin{equation}
    |\LZ(S \upharpoonright n)| \leq \frac{\alpha}{2}n. \label{LZ: Thm: LZ not PD: EQ3}
\end{equation}
Hence for almost all $n$ we have that \begin{equation}
    |\widehat{C}(S \upharpoonright n)| - |\LZ(S \upharpoonright n)| \geq (1 - \frac{\alpha}{2})n - \frac{\alpha}{2}n = (1 - \alpha)n. \label{LZ: Thm: LZ not PD: EQ2}
\end{equation}
As $\widehat{C}$ was arbitrary, it holds that $S$ is LZ-deep.

\end{proof}

Next we demonstrate the existence of a sequence that roughly has a PD-depth level of $1/2$ while having a very small LZ-depth level. This sequence was first presented in Theorem $5$ of \cite{DBLP:journals/mst/MayordomoMP11} and was built by enumerating strings in such a way so that a pushdown compressor can use its stack to compress, but an ILUPDC cannot use their stacks due to their unary nature. LZ cannot compress the sequence either as it is similar to a listing of all strings by order of length (i.e. all strings of length $1$ followed by all strings of length $2$ and so on). LZ performs poorly on such sequences.

\begin{theorem}
For all $0 < \beta < 1/2$, there exists a sequence $S$ such that PD-depth$(S) \geq (1/2-\beta)$ but LZ-depth$(S) < \beta$.
\label{LZ: Thm: PD not LZ}
\end{theorem}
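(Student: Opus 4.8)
The plan is to take for $S$ the sequence constructed in Theorem 5 of \cite{DBLP:journals/mst/MayordomoMP11}, which lists a reversible encoding of all binary strings in order of increasing length. A single such $S$ will serve for every $\beta$, and I will extract (or re-establish) three facts about it: (i) $\rho_{\mathrm{UPD}}(S) = 1$, i.e.\ no $\ILUPDC$ compresses $S$, since a unary stack carries only a counter and so cannot store the binary content of a block in order to exploit its reversal; (ii) $R_{\mathrm{PD}}(S) \leq 1/2$, witnessed by an explicit $\ILPDC$ $C'$ that pushes the first half $w$ of each block onto its (binary) stack and, on reading the reversed half, pops while matching and emits roughly one bit per $v$ input bits; and (iii) $R_{\mathrm{LZ}}(S) = 1$, i.e.\ $\LZ$ fails to compress $S$ because a listing of all strings of each length is maximally diverse. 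Facts (i) and (ii) are the content of \cite{DBLP:journals/mst/MayordomoMP11}; fact (iii) is the new ingredient, argued by a phrase-counting estimate on $\LZ$ applied to the enumeration blocks.

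For the lower bound PD-depth$(S) \geq 1/2 - \beta$, I would argue exactly as in the closing lines of Theorem \ref{PD: Thm: pd not fs}. Fix any $C \in \ILUPDC$ and a small $\varepsilon$ with $2\varepsilon \leq \beta$. By (i), $\rho_C(S) \geq \rho_{\mathrm{UPD}}(S) = 1$, so $|C(S \upharpoonright n)| \geq (1 - \varepsilon)n$ for almost every $n$. Let $C' \in \ILPDC$ be the compressor from (ii); since $\limsup_n |C'(S \upharpoonright n)|/n \leq 1/2$, we have $|C'(S \upharpoonright n)| \leq (1/2 + \varepsilon)n$ for almost every $n$. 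Subtracting gives
\begin{equation*}
|C(S \upharpoonright n)| - |C'(S \upharpoonright n)| \geq (1 - \varepsilon)n - \left(\tfrac{1}{2} + \varepsilon\right)n = \left(\tfrac{1}{2} - 2\varepsilon\right)n \geq \left(\tfrac{1}{2} - \beta\right)n
\end{equation*}
for almost every $n$. As the same $C'$ works for every $C$, this is exactly PD-depth$(S) \geq 1/2 - \beta$ in the sense of Definition \ref{PD: Def: Depth Level}.

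For the upper bound LZ-depth$(S) < \beta$, it suffices by Definition \ref{LZ: Def: Depth level} to exhibit a single $\ILFST$ and infinitely many $n$ on which the gap against $\LZ$ is below $\beta n$. I would use the identity transducer $I_{\mathrm{FS}}$, for which $|I_{\mathrm{FS}}(S \upharpoonright n)| = n$. By (iii), $R_{\mathrm{LZ}}(S) = 1$, so $|\LZ(S \upharpoonright n)| > (1 - \beta)n$ for infinitely many $n$; for each such $n$,
\begin{equation*}
|I_{\mathrm{FS}}(S \upharpoonright n)| - |\LZ(S \upharpoonright n)| < n - (1 - \beta)n = \beta n,
\end{equation*}
which gives LZ-depth$(S) < \beta$.

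The main obstacle is fact (iii): whereas Lemma \ref{LZ: Lemma: Repeat lemma} supplies an \emph{upper} bound on $\LZ$'s output for repeated strings, here I need a matching \emph{lower} bound showing $\LZ$ cannot beat ratio $1 - o(1)$ on the enumeration. The cleanest route is to bound below the number of distinct phrases $\LZ$ produces inside the densely packed length-$m$ blocks: since these blocks realise all strings of a given length, the dictionary cannot run far ahead of the input, so the number of phrases in a prefix of length $n$ is $\Theta(n/\log n)$ and the encoded length is $n(1 - o(1))$, yielding $R_{\mathrm{LZ}}(S) = 1$. A secondary, more bookkeeping-heavy point is to confirm that the $\ILUPDC$ incompressibility in (i) is quantitative and uniform over all $C$ enough to give the almost-everywhere bound used above; this is where I would lean most directly on the construction and analysis of \cite{DBLP:journals/mst/MayordomoMP11}.
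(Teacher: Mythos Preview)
Your overall architecture matches the paper's: use the sequence $S$ of Theorem~5 in \cite{DBLP:journals/mst/MayordomoMP11}, combine an $\ILUPDC$ incompressibility lower bound with the $R_{\mathrm{PD}}(S)\le 1/2$ upper bound to get PD-depth, and use LZ incompressibility plus the identity ILFST to bound LZ-depth. However, you have the division of labour backwards, and this creates a real gap.

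The LZ lower bound you call the ``main obstacle'' and propose to prove via phrase-counting is in fact already established in \cite{DBLP:journals/mst/MayordomoMP11}: for parameters $k,v$ chosen large enough one has $\rho_{\LZ}(S)\ge 1-\varepsilon$ (not merely $R_{\LZ}(S)=1$), and the paper simply cites this. Conversely, the $\ILUPDC$ incompressibility you attribute to \cite{DBLP:journals/mst/MayordomoMP11} is \emph{not} there---$\ILUPDC$s are introduced in the present paper---and this is where the genuine work lies. The paper carries out a careful counting argument (using Remark~\ref{PD: Rmk: height discussion} to bound the relevant stack heights by $(c+1)n$) to show that most strings of $T_n$ contribute at least $\frac{k-2}{k}n$ output bits under any $\ILUPDC$, and from this derives $\rho_{\mathrm{UPD}}(S)\ge \frac{k-3}{k}$.

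This last point also invalidates your claim that ``a single such $S$ will serve for every $\beta$'' with $\rho_{\mathrm{UPD}}(S)=1$. The sequence $S$ is built from strings avoiding the factor $1^k$, so its blocks have entropy rate strictly below $1$; even an ILFST can exploit this, and one should not expect $\rho_{\mathrm{UPD}}(S)=1$. The paper only obtains $\rho_{\mathrm{UPD}}(S)\ge (k-3)/k$, which is why the construction parameter $k$ must be chosen with $3/k<\beta$: different $\beta$ require different $S$. Your concluding PD-depth inequality then reads $|C(S\upharpoonright n)|-|C'(S\upharpoonright n)|\ge (\tfrac{1}{2}-\tfrac{3}{k}-2\varepsilon')n$ rather than $(\tfrac12-2\varepsilon)n$, and both $k$ and $\varepsilon$ have to be tuned to $\beta$.
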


\begin{proof}
Let $0 < \beta < 1/2$.
We first give a brief description of the sequence from Theorem 5 of \cite{DBLP:journals/mst/MayordomoMP11} that satisfies the result.
Let $\varepsilon$ be such that $0<\varepsilon < \beta$, and let $k$ and $v$ be non-negative integers to be determined later.

For any $n \in \N$, let $T_n$ denote the set of strings of length $n$ that do not contain the substring $1^j$ in $x$ for all $j \geq k$. As $T_n$ contains the set of strings whose every $k^{\text{th}}$ bit is $0$, it follows that $|T_n| \geq 2^{(\frac{k-1}{k})n}$. Note that for every $x \in T_n$, there exists $y \in T_{n-1}$ and $b \in \bin$ such that $x = yb$. Hence 
\begin{equation}
|T_n| < 2|T_{n-1}|. \label{LZ: Thm: PD not LZ: EQ1}
\end{equation}

Let $A_n = \{a_{1_n},\ldots,a_{u_n}\}$ be the set of palindromes in $T_n$. As fixing the first $\lceil \frac{n}{2} \rceil$ bits determines a palindrome, $|A_n| \leq 2^{\lceil \frac{n}{2} \rceil}$. The remaining strings in $T_n - A_n$ are split into $v+1$ pairs of sets $X_{n,i} = \{x_{n,i,1},\ldots, x_{n,i,t_n^i}\}$ and $Y_{n,i} = \{y_{n,i,1},\ldots, y_{n,i,t_n^i}\}$ where $t_n^i = \lfloor\frac{|T_n - A_n|}{2v} \rfloor$ if $i \neq v+1$ and $$t_n^{v+1} = \frac{1}{2}(|T_n-A_n| - 2\sum_{i = 1}^v|X_{n,i}|), $$ $(x_{n,i,j})^{-1} = y_{n,i,j}$ for every $1 \leq j \leq t_n^i$ and $1 \leq i \leq v+1$ both $x_{n,i,1}$ and $y_{n,i,t_n}$ start with $0$ (excluding the case where both $X_{n,v+1}$ and $Y_{n,v+1}$ are the empty sets). Note that for convenience we write $X_i,Y_i$ for $X_{n,i},Y_{n,i}$ respectively.

$S$ is constructed in stages. Let $f(k) = 2k$ and $f(n+1) = f(n) + v + 2$. Note that $n < f(n) < n^2$ for large $n$. For $n \leq k-1$, $S_n$ is a concatenation of all strings of length $n$, i.e. $S_n = 0^n\cdot0^{n-1}1\cdots 1^{n-1}0 \cdot 1^n.$ For $n \geq k$,$$S_n = a_{1_n}\ldots a_{u_n}1^{f(n)}z_{n,1}z_{n,2}\ldots z_{n,v}z_{n,v+1}$$
    where $$z_{n,i} = x_{n,i,1}x_{n,i,2}\ldots x_{n,i,t^i_n-1}x_{n,i,t^i_n}1^{f(n) + i}y_{n,i,t^i_n}y_{n,i,t^i_n-1}\ldots y_{n,i,2}y_{n,i,1},$$ with the possibility that $z_{n,v+1} = 1^{f(n) + v + 1}$ only.
That is, $S_n$ is a concatenation of all strings in $A_n$ followed by a flag of $f(n)$ ones, followed by a concatenation of all strings in the $X_i$ zones and $Y_i$ zones separated by flags of increasing length such that each $Y_i$ zone is the $X_i$ zone written in reverse. Let $$S = S_1S_2\ldots S_{k-1}1^k1^{k+1}\ldots 1^{2k-1}S_kS_{k+1}\ldots$$ i.e. the concatenation of all $S_j$ zones with some extra flags between $S_{k-1}$ and $S_k$.

Then from \cite{DBLP:journals/mst/MayordomoMP11}, for $\varepsilon$ small, choosing $k$ and $v$ appropriately large we have that \begin{align}
    \rho_{LZ}(S) \geq 1 - \varepsilon, \textrm{ and } R_{\textrm{PD}}(S) \leq 1/2.
    \label{LZ: Thm: PD not LZ: EQ2}
    \end{align}

 Next we consider how any $C \in \ILUPDC$ performs on $S$. Let $\overline{S_j}$ denote the prefix $$S_1 \ldots S_{k-1}1^k\ldots 1^{2k -1}S_k\ldots S_j$$ of $S$ for all $j \geq k$. 

Suppose $C$ is reading zone $S_n$ and can perform at most $c$ $\lambda$-transitions in a row. We examine the proportion of strings in $T_n$ that give a large contribution to the output. For simplicity, we write $C(p,x,s) = (q,v)$ to represent that when $C$ is in state $p$ with stack contents $s$ ( i.e. $s = 0^az_0$ for some $a$), on input $x$, $C$ outputs $v$ and ends in state $q$, i.e. $C(p,x,s) =(\widehat{\delta_Q}(p,x,s),\widehat{\nu}(p,x,s))$. 

 For each $x \in T_n$, let
$$h_x = \min \{ |v| : \exists p,q \in Q, \exists s \in \{0^az_0 : a \geq 0\}, C(p,x,s) = (q,v)\}$$ be the minimum possible addition of the output that could result from reading $x$. We restrict ourselves to reachable combinations of pairs of states and choices for $s$. Let $$B_n = \{ x \in T_n : h_x \geq \frac{(k-2)n}{k} \} $$ be the incompressible strings that give a large contribution to the output.

Next consider $x' \in T_n - B_n$. There is a computation of $x'$ that results in $C$ outputting at most $\frac{(k-2)n}{k}$ bits. As $C$ is lossless, $x'$ can be associated uniquely to a start state $p_{x'}$, stack contents $s_{x'}$, end state $q_{x'}$ and output $v_{x'}$ where $|v_{x'}| < \frac{(k-2)n}{k}$ such that $C(p_{x'},x',s_{x'}) = (q_{x'},v_{x'}).$ Recall from our previous discussion that on reading $x$ of length $n$, if $C$ has a stack of height bigger than $(c+1)n$, the stack will have no impact on the compression of $x$. That is, if $k\neq k'$ but $k,k' \geq (c+1)n$, then $C(p_{x'},x',0^kz_0) = C(p_{x'},x',0^{k'}z_0).$

Hence we can build a map $g$ such that $g(x') = (p_{x'},s'_{x'},v_{x'},q_{x'})$ where $0 \leq s'_{x'} < (c+1)|x|$. As this map $g$ is injective, we can bound $|T_n - B_n|$ as follows.
\begin{align}
   | T_n - B_n| &\leq |Q|^2 \cdot (c+1)n \cdot 2^{< \frac{(k-2)n}{k}} \notag\\
    & < |Q|^2 \cdot (c+1)n \cdot 2^{\frac{(k-2)n}{k}}. \label{LZ: Thm: PD not LZ: EQ4}
\end{align}    

For $0 < \delta < 1/6$ whose value is determined later, as $|T_n| \geq 2^{\frac{(k-1)n}{k}}$, we have that for $n$ large
\begin{align}
    |B_n| &= |T_n| - |T_n - B_n| \notag\\
    & > |T_n| - |Q|^2 \cdot (c+1)n \cdot 2^{\frac{(k-2)n}{k}} \tag{by \eqref{LZ: Thm: PD not LZ: EQ4}} \\
    & > |T_n|(1 - \delta). \label{LZ: Thm: PD not LZ: EQ5} 
\end{align}

Similarly, as the flags are only comprised of $O(n^2)$ bits in each $S_n$ zone, we have for $n$ large that  
\begin{align}
|T_n|n > |S_n|(1 - \delta). \label{LZ: Thm: PD not LZ: EQ6}
\end{align}
Then for $n$ large (say for all $n\geq i$ such that \eqref{LZ: Thm: PD not LZ: EQ5} and \eqref{LZ: Thm: PD not LZ: EQ6} hold),

\begin{align}
    |C(\overline{S_n})|
    & > \frac{k-2}{k}\sum_{j=i}^n \sum_{x \in B_j}j = \frac{k-2}{k}\sum_{j=i}^m j |B_j| \notag\\
    & > \frac{k-2}{k}(1 - \delta)\sum_{j=i}^n j|T_j|  \tag{by \eqref{LZ: Thm: PD not LZ: EQ5}}\\
    & > \frac{k-2}{k}(1 - 2\delta)\sum_{j=i}^n |S_j| \tag{by \eqref{LZ: Thm: PD not LZ: EQ6}}\\
    & = \frac{k-2}{k}(1 - 2\delta)(|\overline{S_n}| - |\overline{S_{i-1}}|) \notag \\
    & > \frac{k-2}{k}(1 - 3\delta)|\overline{S_n}|. \label{LZ: Thm: PD not LZ: EQ7}
\end{align}

The compression ratio of $S$ on $C \in \ILUPDC$ is least on prefixes of the form $\overline{S_n}\cdot x_{n+1}$, where potentially $x_{n+1}$ is a concatenation of all the strings in $T_{n+1}-B_{n+1}$, i.e. the compressible strings of $T_{n+1}$. Let $x_{n+1}$ be a such a potential prefix of $S_{n+1}$. Then if $F_{n+1} = \sum_{i=0}^v (f(n+1) + i)$, the length of the flags in $S_{n+1}$, we can bound the length of $|x_{n+1}|$ as follows. For $n$ large we have

\begin{align}
    |x_{n+1}| &< |T_{n+1}-B_{n+1}|(n+1) + F_{n+1} \notag\\
    & < (|T_{n+1}| - |B_{n+1}|)(n+1) + (v+2)n^2 \notag \\
    &< \delta|T_{n+1}|(n+1) + \delta|T_{n}|(n+1) \tag{by \eqref{LZ: Thm: PD not LZ: EQ5}} \\
    & < 2\delta|T_n|(n+1) + \delta|T_{n}|(n+1)  \tag{by \eqref{LZ: Thm: PD not LZ: EQ1}}\\
    & = 3\delta|T_n|(n+1) \notag \\
    & <3\delta|S_1\ldots S_n|, \label{eq:8}
\end{align}
Hence for $n$ large
\begin{align}
    |C(\overline{S_n}x_{n+1})| & \geq (\frac{k-2}{k})(1 - 3\delta)(|\overline{S_n}x_{n+1}| - |x_{n+1}|) \notag\\
    & > (\frac{k-2}{k})(1 - 3\delta)(|\overline{S_n}x_{n+1}| - 3\delta |\overline{S_n}|) \tag{by \eqref{LZ: Thm: PD not LZ: EQ7}} \\
    & > (\frac{k-2}{k})(1 - 6\delta)|\overline{S_n}x_{n+1}|\notag  \\
    & > \frac{k-3}{k}|\overline{S_n}x_{n+1}| \label{LZ: Thm: PD not LZ: EQ8}
\end{align}
when $\delta$ is chosen sufficiently small, i.e. $\delta < \frac{1}{6(k-2)}$. Hence \begin{equation}
    \rho_{\textrm{UPD}}(S) \geq \frac{k-3}{k}. \label{LZ: Thm: PD not LZ: EQ9} \end{equation} Thus for all  $0< \varepsilon' < \frac{k-3}{k}$, for almost every $n$, $$|C(S \upharpoonright n)| \geq (\frac{k-3}{k} - \varepsilon')n.$$ Next let $\hat{C} \in \ILPDC$ be such that $\hat{C}$ achieves $R_{\textrm{PD}}(S) \leq 1/2.$ Then for all $\varepsilon' > 0$ and almost every $n$ it holds that $$|C(S \upharpoonright n)| \leq (\frac{1}{2} + \varepsilon').$$ Hence, for almost every $n$ and every $C \in \ILUPDC$ we have
\begin{align*}
    |C(S \upharpoonright n)| - |\hat{C}(S \upharpoonright n)| & \geq (\frac{k-3}{k} - \varepsilon')n - (\frac{1}{2} + \varepsilon')n \\
    & = (\frac{1}{2}- \frac{3}{k})n. \label{LZ: Thm: PD not LZ: EQ9}
\end{align*}
That is, PD-depth$(S) > \frac{1}{2}-\frac{3}{k}$. Hence, choosing $k$ large appropriately at the start such that $\frac{3}{k} < \beta$ we have that PD-depth$(S) > \frac{1}{2} - \beta.$

Next we examine LZ-depth. Recall $\rho_{LZ}(S) \geq 1 - \varepsilon$. Thus for $c$ such that $\varepsilon + c < \beta$ (recall $\varepsilon < \beta)$, for almost every $n$ it holds that \begin{equation}|LZ(S \upharpoonright n)| > (1 - \varepsilon-c)n. \label{LZ: Thm: PD not LZ: EQ10}\end{equation} Hence as $I_{\mathrm{FS}} \in \ILFST$, we have that for almost every $n$ \begin{equation}|I_{\mathrm{FS}}(S \upharpoonright n)| - |LZ(S \upharpoonright n)| < n - (1 - \varepsilon-c)n = (\varepsilon + c)n < \beta n. \label{LZ: Thm: PD not LZ: EQ11}\end{equation} Hence we have that LZ-depth$(S) < \beta$.

In conclusion, for all $0 < \beta < \frac{1}{2}$, choosing $\varepsilon$ such that $\varepsilon<\beta$ and $k$ such that $\frac{3}{k} < \beta$, a sequence $S$ can be built which satisfies the requirements of the theorem.

\end{proof}

  \bibliographystyle{plainurl}
\bibliography{biblio.bib}

\begin{thebibliography}{10}

\bibitem{DBLP:conf/birthday/AlbertMM17}
Pilar Albert, Elvira Mayordomo, and Philippe Moser.
\newblock Bounded pushdown dimension vs lempel ziv information density.
\newblock In {\em Computability and Complexity - Essays Dedicated to Rodney G.
  Downey on the Occasion of His 60th Birthday}, volume 10010 of {\em Lecture
  Notes in Computer Science}, pages 95--114. Springer, 2017.
\newblock \href {https://doi.org/10.1007/978-3-319-50062-1\_7}
  {\path{doi:10.1007/978-3-319-50062-1\_7}}.

\bibitem{b.antunes.depth.journal}
Luis Antunes, Lance Fortnow, Dieter van Melkebeek, and N.~V. Vinodchandran.
\newblock Computational depth: Concept and applications.
\newblock {\em Theor. Comput. Sci.}, 354(3):391--404, 2006.
\newblock \href {https://doi.org/10.1016/j.tcs.2005.11.033}
  {\path{doi:10.1016/j.tcs.2005.11.033}}.

\bibitem{DBLP:journals/siamcomp/AthreyaHLM07}
Krishna~B. Athreya, John~M. Hitchcock, Jack~H. Lutz, and Elvira Mayordomo.
\newblock Effective strong dimension in algorithmic information and
  computational complexity.
\newblock {\em {SIAM} J. Comput.}, 37(3):671--705, 2007.
\newblock \href {https://doi.org/10.1137/S0097539703446912}
  {\path{doi:10.1137/S0097539703446912}}.

\bibitem{DBLP:journals/jcss/BecherCH15}
Ver{\'{o}}nica Becher, Olivier Carton, and Pablo~Ariel Heiber.
\newblock Normality and automata.
\newblock {\em J. Comput. Syst. Sci.}, 81(8):1592--1613, 2015.
\newblock \href {https://doi.org/10.1016/j.jcss.2015.04.007}
  {\path{doi:10.1016/j.jcss.2015.04.007}}.

\bibitem{DBLP:journals/tcs/BecherH13}
Ver{\'{o}}nica Becher and Pablo~Ariel Heiber.
\newblock Normal numbers and finite automata.
\newblock {\em Theor. Comput. Sci.}, 477:109--116, 2013.
\newblock \href {https://doi.org/10.1016/j.tcs.2013.01.019}
  {\path{doi:10.1016/j.tcs.2013.01.019}}.

\bibitem{b:bennett88}
C.~H. Bennett.
\newblock Logical depth and physical complexity.
\newblock {\em The Universal Turing Machine, A Half-Century Survey}, pages
  227--257, 1988.

\bibitem{calude2011finite}
Cristian~S. Calude, Kai Salomaa, and Tania Roblot.
\newblock Finite state complexity.
\newblock {\em Theor. Comput. Sci.}, 412(41):5668--5677, 2011.
\newblock \href {https://doi.org/10.1016/j.tcs.2011.06.021}
  {\path{doi:10.1016/j.tcs.2011.06.021}}.

\bibitem{calude2016finite}
Cristian~S. Calude, Ludwig Staiger, and Frank Stephan.
\newblock Finite state incompressible infinite sequences.
\newblock {\em Inf. Comput.}, 247:23--36, 2016.
\newblock \href {https://doi.org/10.1016/j.ic.2015.11.003}
  {\path{doi:10.1016/j.ic.2015.11.003}}.

\bibitem{DBLP:journals/tcs/DaiLLM04}
Jack~Jie Dai, James~I. Lathrop, Jack~H. Lutz, and Elvira Mayordomo.
\newblock Finite-state dimension.
\newblock {\em Theor. Comput. Sci.}, 310(1-3):1--33, 2004.
\newblock \href {https://doi.org/10.1016/S0304-3975(03)00244-5}
  {\path{doi:10.1016/S0304-3975(03)00244-5}}.

\bibitem{DotyMoserLossy}
David Doty and Philippe Moser.
\newblock Finite-state dimension and lossy decompressors.
\newblock {\em CoRR}, 2006.
\newblock \href {http://arxiv.org/abs/cs/0609096} {\path{arXiv:cs/0609096}}.

\bibitem{DotyM07}
David Doty and Philippe Moser.
\newblock Feasible depth.
\newblock In {\em Computation and Logic in the Real World, Third Conference on
  Computability in Europe, CiE 2007, Siena, Italy, June 18-23, 2007,
  Proceedings}, volume 4497 of {\em Lecture Notes in Computer Science}, pages
  228--237. Springer, 2007.
\newblock \href {https://doi.org/10.1007/978-3-540-73001-9\_24}
  {\path{doi:10.1007/978-3-540-73001-9\_24}}.

\bibitem{DBLP:journals/tcs/DotyN07}
David Doty and Jared Nichols.
\newblock Pushdown dimension.
\newblock {\em Theor. Comput. Sci.}, 381(1-3):105--123, 2007.
\newblock \href {https://doi.org/10.1016/j.tcs.2007.04.005}
  {\path{doi:10.1016/j.tcs.2007.04.005}}.

\bibitem{DBLP:journals/tcs/DowneyMN17}
Rod Downey, Michael McInerney, and Keng~Meng Ng.
\newblock Lowness and logical depth.
\newblock {\em Theor. Comput. Sci.}, 702:23--33, 2017.
\newblock \href {https://doi.org/10.1016/j.tcs.2017.08.010}
  {\path{doi:10.1016/j.tcs.2017.08.010}}.

\bibitem{downey:book}
Rodney~G. Downey and Denis~R. Hirschfeldt.
\newblock {\em Algorithmic Randomness and Complexity}.
\newblock Springer, 2010.

\bibitem{borelNormal}
M.~\'{E}mile Borel.
\newblock Les probabilités dénombrables et leurs applications arithmétiques.
\newblock {\em Rendiconti del Circolo Matematico di Palermo}, 27(1):247--271,
  1909.
\newblock \href {https://doi.org/10.1007/BF03019651}
  {\path{doi:10.1007/BF03019651}}.

\bibitem{Huff59a}
D.~{Huffman}.
\newblock Canonical forms for information-lossless finite-state logical
  machines.
\newblock {\em IRE Transactions on Information Theory}, 5(5):41--59, 1959.
\newblock \href {https://doi.org/10.1109/TIT.1959.1057537}
  {\path{doi:10.1109/TIT.1959.1057537}}.

\bibitem{DBLP:conf/sofsem/JordonM20}
Liam Jordon and Philippe Moser.
\newblock On the difference between finite-state and pushdown depth.
\newblock In {\em 46th International Conference on Current Trends in Theory and
  Practice of Informatics, {SOFSEM} 2020, Limassol, Cyprus, January 20-24,
  2020, Proceedings}, volume 12011 of {\em Lecture Notes in Computer Science},
  pages 187--198. Springer, 2020.
\newblock \href {https://doi.org/10.1007/978-3-030-38919-2\_16}
  {\path{doi:10.1007/978-3-030-38919-2\_16}}.

\bibitem{DBLP:journals/tcs/JuedesLL94}
David~W. Juedes, James~I. Lathrop, and Jack~H. Lutz.
\newblock Computational depth and reducibility.
\newblock {\em Theor. Comput. Sci.}, 132(2):37--70, 1994.
\newblock \href {https://doi.org/10.1016/0304-3975(94)00014-X}
  {\path{doi:10.1016/0304-3975(94)00014-X}}.

\bibitem{Koha78}
Z.~Kohavi.
\newblock Switching and finite automata theory (second edition).
\newblock {\em McGraw-Hill}, 1978.

\bibitem{DBLP:journals/iandc/LathropL99}
James~I. Lathrop and Jack~H. Lutz.
\newblock Recursive computational depth.
\newblock {\em Inf. Comput.}, 153(1):139--172, 1999.
\newblock \href {https://doi.org/10.1006/inco.1999.2794}
  {\path{doi:10.1006/inco.1999.2794}}.

\bibitem{maliciousLZ}
James~I. Lathrop and Martin Strauss.
\newblock A universal upper bound on the performance of the lempel-ziv
  algorithm on maliciously-constructed data.
\newblock In {\em Compression and Complexity of {SEQUENCES} 1997, Positano,
  Amalfitan Coast, Salerno, Italy, June 11-13, 1997, Proceedings}, pages
  123--135. {IEEE}, 1997.
\newblock \href {https://doi.org/10.1109/SEQUEN.1997.666909}
  {\path{doi:10.1109/SEQUEN.1997.666909}}.

\bibitem{DBLP:journals/mst/MayordomoMP11}
Elvira Mayordomo, Philippe Moser, and Sylvain Perifel.
\newblock Polylog space compression, pushdown compression, and lempel-ziv are
  incomparable.
\newblock {\em Theory Comput. Syst.}, 48(4):731--766, 2011.
\newblock \href {https://doi.org/10.1007/s00224-010-9267-6}
  {\path{doi:10.1007/s00224-010-9267-6}}.

\bibitem{DBLP:journals/tcs/Moser13}
Philippe Moser.
\newblock On the polynomial depth of various sets of random strings.
\newblock {\em Theor. Comput. Sci.}, 477:96--108, 2013.
\newblock \href {https://doi.org/10.1016/j.tcs.2012.10.045}
  {\path{doi:10.1016/j.tcs.2012.10.045}}.

\bibitem{DBLP:journals/iandc/Moser20}
Philippe Moser.
\newblock Polylog depth, highness and lowness for {E}.
\newblock {\em Inf. Comput.}, 271:104483, 2020.
\newblock \href {https://doi.org/10.1016/j.ic.2019.104483}
  {\path{doi:10.1016/j.ic.2019.104483}}.

\bibitem{DBLP:journals/dmtcs/MoserS17}
Philippe Moser and Frank Stephan.
\newblock Depth, highness and {DNR} degrees.
\newblock {\em Discret. Math. Theor. Comput. Sci.}, 19(4), 2017.
\newblock \href {https://doi.org/10.23638/DMTCS-19-4-2}
  {\path{doi:10.23638/DMTCS-19-4-2}}.

\bibitem{nies:book}
Andr{\'e} Nies.
\newblock {\em Computability and Randomness}.
\newblock Oxford University Press, 2009.

\bibitem{SheinwaldLZ}
Dafna {Sheinwald}.
\newblock On the {Z}iv-{L}empel proof and related topics.
\newblock {\em Proceedings of the IEEE}, 82(6):866--871, 1994.
\newblock \href {https://doi.org/10.1109/5.286190}
  {\path{doi:10.1109/5.286190}}.

\bibitem{DBLP:journals/iandc/SheinwaldLZ95}
Dafna Sheinwald, Abraham Lempel, and Jacob Ziv.
\newblock On encoding and decoding with two-way head machines.
\newblock {\em Inf. Comput.}, 116(1):128--133, 1995.
\newblock \href {https://doi.org/10.1006/inco.1995.1009}
  {\path{doi:10.1006/inco.1995.1009}}.

\bibitem{DBLP:LZ78}
Jacob Ziv and Abraham Lempel.
\newblock Compression of individual sequences via variable-rate coding.
\newblock {\em {IEEE} Trans. Inf. Theory}, 24(5):530--536, 1978.
\newblock \href {https://doi.org/10.1109/TIT.1978.1055934}
  {\path{doi:10.1109/TIT.1978.1055934}}.

\end{thebibliography}
\end{document}